\documentclass[sigplan,screen,nonacm]{acmart}

\AtBeginDocument{%
  }

\usepackage{xspace}
\usepackage{mathtools}
\usepackage{tikz}
\usepackage{subcaption}
\usepackage{booktabs}
\usepackage{multirow}
\usepackage{array}
\usepackage{tabularx}
\usepackage[inline,shortlabels]{enumitem}


\usepackage[ruled,vlined,linesnumbered]{algorithm2e}

\newcommand*\blackcircled[1]{\tikz[baseline=(char.base)]{
            \node[shape=circle,fill,inner sep=1pt] (char) {\textcolor{white}{#1}};}}
\def\ie{{i.e.}}
\def\eg{{e.g.}}

\newcommand{\name}{FlashDiff\xspace}
\newcommand{\R}{\mathbb{R}}
\newcommand{\E}{\mathbb{E}}
\newcommand{\Var}{\mathrm{Var}}
\newcommand{\norm}[1]{\lVert #1 \rVert}
\newcommand{\abs}[1]{\lvert #1 \rvert}
\newcommand{\ftheta}{f_\theta}
\newcommand{\xhat}{\hat{x}}
\newcommand{\xstar}{x^{*}}
\DeclareMathOperator*{\argmax}{arg\,max}
\DeclareMathOperator*{\argmin}{arg\,min}

\newenvironment{denseitemize}{\begin{itemize}}{\end{itemize}}

\makeatletter
\@ifundefined{theorem}{\newtheorem{theorem}{Theorem}[section]}{}
\@ifundefined{proposition}{\newtheorem{proposition}[theorem]{Proposition}}{}
\@ifundefined{corollary}{\newtheorem{corollary}[theorem]{Corollary}}{}
\@ifundefined{lemma}{\newtheorem{lemma}[theorem]{Lemma}}{}
\theoremstyle{definition}
\@ifundefined{definition}{\newtheorem{definition}[theorem]{Definition}}{}
\@ifundefined{assumption}{\newtheorem{assumption}[theorem]{Assumption}}{}
\theoremstyle{remark}
\@ifundefined{remark}{\newtheorem{remark}[theorem]{Remark}}{}
\makeatother

\begin{document}

\title{FlashDiff: Efficient Regional Execution and Scheduling for Diffusion Model Serving}

\author{Yaqi Qiao$\footnotemark[1]^{1}$, Ping He\footnotemark[1]$^{2,\diamondsuit}$, 
Songrun Xie$^{3}$, Ayush Barik$^{1}$, Chensong Zhang$^{4}$, Zhengzhong Tu$^{5}$, Fan Lai$^{1}$
\\
\normalsize\textit{$^1$University of Illinois Urbana-Champaign $\quad$ $^2$Vanderbilt University $\quad$ $^3$HKUST $\quad$ $^4$NVIDIA $\quad$ $^5$Texas A\&M University}
}




\begin{abstract}

Diffusion models have become the central backbone for modern image, video, and audio generation, but their efficient service remains a challenge. Unlike autoregressive decoding, diffusion inference repeatedly updates high-dimensional spatial or temporal latents over many denoising steps. This all-region execution pattern makes generation latency high and limits serving throughput. Existing multi-GPU parallelization methods (e.g., sequence parallelism) can reduce per-step computation, but often introduce substantial activation exchange overhead, causing communication to offset or even outweigh the benefits of parallel execution.

This paper presents \name, a diffusion serving system that improves inference efficiency through adaptive regional execution and scheduling. \name is based on the observation that diffusion refinement is not uniform across latent regions or denoising steps: different regions often stabilize at different rates, while neighboring steps exhibit strong temporal correlation. \name leverages these properties to selectively execute only regions that require further refinement and to reallocate the resulting compute slack across concurrent serving requests. \name consists of three mechanisms. First, it decomposes the latent representation into coherent execution regions using early-stage attention signals, preserving semantic structure while exposing fine-grained parallelism. Second, it uses a lightweight runtime controller to estimate region activity and bypass low-impact updates when further refinement is unlikely to affect output quality. Third, it applies an affinity-aware online scheduler that co-locates dependent regions, balances residual load across GPUs, and reuses reclaimed compute capacity to improve serving efficiency. 
Across real-world image, video, and audio workloads, \name reduces end-to-end serving latency by 30–97\% and improves throughput by 1.2–2.2×, by eliminating 24–66\% of  computation without degrading  quality.

\end{abstract}


\maketitle


\footnotetext{* Equal contribution. \(\diamondsuit\) Work was done when visiting at UIUC.}

\section{Introduction}
\label{sec:intro}

Diffusion models underpin modern text-to-image (T2I)~\cite{sd15, sdxl}, text-to-video (T2V)~\cite{sora, hunyuanvid}, and text-to-audio (T2A) generation~\cite{stableaudioopen, audioldm, tango}, powering large-scale services such as Adobe Firefly, which produces millions of images per day~\cite{niavana-nsdi24}. Unlike the token-by-token decoding of large language models (LLMs), diffusion models operate over high-dimensional spatial and temporal latents through tens of steps (e.g., often 50), where each step predicts noise and progressively refines the previous latent toward a coherent output~\cite{sd3, diffusion-mlsys-sigmetrics25}. Despite their moderate parameter size, typically a few billion parameters~\cite{wan2025}, diffusion models impose substantial inference cost. Each denoising step updates the entire latent representation (e.g., spatial feature maps in T2I or temporal sequences in T2A), making execution fundamentally compute-bound. Even with recent advances in reducing denoising steps~\cite{dpm-solver, seeds-nips23, sdstep-arxiv23} and simplifying architectures~\cite{snapfusion-nips23}, generating moderate-resolution outputs can take a dozen seconds on H100 GPUs~\cite{distrifusion-cvpr24}. Unfortunately, with existing multi-GPU parallelism strategies, such as tensor and sequence parallelism~\cite{alpaserve-osdi23, fang2024xdit, asyndiff-nips24}, the communication payloads (e.g., for exchanging activations) often outweighs the compute savings these strategies aim to provide (\S\ref{subsec:motivation}).
This has led to severe latency and throughput bottlenecks in production~\cite{niavana-nsdi24}.

This paper introduces \name, a diffusion serving engine that breaks the rigid, monolithic execution of denoising into semantic-aware, parallelizable subtasks. Although latent elements are globally coupled at each step, \name exploits two intrinsic properties of diffusion: (1) \emph{Spatial heterogeneity}: different latent regions refine at different rates (e.g., smooth backgrounds versus detailed objects), allowing low-complexity regions to skip denoising steps without perceptual degradation; and (2) \emph{Temporal affinity}: latent states across adjacent denoising steps are  correlated~\cite{distrifusion-cvpr24,asyndiff-nips24}, enabling reuse of prior states when updates are skipped. These properties motivate a new execution principle, \emph{semantic patch parallelism}, which decouples \emph{where} computation is needed from \emph{when} it must be performed. The key idea is to partition the latent into semantically coherent patches---contiguous spatial regions (for images and video) or temporal segments (for audio)---and selectively refine each patch across denoising steps. This enables fine-grained, complexity-aware execution that fundamentally differs from, and is not achievable with existing model-parallel approaches.

Realizing semantic patch parallelism requires overcoming three fundamental efficiency-quality challenges (\S\ref{subsec:motivation}):  
(1) \emph{Partitioning tax}: object-level partitioning over-fragments the latent space, increasing synchronization overhead, while uniform partitioning ignores semantic boundaries, leading to incoherent patches and visual artifacts; 
(2) \emph{Error propagation}: skipping steps reduces computation but risks local errors propagating to neighboring regions, especially as refinement needs evolve with the changing alignment between latent features and the input prompt across steps; and  
(3) \emph{Execution bubbles}: semantic partitioning may produce patches with heterogeneous sizes, creating stragglers and complicating concurrent request execution under dynamic serving loads.

\name makes semantic patch parallelism practical by jointly addressing where to compute, when to compute, and how to schedule computation at scale, through three architectural components: the \emph{Patch Partitioner}, \emph{Patch Gate}, and \emph{Patch Scheduler} (\S\ref{sec:overview}). 
The partitioner identifies \emph{where} computation is needed by estimating the refinement demand of latent regions using cross-attention signals from early denoising steps. Semantically coherent latent regions with similar refinement velocities are grouped into patches to preserve structure while enabling parallel execution. Crucially, it dynamically adapts patch granularity based on runtime communication-to-computation ratio,  serving load, and the distribution of region complexity (\S\ref{subsec:partitioner}).

The gate provides runtime control over \emph{when} each patch should be refined. It captures spatiotemporal changes in self-attention to estimate both intra-patch activity and cross-patch influence. When a patch's dynamics diminish, the current step is skipped; when dependencies re-emerge, the patch is reactivated. This design exploits temporal affinity while preserving global coherence (\S\ref{subsec:gate}). The scheduler then translates this harvested semantic slack into system-level gains. It introduces lightweight, step-level patch switching to minimize overhead for accommodating high-need patches and to redistribute freed compute across requests to improve the global efficiency-quality frontier, such as maximizing global generation quality under low serving loads. To mitigate communication overhead and stragglers, it applies affinity-aware packing, co-locating dependent patches and balancing residual load across workers (\S\ref{subsec:scheduler}).

\name generalizes across model architectures and modalities, supporting image, video, and audio generation. We implement \name atop NVIDIA TensorRT~\cite{trt-llm}, ensuring seamless integration with existing diffusion serving stacks (\S\ref{sec:implementation}). Comparing to state-of-the-art engines including xDiT~\cite{fang2024xdit} and DistriFusion~\cite{distrifusion-cvpr24}, across real-world T2I, T2V, and T2A workloads (e.g., SD3~\cite{sd3}, FLUX~\cite{flux2024}, WAN2~\cite{wan2025}), \name reduces end-to-end latency by 30--97\% and improves throughput by 1.2--2.2$\times$, by eliminating 24--66\% of computation without compromising generation quality (\S\ref{sec:eval}).

Overall, this paper makes the following contributions:
\vspace{-4pt}

\begin{denseitemize}
\item We propose semantic patch parallelism, a novel execution paradigm that treats the latent grid as a collection of heterogeneous, semantically-governed compute tasks, decoupling \emph{where} and \emph{when} refinement occurs to reduce generation execution (\S\ref{sec:background}--\S\ref{sec:overview}).
\item We develop new mechanisms to perform efficiency- and quality-aware patch partitioning, patch-level execution gating, and affinity-aware scheduling (\S\ref{sec:design}).
\item We show \name's substantial efficiency gains on real-world T2I, T2V, and T2A workloads  (\S\ref{sec:implementation}--\S\ref{sec:eval}).
\end{denseitemize}

\section{Background and Motivation}
\label{sec:background}

\begin{figure}[t]
  \centering
  \includegraphics[width=0.99\linewidth]{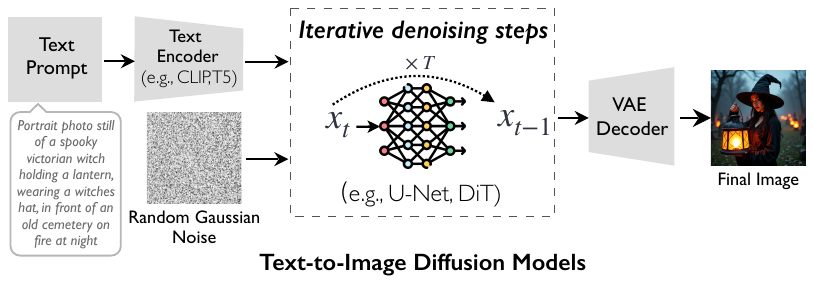}
  \vspace{-.25cm}
  \caption{Diffusion models generate images, videos, and audio through many iterative steps. Each step refines all latent elements, making inference compute-intensive. 
}
  \label{fig:dm-arc}
\vspace{-.3cm}
\end{figure}

\subsection{Diffusion Model Serving}
\label{subsec:diff-serving}

Diffusion models generate high-fidelity outputs via a reverse diffusion process, which gradually denoises an initial Gaussian latent tensor $\mathbf{x}_T$ into a coherent latent representation $\hat{\mathbf{x}}_0$ over many denoising steps (often 20--100~\cite{distrifusion-cvpr24, modm-asplos26, niavana-nsdi24}). 
Figure~\ref{fig:dm-arc} illustrates this process for text-to-image (T2I) generation; T2V and T2A generation follow a similar iterative workflow yet differ in their encoders and decoders. 
Modern diffusion models are increasingly based on the Diffusion Transformer (DiT) architecture due to its superior performance and compatibility with different modalities~\cite{dit}. DiTs employ two complementary forms of attention: (1) \emph{Cross-attention}, which conditions each latent token (spatial/temporal latent position) on the prompt text-token, and (2) \emph{Self-attention}, which models interactions among latent tokens to capture long-range structure and global coherence.

Practical deployments must satisfy two pressing needs: low latency for interactive user experiences, and high throughput to sustain large request volumes. Adobe’s Firefly integrates T2I generation into tools such as Photoshop and Express, where preview results must be produced in seconds to preserve creative workflow fluidity~\cite{adobe2025fireflymisc, niavana-nsdi24}. 
Generative advertising platforms like Google Performance Max require rendering in seconds to avoid measurable revenue losses~\cite{naresh2023performance-max}. 
Suno reports over 10 million T2A generation   users and  highlights a similar need for interactive generation~\cite{suno2024blog}.

\subsection{Motivations for Semantic Patch Parallelism}
\label{subsec:motivation}

Facing stringent latency and throughput requirements, scaling model execution across GPUs has historically delivered substantial success for LLMs. Existing multi-GPU techniques, such as tensor and sequence parallelism~\cite{alpaserve-osdi23, Megatron-lm}, often benefit from higher compute-to-memory arithmetic intensity by partitioning model weights or contexts to accommodate ever-larger models and longer sequences. However, we identify a fundamental mismatch between these paradigms and the dynamic semantic nature of diffusion workloads.

\paragraph{Existing parallelism falters due to the diffusion communication wall.}
Unlike LLMs, diffusion models typically contain only a few billion parameters, yet each denoising step is extremely compute-intensive. As shown in Figure~\ref{fig:gpu_utilization_bars}, generating a small  512$\times$512 image with a state-of-the-art model such as Flux~\cite{flux2024} drives an H100 GPU to 98.9\% utilization. This per-step saturation directly translates into high end-to-end latency and severely caps serving throughput.

Although multi-GPU execution could in principle mitigate this compute bottleneck, diffusion models possess a uniquely unfavorable activation-to-weight ratio: each denoising step requires exchanging large intermediate activations and 1D–3D latent feature maps. Worse, these transfers scale with resolution and GPU count. As shown in Figure~\ref{fig:model-parallelism}, even with diffusion-optimized sequence parallelism (xDiT~\cite{fang2024xdit}), communication still dominates per-step latency at 8 GPUs (e.g., 76\% in the SD3 model). Consequently, parallel efficiency collapses and even yields negative returns.

\begin{figure}[t]
  \centering
  \begin{minipage}[t]{0.48\linewidth}
    \centering
    \includegraphics[width=\linewidth]{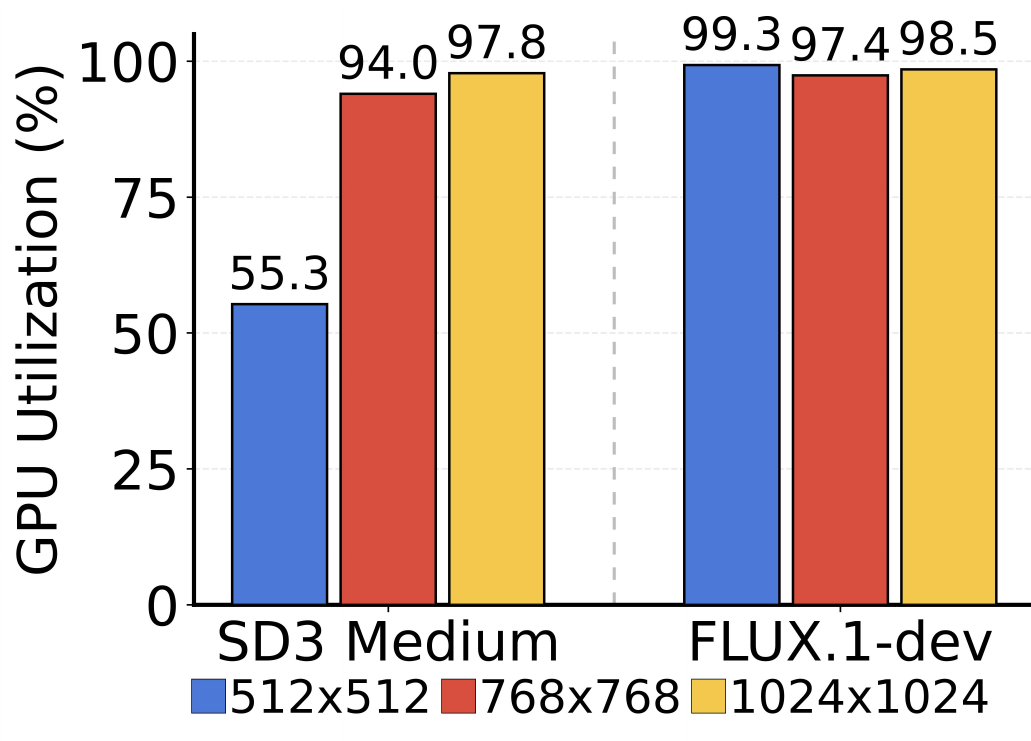}
    \caption{Diffusion is compute-bound, saturating GPUs at a small latent size. }
    \label{fig:gpu_utilization_bars}
  \end{minipage}
  \hfill
  \begin{minipage}[t]{0.48\linewidth}
    \centering
    \includegraphics[width=\linewidth]{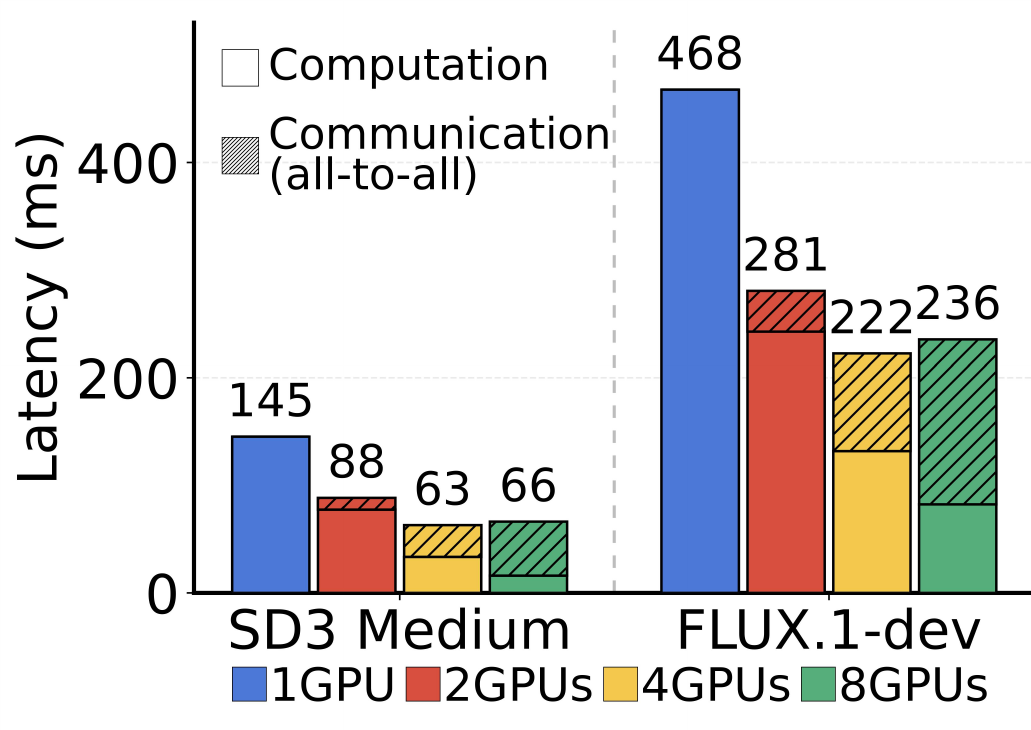}
    \caption{Communication overheads often overshadow multi-GPU gains. }
    \label{fig:model-parallelism}
  \end{minipage}
\vspace{-.5cm}
\end{figure}

\paragraph{Semantic heterogeneity within monolithic diffusion.}
Diffusion models operate over \emph{spatial or temporal latent fields} (e.g., image regions, video segments, or audio windows) whose local refinement difficulty varies widely. This heterogeneity is intrinsic to the underlying content. For example, in the prompt ``a white long-haired cat sitting before a black wall,’’ the detailed fur and contours of the cat require more iterative correction than the nearly uniform background. To quantify this generality, we study T2I generation using the FLUX model~\cite{flux2024}, T2V using the WAN-2.1 model~\cite{wan2025}, and T2A using the StableAudioOpen model~\cite{stableaudioopen}, each evaluated on thousands of real user prompts (e.g., DiffusionDB for T2I~\cite{wangDiffusionDBLargescalePrompt2022}) with a standard 50-step denoising schedule. We perform post-hoc analysis to determine how many denoising steps can be skipped for each region without degrading output quality (detailed experiment setups in Section~\ref{eval:setup}). As shown in Figure~\ref{fig:patch_complexity_magnitude}, regions exhibit sharply different refinement rates: approximately 23\% converge within 30 steps, while others require nearly the full denoising steps.

However, existing serving engines view denoising as a \emph{monolithic} iterative process, mandating equal execution for every latent region regardless of its refinement needs. This rigid execution model fails to exploit semantic heterogeneity, where latent regions refine at heterogeneous rates.

\paragraph{Systemic challenges in harvesting semantic heterogeneity.}
Translating this semantic heterogeneity into systems efficiency introduces a fundamental efficiency-quality tension: selective refinement of latent regions could reduce the amount of execution needed per request, but skipping refining regions or denoising steps can degrade generation quality. This raises system-architectural challenges:
\begin{denseitemize}
\item \emph{Runtime dynamics}: Refinement complexity is not static; it depends on the evolving non-linear alignment between the text prompt and the latent state. A system must identify skippable regions at runtime with minimal overhead, in a few milliseconds.

\item \emph{Compute-communication tradeoffs}: Partitioning the latent into semantic patches complicates data dependencies. If patches are too small or semantically fragmented, the synchronization tax to maintain global coherence can eclipse the compute savings (Figure~\ref{fig:model-parallelism}).

\item \emph{Execution bubbles}: Going beyond single requests, practical serving must handle high request concurrency. As shown in Figure~\ref{fig:patch-area}, semantic regions vary significantly in size and complexity. Region-level partitioning can lead to uneven patch sizes, thus imbalanced loads that complicate scheduling, making it difficult to sustain both high throughput and low latency.

\end{denseitemize}

These challenges necessitate a new execution paradigm that treats latent regions as first-class, schedulable compute units rather than monolithic tensors.

\begin{figure}[t]
  \centering
  \begin{minipage}[t]{0.48\linewidth}
    \centering
    \includegraphics[width=\linewidth]{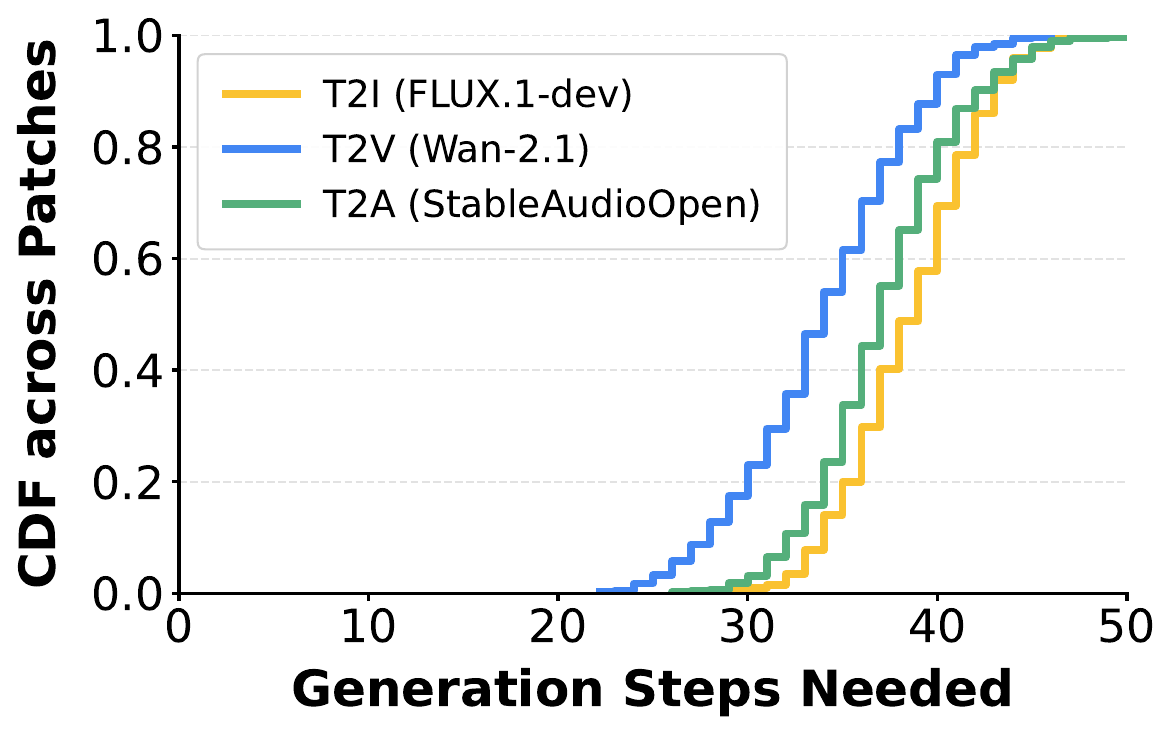}
    \caption{Latent regions exhibit heterogeneous refinement needs. }
    \label{fig:patch_complexity_magnitude}
  \end{minipage}
  \hfill
  \begin{minipage}[t]{0.48\linewidth}
    \centering
    \includegraphics[width=\linewidth]{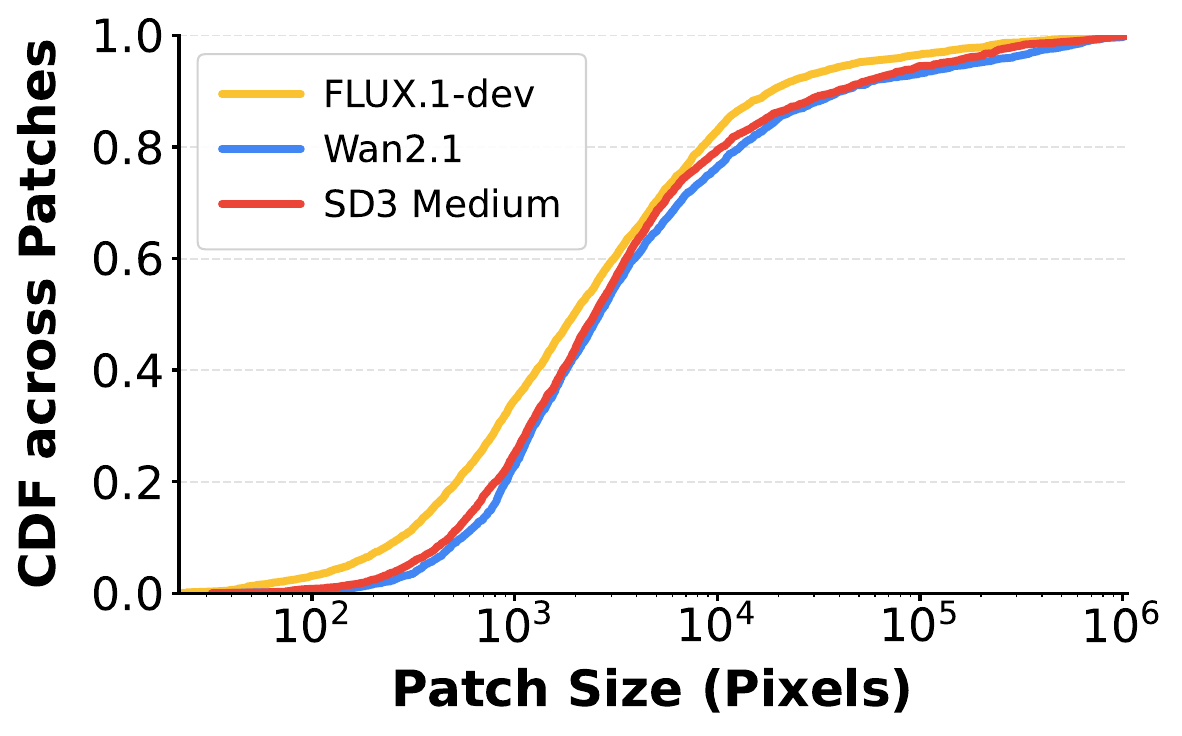}
    \caption{Region sizes vary widely, producing imbalanced patch workloads. }
    \label{fig:patch-area}
  \end{minipage}
\vspace{-.4cm}
\end{figure}

\section{\name Overview}
\label{sec:overview}

We introduce \name, a serving engine that transforms the rigid, monolithic diffusion process into adaptive, semantic-aware execution, while supporting T2I, T2V, and T2A generation. For a given request (e.g., from the upstream scheduler), it orchestrates execution across workers to jointly optimize generation latency and system throughput. 

The core of \name is the abstraction of semantic patch parallelism that partitions the latent grid into \emph{patches}---semantically coherent ``compute units'' that represent distinct spatial or temporal regions. This abstraction allows the system to treat generative refinement as a scheduling problem, decoupling the refinement need from resource allocation.

\begin{figure}[t]
  \centering
  \includegraphics[width=\linewidth]{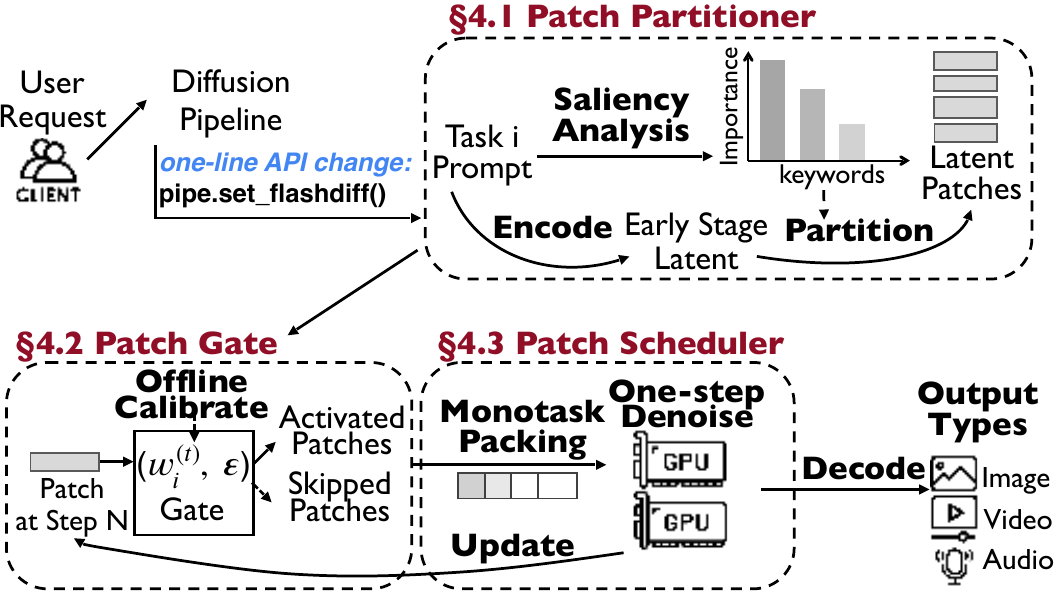}
  \caption{\name overview with three key components.
  }
  \label{fig:sys-overview}
  \vspace{-.3cm}
\end{figure}

\paragraph{Workflow.}
As shown in Figure~\ref{fig:sys-overview}, \name operates in an online serving setting where requests arrive dynamically, integrating seamlessly with existing diffusion-model serving stacks through only a few-line API change. 
Upon receiving a request from the upstream scheduler~\cite{niavana-nsdi24, tetriserve-arxiv25}, \blackcircled{1} \emph{Patch partitioning}: 
\name performs a brief warm-up phase identical to standard diffusion for the first few denoising steps. During this phase, it extracts early semantic signals (e.g., cross-attention maps) and invokes the \emph{Patch Partitioner} to group latent elements into semantically coherent patches. The partitioner balances semantic coherence, refinement complexity, and system load when determining patch granularity.
\blackcircled{2} \emph{Adaptive execution}: 
During subsequent denoising steps, the \emph{Patch Gate} determines whether each patch should be refined. It tracks spatiotemporal changes in attention to estimate refinement importance and selectively skips patches with low activity, while allowing them to resume when dependencies re-emerge. Active patches exchange updated latent representations at each denoising step, while skipped patches reuse cached states.
\blackcircled{3} \emph{Patch scheduling}:
The \emph{Patch Scheduler} assigns active patches to GPUs and reclaims resources freed by skipped patches. It adaptively redistributes computation across requests to load balance and improve overall system efficiency. 

\section{\name Design}
\label{sec:design}

We next describe how \name performs quality- and load-aware partitioning to generate semantically coherent patches (\S\ref{subsec:partitioner}), selectively skips patch execution over denoising steps (\S\ref{subsec:gate}), and schedules patches across workers to minimize serving latency and maximize throughput (\S\ref{subsec:scheduler}).

\subsection{Patch Partitioner: Enable Semantic Patch Parallelism}
\label{subsec:partitioner}

Unlike unstructured sparsity (e.g., skipping individual pixels), which current GPUs cannot efficiently exploit due to their reliance on contiguous tensor kernels, we target \emph{patch-level granularity} to expose structured sparsity that aligns with modern accelerators and distributed execution primitives.

An effective patch partitioning strategy must balance (i) \emph{semantic coherence} to maintain the structural integrity of the latent grid, (ii) \emph{complexity affinity} to group patches with similar refinement trajectories, allowing for collective execution gating, and (iii) \emph{systems efficiency}, ensuring that the partitioning logic itself does not introduce a prohibitive runtime tax or execution bubbles.

However, meeting these needs reveals multifold challenges. Fine-grained partitioning (numerous micro-patches) maximizes the theoretical parallelism and skipping potential but imposes a substantial cross-patch synchronization tax. Yet, rigid, coarse-grained partitioning (\eg, uniform grids) suffers from semantic entanglement, where a single high-complexity object forces its low-complexity neighbors to remain in the compute-intensive path. 
Furthermore, off-the-shelf segmentation primitives like Segment Anything~\cite{sam-iccv23} are computationally heavy, produce fragmented outputs, rely on high-fidelity features unavailable during noise-dominated early steps, and are often constrained to vision tasks.

\begin{figure}[t]
    \centering
  \includegraphics[width=\linewidth]  {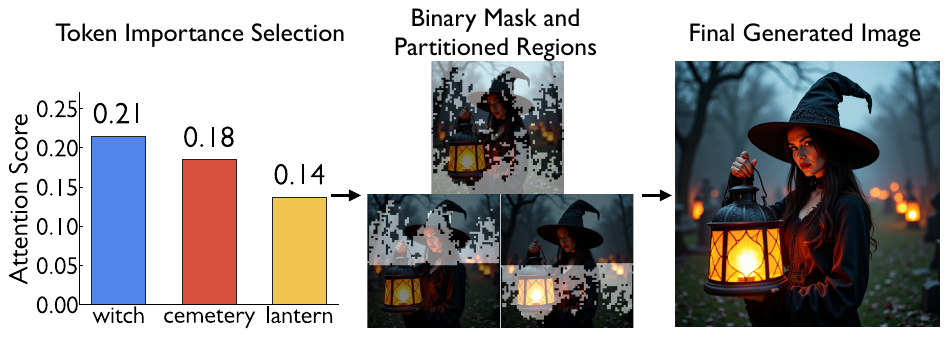}
    \caption{Semantic-aware latent partitioning workflow.}
    \label{fig:workflow_real_image2}
    \vspace{-.3cm}
\end{figure}

\paragraph{Capturing Patch Coherence and Complexity.}
\name leverages the model's cross-attention maps to jointly capture semantic coherence and refinement need, enabling model-aware partitioning too. At each denoising step, diffusion models already compute the attention alignment between latent positions (\eg, spatial for images or temporal in audio) and prompt tokens. Existing ML theory~\cite{liu2024understanding, SelfAttnWhy2} shows that these cross-attention patterns stabilize early, while later steps primarily refine local details (e.g., texture) rather than introduce new regions. Therefore, \name extracts cross-attention maps after a short warmup phase (e.g., the first five steps) to guide partitioning. We focus on salient tokens (\eg, nouns, verbs, and descriptive adjectives) that correspond to concrete semantic entities, while excluding function words (e.g., "the", "on"). These salient tokens can be identified via a part-of-speech tagger~\cite{pos-tag} applied directly to the prompt. This design generalizes well across modalities and remains lightweight (\S\ref{eval:e2e}). 
As shown in Figure~\ref{fig:workflow_real_image2}, for a set of salient token indices $\mathcal{T}$, we aggregate their attention maps to yield a semantic saliency map $S$:
$
S_j = \frac{1}{|\mathcal{T}|} \sum_{t \in \mathcal{T}} A_{j,i},
$
where $A_{j,i}$ is the attention weight from latent location $j$ to token $i$. This map serves as a compute density signal: focus regions with higher $S_j$ represent high-entropy semantic units that require sustained refinement to avoid perceptual degradation. 

\paragraph{Recursive Load-Aware Patch Partitioning.}
Given the saliency map, \name partitions the latent into patches that expose parallelism while maintaining load balance. Standard partitioning methods (e.g., K-means clustering) are ill-suited, as they ignore spatial or temporal locality and produce fragmented, non-contiguous regions. Instead, we introduce a recursive saliency partitioning mechanism. Inspired by Otsu's method~\cite{otsu1975threshold}, which thresholds grayscale images by maximizing the variance between foreground and background pixels, our partitioner bisects the given continuous saliency map $\tilde{S}_j$ by thresholding the saliency map: $M_j = \mathbf{1}[\,\tilde{S}_j \ge \theta\,]$. Here, $\theta$ is automatically chosen to maximize inter-region variance. This produces two contiguous latent token categories: a \emph{focus} region (high saliency) and a \emph{context} region (low saliency).

To further ensure load balance, given the target number of partitions (regions) $R$, we partition the current latent $r$ based on the relative token density. Let $n_f=\sum_j M_j$ and $n_c=J-n_f$ denote the number of focus and context tokens, where $J = H_{\text{lat}} \times W_{\text{lat}}$ is the total number of tokens in the latent space. For example, $H_{\text{lat}}$ and $W_{\text{lat}}$ correspond to the height and width of the image latent grid, respectively. We allocate $r_f$ focus regions by
$
r_f = \operatorname*{argmin}_{r \in \{1,\dots,R-1\}}
\left| r - R \frac{n_f}{n_f + n_c} \right|$ to balance the token complexity. 
We then split the focus and context token index lists into $r_f$ and $r_c=R - r_f$. 
This process continues until $R$ patches are obtained. 
The result is a set of semantically coherent,  contiguous patches with balanced workloads.

Figure~\ref{fig:region_psnr} shows that our partitioning consistently outperforms uniform partitioning in final generation quality across a range of partition counts (normalized to the latter; experiment setups in \S\ref{eval:setup}). 
Furthermore, Figure~\ref{fig:step-to-quality} demonstrates that, under equivalent execution budgets (e.g., \name skips 20 out of 50 steps versus directly running 30 steps without \name), \name achieves higher generation quality. This highlights that selectively allocating computation to semantically important regions is fundamentally more effective than uniformly reducing denoising steps.

\begin{figure}[t]
    \centering
    \begin{minipage}[t]{0.48\linewidth}
        \vspace{0pt}
        \centering
        \includegraphics[width=\linewidth]
        {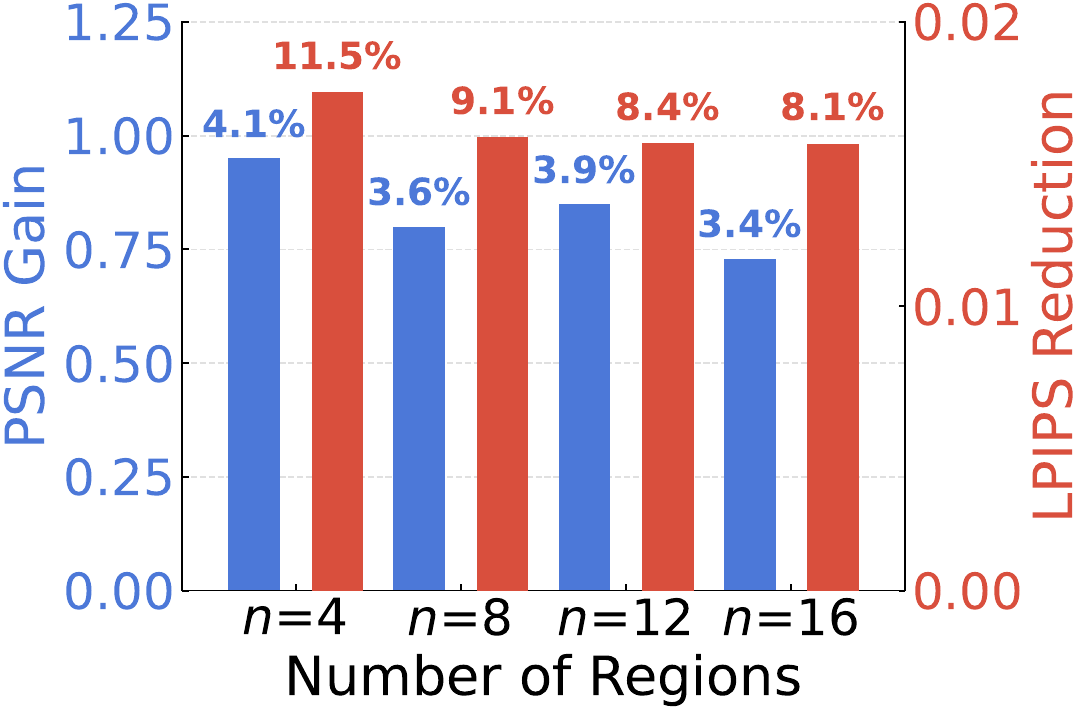}
        \captionof{figure}{Our partitioning outperforms the uniform partitioning method. 
        }
        \label{fig:region_psnr}
    \end{minipage}
    \hfill
    \begin{minipage}[t]{0.485\linewidth}
        \vspace{0pt}
        \centering
        \includegraphics[width=\linewidth]
    {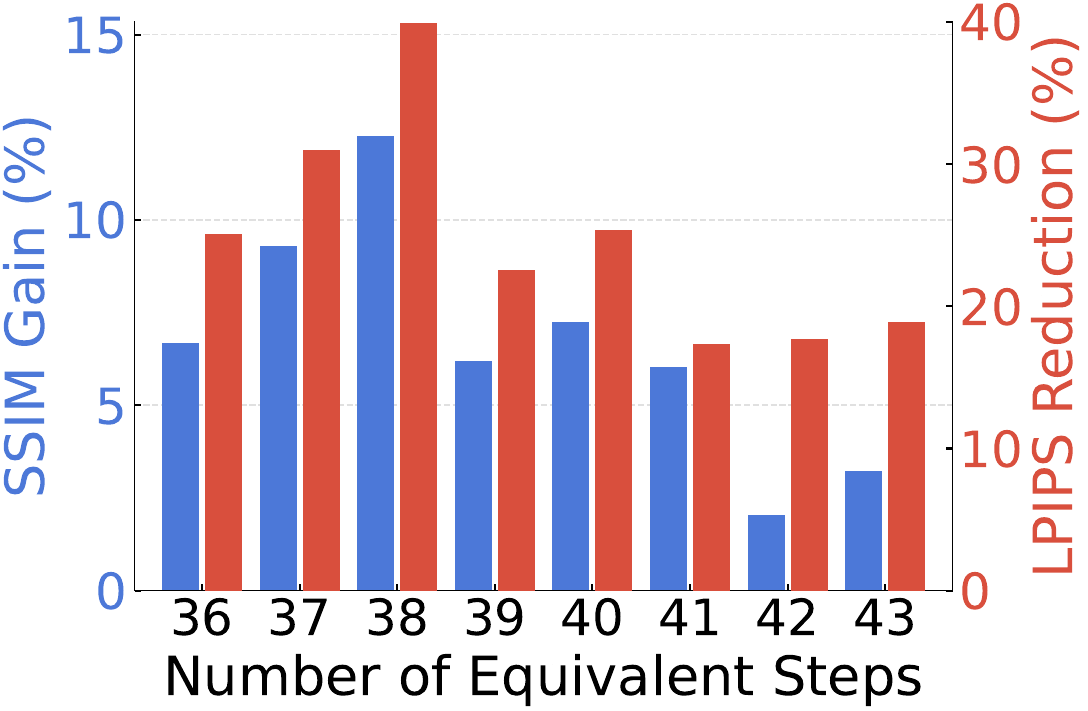}
        \captionof{figure}{Our skipping outperforms uniform step reduction under equivalent effective compute budgets.
        }
        \label{fig:step-to-quality}
    \end{minipage}
    \vspace{-0.3cm}
\end{figure}

\paragraph{Adaptive Patch Granularity.}
A remaining challenge is determining the number of patches \(R\). Increasing \(R\) exposes more fine-grained skipping opportunities, but also amplifies cross-patch communication overhead and risks fragmenting semantically coherent regions. Determining the optimal \(R\) a priori is impractical, as it depends on the input prompt, model dynamics, and future skip opportunities.

Instead, \name leverages a key structural property of this tradeoff. As \(R\) increases, the opportunity for selective refinement grows monotonically, improving theoretical computational efficiency. However, system-level costs, particularly cross-patch communication, grow rapidly and eventually dominate execution time. At the same time, generation quality remains stable under moderate partitioning but degrades when patches become overly fragmented (\S\ref{eval:ablation}). 
Together, these effects induce a \emph{unimodal} performance profile with respect to \(R\): performance improves with increasing parallelism up to a point, after which communication overhead and quality degradation outweigh the benefits. This structure enables efficient online optimization for achieving the efficiency-quality frontier. Specifically, \name performs a bounded binary search over \(R\), guided by observed end-to-end latency and quality signals from past requests. When increasing \(R\) improves performance, the system continues exploring finer partitioning; when performance degrades, it reduces \(R\) to avoid excessive communication and fragmentation. This self-adaptation can quickly and continuously converge to the near-optimal operating point under the current workload.
Our online serving deployments confirm the effectiveness of this design (\S\ref{eval:e2e}).

\subsection{Patch Gate: Selectively Refine Patches}
\label{subsec:gate}

Given the partitioned patches, the Patch Gate determines \emph{when} each patch should be refined along the denoising trajectory. This is non-trivial because refinement needs evolve dynamically: complex patches may require substantial updates early but stabilize later, while low-saliency patches may still need updates to preserve global coherence for other patches due to cross-patch dependencies.

\paragraph{Adaptive Skipping under Spatiotemporal dynamics.}
We use self-attention as a proxy for refinement activity. Unlike cross-attention (used by the Patch Partitioner) that captures the alignment of the latent region with the prompt, self-attention reflects how the latent locations interact with each other and how updates propagate across regions~\cite{liu2024understanding, hong2023sag}. 
Moreover, due to the \emph{bidirectional} attention in diffusion models, each latent position both attends to and is influenced by others. So patches with strong self-attention activity are important for their own refinement and guiding other patches.

We quantify this using a refinement importance score $R_i^{(t)}$ for each patch $i$ at step $t$. Let $\mathcal{A}_i$ denote the set of latent locations belonging to patch $i$, and let $SA \in \mathbb{R}^{J \times J}$ be the self-attention matrix with $J$ total tokens. We define:
$
R_i^{(t)} = \frac{1}{|\mathcal{A}_i|^2} \sum_{u, v \in \mathcal{A}_i} SA^{(t)}_{u,v}
$, which captures both internal and externally dependent attention strength of patch $i$.

Intuitively, patches with high refinement importance $R_i$ should not be skipped. However, $R_i$ varies both across patches and over time, making static thresholding unreliable. Moreover, we would hope to incorporate the relative efficiency gains from skipping different-sized patches. Fortunately, our recursive partitioning strategy (\S\ref{subsec:partitioner}) has produced patches of approximately equal area for load balance, even though their shapes (e.g., width and height) may differ, thereby sidestepping reasoning about heterogeneous gains.

Our key idea is to gate patches based on how much
their refinement signal is \emph{changing} relative to other patches, rather than on
its absolute magnitude. For each patch $i$, we maintain (i) $\bar R_i$, the most recent importance when the patch was executed, and (ii) $\Delta_i$, a cached change magnitude.
When executed at step $t$, we update $
\Delta_i^{(t)}=\big|R_i^{(t)}-\bar R_i\big|$ and $\bar R_i \leftarrow R_i^{(t)}$. 
When the patch is skipped, we retain $\Delta_i^{(t)}=\Delta_i^{(t-1)}$ and then normalize these changes across patches: $w_i^{(t)}=\frac{\Delta_i^{(t)}}{\sum_j \Delta_j^{(t)}}$ so that $w_i^{(t)}$ measures how much patch $i$ contributes to the \emph{overall
refinement activity} of the latent. 
A patch is skipped if $w_i^{(t)} < \varepsilon$. This decision is both step-aware and prompt-adaptive: patches are skipped only when their refinement dynamics are negligible relative to others. We show that \name achieves consistently better performance than existing advances across a wide span of $\varepsilon$ (\S\ref{eval:ablation}).

\paragraph{Guaranteeing Generation Quality.} When a patch is skipped, its noise prediction is reused from the most recent active step, incurring a local approximation error. Yet, we prove that our design introduces a \emph{bounded and well-behaved error} across the denoising trajectory, ensuring good quality:  
\begin{theorem}[Quality Bound]
Let $\hat{x}_N$ denote the final latent produced by \name after $N$ denoising steps, and let $x^*_N$ denote the baseline latent without gating. Under the adaptive gating rule with threshold $\varepsilon$, the deviation satisfies
$$
\|\hat{x}_N - x^*_N\| \;=\; O\!\Big(\varepsilon \cdot \textstyle\sum_{n=1}^N D_n \;+\; \gamma\Big),
$$
where $D_n$ is the total refinement activity at step $n$, and $\gamma$ is a bounded residual induced by forced reactivation. In particular, the quality gap vanishes as $\varepsilon \to 0$.
\end{theorem}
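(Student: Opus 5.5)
The plan is to write the gated and baseline samplers as one-step recursions and run a discrete Gr\"onwall / error-propagation argument. Write the baseline update as $x^*_{n} = \Phi_n(x^*_{n-1}, \epsilon_\theta(x^*_{n-1}, t_n))$, with $\Phi_n(x,e) = a_n x + b_n e$ for DDIM/flow-matching style solvers. \name's update is $\hat{x}_n = \Phi_n(\hat{x}_{n-1}, \hat{e}_n)$. On active patches, $\hat{e}_n$ coincides with $\epsilon_\theta(\hat{x}_{n-1},t_n)$. On a skipped patch $i$, it coincides with the cached prediction from that patch's last active step $\tau_i(n)$. Setting $e_n = \|\hat{x}_n - x^*_n\|$, I would assume (as is standard) that $\epsilon_\theta$ is $L$-Lipschitz in $x$ and that the coefficients are bounded, $|a_n| \le 1 + c h_n$ and $|b_n| \le c h_n$. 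This gives the recursion
\[
e_n \le (1 + C h_n)\, e_{n-1} + |b_n|\, \delta_n ,
\]
where $\delta_n = \|\hat{e}_n - \epsilon_\theta(\hat{x}_{n-1},t_n)\|$ is the local gating error, which is nonzero only on skipped patches.

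The core step is to bound $\delta_n$ by the gating rule, and this needs a modeling assumption linking the attention-based importance change to the change in noise prediction. I would posit that the drift of patch $i$'s noise prediction between $\tau_i(n)$ and $n$ is controlled by its cached change magnitude, $\|\epsilon_\theta(\cdot)_{\mathcal{A}_i}^{(n)} - \epsilon_\theta(\cdot)_{\mathcal{A}_i}^{(\tau_i(n))}\| \le K \Delta_i^{(n)}$. Since $\Delta_i$ is frozen while the patch is skipped, the assumption is really that the refinement signal is smooth between reactivations; any violation is absorbed into the $\gamma$ term. Because patch $i$ is skipped only if $w_i^{(n)} < \varepsilon$, we get $\Delta_i^{(n)} < \varepsilon \sum_j \Delta_j^{(n)} =: \varepsilon D_n$. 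Summing over the at most $R$ skipped patches, with equal-area patches from the recursive partitioner making norms additive up to constants, gives $\delta_n \le K R\, \varepsilon D_n$.

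I then unroll the recursion. With $\prod_n (1 + C h_n) \le e^{C\sum h_n}$ bounded over the finite schedule, this yields $e_N \le C' \sum_n (\varepsilon D_n + \rho_n)$. Here $\rho_n$ collects the extra discrepancy at forced-reactivation steps. At such a step, the first fresh evaluation of a patch after a long skip may jump by more than the cached $\Delta_i$ predicted. I would bound each jump by the Lipschitz constant times the bounded latent range, and note that reactivations happen at most once per skip interval. Their total is therefore a bounded constant, which I call $\gamma$, giving the stated $O(\varepsilon \sum_n D_n + \gamma)$.

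The final clause needs $\gamma \to 0$ as $\varepsilon \to 0$. As $\varepsilon \to 0$, any patch with $\Delta_i > 0$ is never skipped, so skip intervals vanish and hence so do the reactivation jumps.

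The main obstacle is the second step, which links the self-attention statistic $R_i^{(t)}$ to actual noise-prediction drift. This is not a consequence of the definitions, so it must be stated as an assumption: a Lipschitz-type relation between attention change and output change, in the spirit of the cited attention-stabilization results. The bound on $\gamma$ is the second delicate point, since it must be shown not to accumulate with $N$.
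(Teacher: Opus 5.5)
Your skeleton is the paper's: error recursion, Lipschitz bound $L\|e_n\|$, the gate implication $\widetilde{\Delta}_i^{(n)}<\varepsilon D_n$, aggregation over skipped patches, and a discrete Gronwall inequality. The paper specializes to Euler ($a_n=1$, $b_n=h$) and aggregates in $\ell_2$ to get $\sqrt{R}$ rather than $R$. Your appeal to equal-area patches is unnecessary, since the norm is additive over disjoint coordinate blocks anyway.

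The gap is in the residual $\gamma$. You assume coherence directly with the frozen $\Delta_i^{(n)}$, send every violation into $\gamma$, charge it at reactivation steps, and bound it by ``Lipschitz constant times latent range.'' Two things go wrong.
\begin{enumerate}
\item At a reactivation step the patch is recomputed, so its local error there is zero. The staleness error is incurred at every skipped step of the window, not at its end.
\item Your bound never uses forced reactivation quantitatively; it only counts intervals. An $O(1)$ per-event violation summed against $|b_n|\le ch_n$ gives $\gamma=O(RLT\,\mathrm{diam})$. That is no better than the trivial bound on $\|\hat{x}_N-x^*_N\|$, so the estimate is vacuous, and $\gamma$ is not ``induced by forced reactivation'' in any useful sense.
\end{enumerate}

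The missing idea is the paper's staleness lemma. Note the naming: your coherence constant $K$ is the paper's $C_0$, while the paper's $K$ is the forced-reactivation window $n-n_i\le K$. The paper assumes coherence only for the \emph{fresh} change, $\|\delta_n^{(i)}\|\le C_0\Delta_i^{(n),\star}$, which avoids the circularity you noticed. A skipped patch is at most $K$ steps stale, so since its last active step the latent has moved at most $KhM_f$ and time at most $Kh$. Lipschitz continuity of self-attention in $(x,t)$ then gives $|\Delta_i^{(n),\star}-\widetilde{\Delta}_i^{(n)}|\le\beta Kh$. Hence $\|\delta_n^{(i)}\|\le C_0\widetilde{\Delta}_i^{(n)}+\gamma$ with $\gamma=C_0\beta Kh$ at every skipped step. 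Gronwall turns this into the $\gamma T$ term, which shrinks with $K$ and $h$. In fact the $x$- and $t$-Lipschitz assumptions alone give $\|\delta_n^{(i)}\|\le(LM_f+L_t)Kh$ directly; this is the quantitative content that forced reactivation supplies.

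Your closing claim that $\gamma\to0$ as $\varepsilon\to0$ also fails. Under the main-text gate, a patch with cached $\Delta_i=0$ has $w_i=0<\varepsilon$ for every $\varepsilon>0$, so its skip windows never shrink. The paper does not derive $\gamma\to0$ from $\varepsilon\to0$. It reads the clause as the $\varepsilon$-term vanishing, with $\gamma$ controlled separately by $K$ and $h$. If you want exact vanishing, use the appendix's regularizer $\eta>0$ in $w_i$. Then $w_i\ge\eta/D_n$, so once $\varepsilon<\eta/\sup_n D_n$ no patch is ever skipped and $\hat{x}_N=x^*_N$.
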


\emph{Proof sketch.}
The result follows from three key observations. 
(1) \emph{Bounded local error:} A patch is skipped at step $n$ only if its normalized refinement contribution satisfies $w_i^{(n)} < \varepsilon$, implying its change magnitude $\Delta_i^{(n)} \le \varepsilon \cdot D_n$. 
(2) \emph{Lipschitz continuity:} The post-attention components (MLP, normalization, projection) are Lipschitz-continuous, so perturbations in attention induce proportionally bounded deviations in the predicted noise. Denoting the per-step prediction error by $\delta_n^{(i)}$, we obtain $\delta_n^{(i)} = O(\varepsilon \cdot D_n)$. 
(3) \emph{Bounded staleness:} A forced-reactivation constraint ensures that no patch is skipped for more than a constant number of steps (\S\ref{sec:implementation}), bounding the drift from reused predictions by a small additive term $\gamma$. 
Aggregating these per-step errors across the denoising trajectory and applying a discrete Gronwall inequality~\cite{gronwall-equality} yields the stated bound. A tighter analysis exploiting contractivity of the reverse diffusion process is deferred to Appendix~\ref{sec:theory}. \hfill $\square$

Empirically, our evaluations across diverse T2I, T2V, and T2A models and tens of thousands of real prompts confirm negligible quality degradation~(\S\ref{eval:e2e}, Table~\ref{tab:quality-metrics}).

\subsection{Patch Scheduler: Maximize Serving Goodput} 
\label{subsec:scheduler}

\begin{algorithm}[t]
\caption{\name Serving Runtime}
\label{alg:pipeline}
\DontPrintSemicolon
\SetKwProg{Fn}{Function}{}{end}
\SetKwFunction{FPart}{PatchPartitioner}
\SetKwFunction{FGate}{PatchGate}
\SetKwFunction{FSched}{PatchScheduler}
\SetKwInput{KwIn}{Input}
\SetKwInput{KwOut}{Output}

\SetKwFunction{FGidle}{GetIdleWorkers}
\SetKwFunction{Fassi}{AssignTask}
\SetKwFunction{Fdeq}{DequeueHighest}
\SetKwFunction{Frea}{GetReactivated}
\SetKwFunction{Fflp}{FindLowerPriTask}
\SetKwFunction{Fflp}{FindLowerPriTask}
\SetKwFunction{Fpr}{Preempt}
\SetKwFunction{Frs}{RunStepAndUpdate}
\SetKwFunction{FNew}{PollArrivals}
\SetKwFunction{FDone}{AllDone}
\SetKwFunction{FEnq}{Enqueue}
\SetKwFunction{Fup}{UpstreamPriority}

\SetKwFunction{FCol}{CollectCandidates}
\SetKwFunction{FHig}{HighestPriorityCandidate}
\SetKwFunction{FBen}{Benefit}
\SetKwFunction{FSwi}{SwitchCost}
\SetKwFunction{FAff}{AffinityAwareAssign}
\SetKwFunction{FDis}{Dispatch}

\KwIn{Request stream $\{p\}$; warmup steps $T_w$; regions $R$; gate threshold $\varepsilon$}
\KwOut{Generated sample $y$}

\Fn{\FPart{$p$, $T_w$, $R$}}{
  $\mathit{latent} \leftarrow \mathrm{InitNoise}()$;  run $T_w$ warmup steps \label{alg:partition-start}
  
  Select salient tokens and build saliency map $S$

  Allocate $r_f, r_c$ and split latent into $R$ patches

  \KwRet $(\mathit{latent}, \text{patches } \mathcal{P})$\; \label{alg:partition-end}
}

\Fn{\FGate{$\mathcal{P}$, state, $\varepsilon$}}{
  Compute refinement importance $R_i$ from self-attention \label{alg:gate-start}
  
  Update $\Delta_i$ and normalize to weights $w_i$
  
  \KwRet active set $\mathcal{U} = \{i : w_i \geq \varepsilon\}$, updated state \label{alg:gate-end}
}

\Fn{\FSched{}}{
  $Q \leftarrow \emptyset$\tcp*[r]{global monotask priority queue}
  \While{$\lnot \FDone{}$}{
  
    $\mathcal{R} \leftarrow \FNew{}$
    
    \ForEach{$p \in \mathcal{R}$}{
        $\begin{aligned}[t]
  Q \leftarrow \FEnq(&Q, \langle \mathrm{Warmup}, p, T_w, \\
                    &\Fup{p} \rangle)
  \end{aligned}$\;
    }
    
    $\mathcal{C} \leftarrow \FCol(Q)$

    \ForEach{$w \in \FGidle()$}{
      $m \leftarrow \FHig(\mathcal{C}, w)$
      
      \If{$\FBen(m) > \FSwi(m)$}
      {
        $g' \leftarrow \FAff(m)$;\;
          \FDis{$g', m$}
      }
    }
    
    \Frs($Q$) 

  }
\KwRet $\mathit{latent}$
  }
\end{algorithm}

Algorithm~\ref{alg:pipeline} summarizes \name's serving runtime. Each request proceeds in three phases.
(1) \emph{Warmup and Partitioning.} \name first executes a few standard denoising steps to extract semantic saliency maps, which are used to recursively partition the latent into coherent, right-sized patches (Line~\ref{alg:partition-start}--Line~\ref{alg:partition-end}). 
(2) \emph{Patch-wise Denoising.} an in-parallel patch denoising phase where each patch is either computed or skipped based on the adaptive gating mechanism (Line~\ref{alg:gate-start}--Line~\ref{alg:gate-end}; \S\ref{subsec:gate}). 
(3) \emph{Dynamic Patch Scheduling.} Fine-grained patch execution introduces potential execution bubbles, where \name needs to dynamically interleave skipped and active patches both within and across requests.

As an execution engine, \name is designed to strictly \emph{adhere to} the upstream request ordering (e.g., from the FIFO scheduler~\cite{niavana-nsdi24}) while extracting maximal efficiency from available hardware. With the resource freed by patch skipping, we next introduce how the Patch Scheduler (1) enables patch switching efficiently, while (2) adapting to online serving dynamics (\eg, load burstiness) to jointly optimize latency, throughput, and generation quality for many requests.

\begin{figure}[t]
  \centering
  \includegraphics[width=0.95\linewidth]{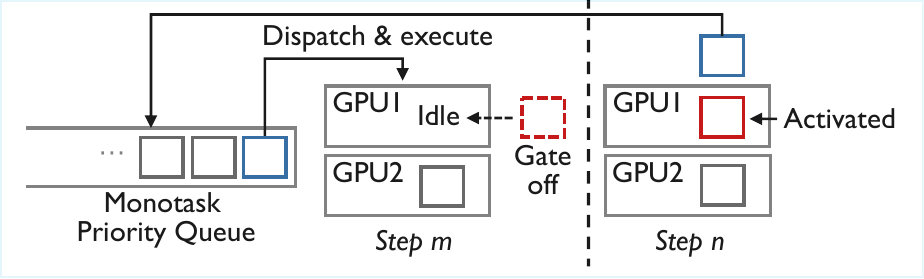}
  \vspace{-.4cm}
  \caption{Patch monotask scheduler. Workers fetch tasks from a priority queue, where activated patches can preempt lower-priority ones for low latency and high throughput.}
  \label{fig:patch_monotask}
  \vspace{-.3cm}
\end{figure}

\paragraph{Efficient Patch Switching with Monotasks.} 
\name introduces a priority-driven \emph{patch monotask} abstraction to enable low-latency, preemptive execution. As illustrated in Figure~\ref {fig:patch_monotask}, each executable unit, whether a full-latent warm-up task or an individual patch in the parallel phase, is represented as a monotask and enqueued according to the request’s scheduling priority, assigned by the upstream scheduler~\cite{niavana-nsdi24, tetriserve-arxiv25}. This allows patches from higher-priority requests to promptly preempt those from lower-priority ones.

When a worker has leftover resources (e.g., because its current patch is gated off), \name dispatches the highest-priority monotask. When a previously skipped patch becomes active, its monotask is treated as a candidate to preempt a currently executing lower-priority monotask. Such preemption and switching occur only at step boundaries, ensuring that intra-step execution remains atomic and preventing partial progress within a step.

Our monotask switching introduces negligible impact on per-request latency because: (i) switching occurs only at step boundaries (every a few hundred milliseconds), bounding deferral to at most one step; (ii) active patches of the same request continue executing during that step, preserving global progress; (iii) only a small amount of runtime state needs to be maintained per request (e.g., no LLM-style KV cache); 
and (4) our priority-based switching ensures that at most two times of requests' states need to reside on the GPU at a time (\S\ref{sec:implementation})---one for the preempted request and one for the active request---keeping I/O minimal and predictable. 

Our evaluations validates that our priority-based switching design introduces negligible latency overhead while substantially improving throughput (\S\ref{eval:ablation}).

\paragraph{Maximizing Goodput with Monotask Packing.} 
While \name respects upstream request priorities, monotasks from many concurrent requests continuously compete for GPU resources in online serving. 
Efficiently packing these monotasks is therefore critical to achieving both low latency and high throughput. Na\"{i}vely distributing patches from the same request across GPUs increases cross-GPU communication and causes each denoising step to be bottlenecked by the slowest patch, due to cross-patch communication in a request. Conversely, aggressively co-locating patches of a request onto as few GPUs as possible can under-utilize available hardware, and may create a single-GPU bottleneck that increases per-step latency for the request.

FlashDiff addresses this with a two-phase assignment policy. In the warm-up phase, before patch-level parallelism begins, the request's single monotask is dispatched to the GPU with the lowest current load. Once patch-parallel denoising begins, FlashDiff assigns each patch monotask using a locality-aware least-maximum-load rule. Let $L_g$ denote the current load on GPU $g$, defined as the total compute cost of patches assigned to it: $L_g = \sum_{i \in \mathcal{P}_g} w_i$, where $\mathcal{P}_g$ is the set of patches currently assigned to GPU $g$ and $w_i$ reflects patch $i$'s compute cost. Since each denoising step completes only when all patches finish their respective step, step latency is determined by the straggling patch on the most loaded GPU. Therefore, for an incoming patch $j$ with cost $w_j$, FlashDiff assigns it to the GPU that minimizes the maximum load:

\begin{equation}
g^* = \arg\min_{g} \left( \max_{g'} 
\begin{cases} 
L_{g'} + w_j & \text{if } g' = g \\ 
L_{g'}       & \text{otherwise}
\end{cases}
\right)
\end{equation}

Our design is lightweight with only O($|g|$) complexity, yet enabling both low per-request latency and high global throughput by fully utilizing available GPU capacity. Note that under low serving loads, \name will automatically prioritize quality (\S\ref{subsec:partitioner}). Indeed, our evaluation confirms that this packing strategy achieves near-optimal serving latency and consistently superior quality (\S\ref{eval:e2e}).

\section{Implementation}
\label{sec:implementation}

We implement \name as a production-grade diffusion serving engine with approximately 5.5K lines of Python, C++, and CUDA, built on top of NVIDIA TensorRT~\cite{trt}. 

\paragraph{\name Backend.}
\name is implemented as a multi-process, multi-GPU runtime that uses NCCL for inter-GPU communication. During warm-up, workers construct the initial KV cache and synchronize it across GPUs. The lead warm-up worker then exports the initialized KV cache to the main serving workers via CUDA IPC handles, and forwards the request metadata (patch layout, step index, and priority) to the Patch Scheduler.
Workers execute their assigned patches using the KV cache from the previous step, and exchange only the updated partial KV blocks corresponding to their patches. 
After each step, workers report per-patch refinement statistics to the scheduler, which invokes the Patch Gate to determine the active patch set for the next step.

\paragraph{Fault Tolerance.} 
\name maintains all control-plane state---including patch partitions, per-request step counters, and worker–patch assignments---in a replicated, lightweight metadata store, enabling rapid recovery from worker or scheduler failures. Upon detecting a failure, the Patch Scheduler automatically reassigns affected patches to healthy workers and resumes execution. To guarantee correctness under retries, each patch execution is labeled with a \texttt{(request\_id, step\_id, patch\_id)} tuple. Workers discard stale outputs, ensuring idempotent recovery. 

\section{Evaluation}
\label{sec:eval}

We evaluate \name on tens of thousands of realistic requests spanning text-to-image (T2I), text-to-video (T2V), and text-to-audio (T2A) generation. Our key findings are:
\begin{denseitemize}

\item \name reduces end-to-end serving latency by 30--97\% and improves throughput by 1.2--2.2$\times$ without compromising generation quality (\S\ref{eval:e2e});

\item \name optimizes \emph{statistical efficiency} (skipping unnecessary refinement) and \emph{system efficiency}, achieving near-optimal efficiency-quality tradeoffs (\S\ref{eval:break-down});

\item \name consistently outperforms existing advances across diverse serving loads and settings  (\S\ref{eval:ablation}).

\end{denseitemize}

\subsection{Methodology}
\label{eval:setup}

\paragraph{Cluster Setup and Workloads.}
We evaluate \name on a cluster of 8 NVIDIA H100 GPUs for Image and Video, and 4 NVIDIA A100 GPUs for Audio. Table~\ref{tab:models} summarizes the four state-of-the-art diffusion models and real-world workloads used in our study. 

    
    
    

Following deployment practices~\cite{niavana-nsdi24}, we use the \texttt{FlowMatchEulerDiscreteScheduler} for T2I and T2V with 50 steps, and the \texttt{DPMSolverMultistepScheduler} for T2A, with 100 steps in all experiments. We studied the impact of total denoising steps in ablation studies (\S\ref{eval:ablation}). We use the realistic arrival patterns from DiffusionDB~\cite{wangDiffusionDBLargescalePrompt2022}, scaled to match our cluster capacity such that the mean arrival rate is close to the per-request generation time, avoiding service failures. We also ablate the request arrival rate (\S\ref{eval:ablation}).

\paragraph{Baselines.}
We compare \name against three state-of-the-art diffusion serving systems:
\begin{denseitemize}
    \item \emph{xDiT}~\cite{fang2024xdit}: The state-of-the-art and production-scale multi-GPU inference engine for DiTs. It employs DiT-optimized sequence parallelism~\cite{pipefusion-nips25}.
    
    \item \emph{DistriFusion}~\cite{distrifusion-cvpr24}: An advanced patch-parallel execution method, in NVIDIA TensorRT~\cite{trt}, that preserves cross-patch interaction via displaced patch reuse. It overlaps communication with computation.
    
    \item \emph{NaivePatch}~\cite{distrifusion-cvpr24}: A patch-parallel baseline that denoises patches independently and stitches them after each denoising step. 
\end{denseitemize}

\paragraph{Metrics.}  
\name is designed to improve serving efficiency without degrading generation quality: (i) \emph{Efficiency}: we report per-request end-to-end latency and serving throughput; and (ii) \emph{Quality}: For text-to-image, we measure PSNR, SSIM ~\cite{ssim}, LPIPS ~\cite{zhang2018unreasonable}, and HPSv3 ~\cite{ma2025hpsv3}.  
    For text-to-video, we report Imaging Quality, Motion Smoothness, and Subject Consistency ~\cite{vbench}.  
    For text-to-audio, we use FD ~\cite{fdaudio}, KL ~\cite{kl}, and CLAP ~\cite{wu2023large} score. 

All results are averaged over five independent runs.

\subsection{End-to-End Performance}
\label{eval:e2e}

\paragraph{\name improves serving throughput.} 
Figure~\ref{fig:throughput_over_time} shows the serving throughput for text-to-image (T2I), text-to-audio (T2A), and text-to-video (T2V) workloads over a one-hour online deployment. Compared to all baselines, \name consistently achieves 1.2--1.8$\times$ higher throughput across different workloads, by reducing the amount of generation execution needed per request. The magnitude of improvement varies across tasks, reflecting differences in patch-skipping opportunities and spatial or temporal heterogeneity in the generated content.

Notably, during periods of request burstiness, \name sustains substantially higher throughput, while all baselines quickly saturate and plateau at a fixed bottleneck. This behavior demonstrates \name’s ability to absorb sudden workload surges by reclaiming computation from skipped patches and dynamically redistributing GPU resources. We further analyze this load-adaptive behavior (\eg, performance under different system loads) in our ablation study (\S\ref{eval:ablation}).

\begin{figure*}[t]
  \centering
  \begin{subfigure}{0.33\textwidth}
    \centering
    \includegraphics[width=\linewidth]{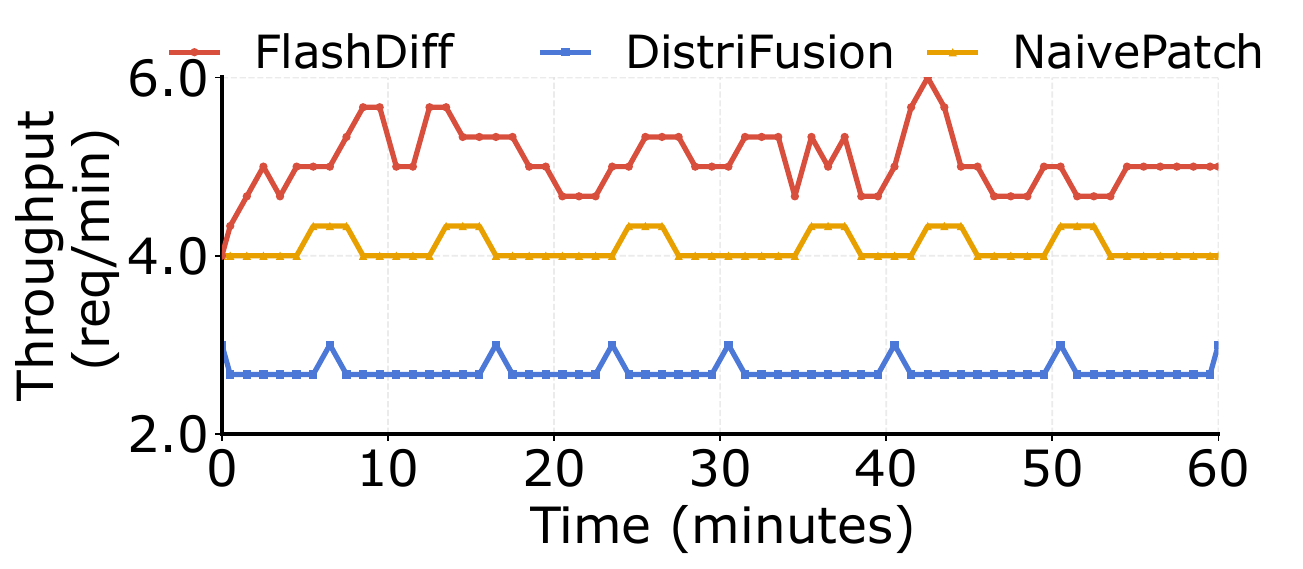}
    \caption*{(a) StableAudioOpen (T2A)}
  \end{subfigure}
  \hfill
  \begin{subfigure}{0.33\textwidth}
    \centering
    \includegraphics[width=\linewidth]{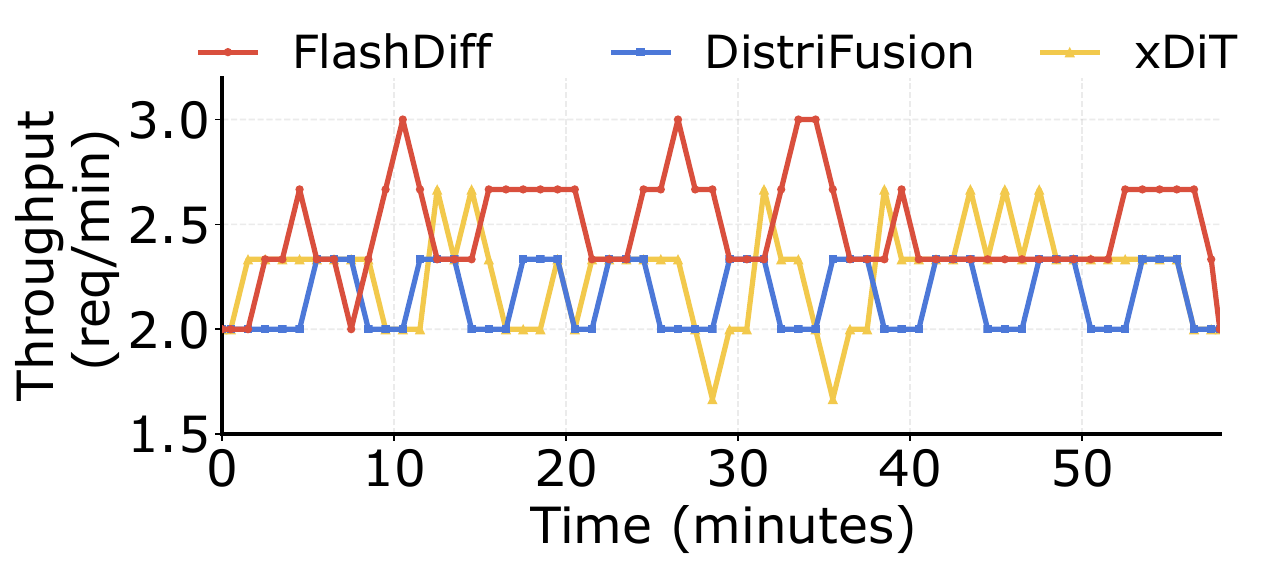}
    \caption*{(b) FLUX.1-dev (T2I)}
  \end{subfigure}
  \hfill
  \begin{subfigure}{0.33\textwidth}
    \centering
    \includegraphics[width=\linewidth]{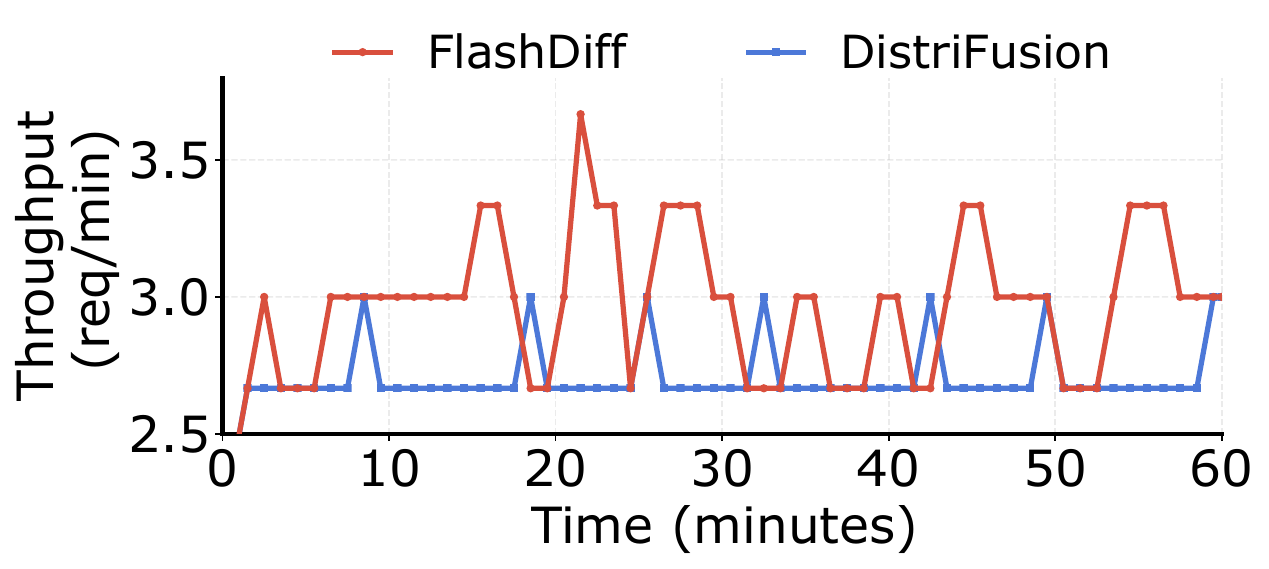}
    \caption*{(c) Wan2.1 (T2V)}
  \end{subfigure}
    \vspace{-.6cm}
  \caption{\name achieves higher serving throughput across a one-hour online deployment, generalizing across model architectures and modalities, whereas existing baselines are limited to specific tasks.}
  \label{fig:throughput_over_time}
\end{figure*}

\begin{table}[t]
\centering
\footnotesize
\setlength{\tabcolsep}{3pt}
\renewcommand{\arraystretch}{1.05}
\begin{tabularx}{\columnwidth}{@{}
  c
  >{\raggedright\arraybackslash}p{0.23\columnwidth}
  >{\raggedright\arraybackslash}p{0.20\columnwidth}
  >{\raggedright\arraybackslash}X
@{}}
\toprule
\textbf{Task} & \textbf{Model} & \textbf{Workload} & \textbf{Quality Metrics} \\
\midrule
T2I & FLUX.1-dev~\cite{flux2024} & DiffusionDB & PSNR, SSIM~\cite{ssim}, LPIPS~\cite{zhang2018unreasonable}, HPSv3~\cite{ma2025hpsv3} \\
T2I & SD3 Medium~\cite{sd3} & DiffusionDB & PSNR, SSIM~\cite{ssim}, LPIPS~\cite{zhang2018unreasonable}, HPSv3~\cite{ma2025hpsv3} \\
\midrule
T2V & Wan2.1-14B~\cite{wan2025} & VBench~\cite{vbench} & Image Qual., Motion Smooth., Subject Consist.~\cite{vbench} \\
\midrule
T2A & StableAudioOpen &    AudioCaps~\cite{audiocaps} & FD~\cite{fdaudio}, KL~\cite{kl}, CLAP~\cite{wu2023large} \\
\bottomrule
\end{tabularx}
\caption{Summary of evaluation workloads.}
\label{tab:models}  
\vspace{-.8cm}
\end{table}

\begin{figure*}[t]
  \centering
  \begin{subfigure}{0.24\textwidth}
    \centering
    \includegraphics[width=\linewidth]{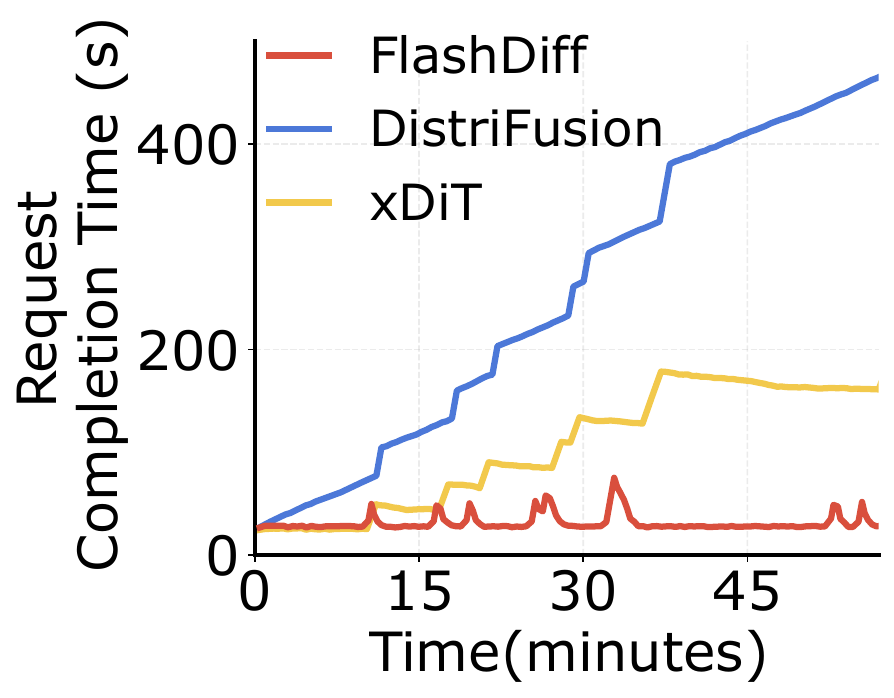}
    \caption*{(a) FLUX.1-dev (T2I)}
  \end{subfigure}
  \hfill
  \begin{subfigure}{0.24\textwidth}
    \centering
    \includegraphics[width=\linewidth]{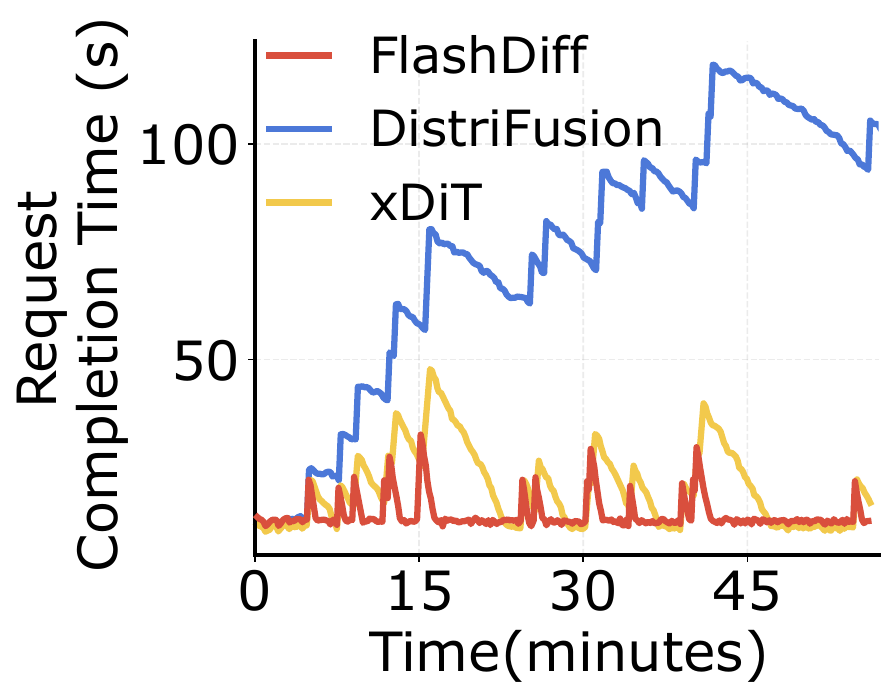}
    \caption*{(b) SD3 Medium (T2I)}
  \end{subfigure}
  \hfill
  \begin{subfigure}{0.24\textwidth}
    \centering
    \includegraphics[width=\linewidth]{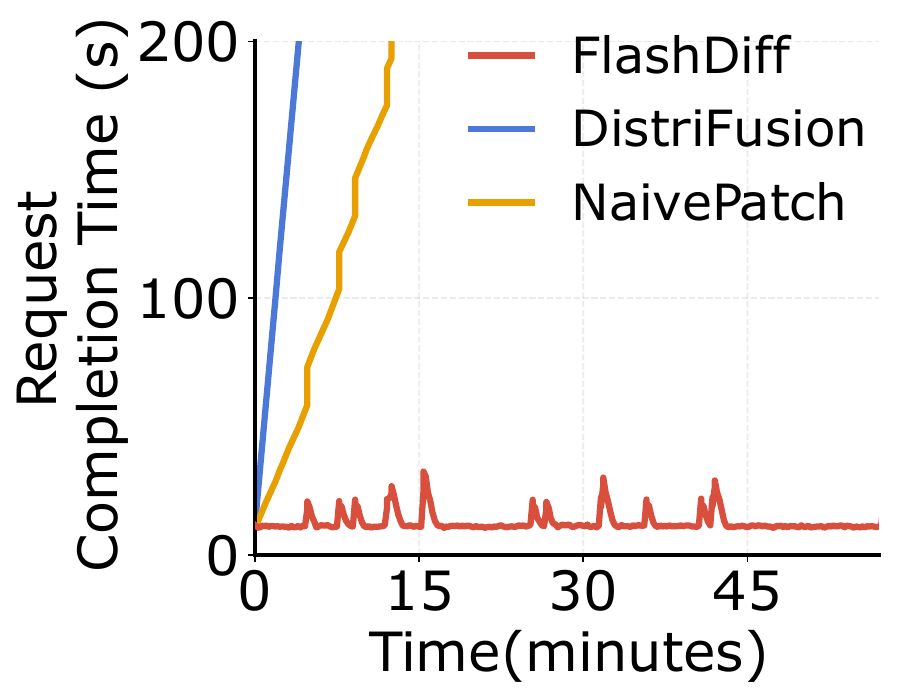}
    \caption*{(c) StableAudioOpen (T2A)}
  \end{subfigure}
  \hfill
  \begin{subfigure}{0.24\textwidth}
    \centering
    \includegraphics[width=\linewidth]{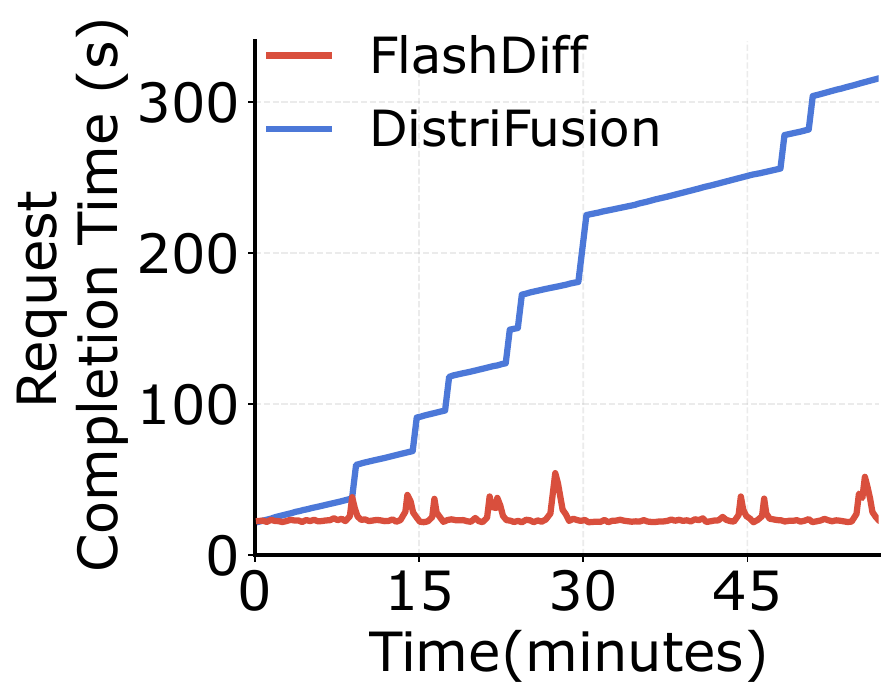}
    \caption*{(d) Wan2.1 (T2V)}
  \end{subfigure}
\vspace{-.3cm}
  \caption{\name reduces end-to-end per-request latency. By reducing effective computation, \name lowers both raw generation time and queueing delays, especially under high load. We add ablation studies on arrival rates in Section~\ref{eval:ablation}.}
  \label{fig:latency}
  \vspace{-0.3cm}
\end{figure*}

\paragraph{\name reduces user-perceived serving latency.} 

Figure~\ref{fig:latency} reports request completion time (finish time minus submission time) over the one-hour online deployment. Compared to baselines, \name consistently achieves 30–97\% lower request completion times throughout the deployment. The largest latency reductions occur during periods of high concurrency, which are prevalent in online serving due to bursty request arrivals. Under such transient spikes, baselines quickly suffer from queue buildup and blocking among concurrent requests, leading to sharp latency escalation. In contrast, \name effectively bounds latency growth by (1) reducing the amount of computation and communication required per request and (2) redistributing resources from skipped patches to maintain a balanced GPU load.

\paragraph{\name maintains generation quality.}
Table~\ref{tab:quality-metrics} reports quantitative quality metrics in our online deployment, spanning thousands of real requests. Across all modalities, \name incurs little to no degradation in generation fidelity and often achieves slightly higher quality. By preserving cross-patch interactions through bidirectional attention and adaptive gating, \name keeps the diffusion trajectory closely aligned with that of the unmodified model. In contrast, xDiT and DistriFusion rely on asynchronous execution or stale-state reuse for efficiency, which quickly accumulates approximation error across denoising steps.

\begin{table*}[t]
\centering
\footnotesize
\newcommand{\Wtask}{0.08\textwidth}
\newcommand{\Wmodel}{0.18\textwidth}
\newcommand{\Wmethod}{0.14\textwidth}
\newcommand{\Wmet}{0.10\textwidth}
\begin{tabular}{
  >{\centering\arraybackslash}p{\Wtask}
  >{\centering\arraybackslash}p{\Wmodel}
  >{\centering\arraybackslash}p{\Wmethod}
  >{\centering\arraybackslash}p{\Wmet}
  >{\centering\arraybackslash}p{\Wmet}
  >{\centering\arraybackslash}p{\Wmet}
  >{\centering\arraybackslash}p{\Wmet}
}
\toprule
\multirow{2}{*}{\textbf{Task}} &
\multirow{2}{*}{\textbf{Model}} &
\multirow{2}{*}{\textbf{Method}} &
\multicolumn{4}{c}{\textbf{Quality Metrics}} \\
\cmidrule(lr){4-7}
& & &
\textbf{PSNR($\uparrow$)} & \textbf{SSIM($\uparrow$)} & \textbf{LPIPS($\downarrow$)} & \textbf{HPSv3($\uparrow$)} \\
\midrule
\multirow{6}{*}{T2I}
& \multirow{3}{*}{FLUX.1-dev}
& xDiT          & 15.4076 & 0.6218 & 0.4308 & 0.2210 \\
&               & DistriFusion  & \underline{20.8541}  & \textbf{0.7895} & \textbf{0.2162}  & \textbf{0.2220} \\
&               & \name   &  \textbf{20.9837} & \underline{0.7849} & \underline{0.2202} & \underline{0.2214} \\
\cmidrule(lr){2-7}
& \multirow{3}{*}{SD3 Medium}
& xDiT          & 16.6547 & 0.7354 & 0.2708 & 0.2156 \\
&               & DistriFusion  & \underline{17.2130} &  \textbf{0.7479}  & \textbf{0.2538}  & \textbf{0.2161} \\
&               & \name    & \textbf{17.2302}  & \underline{0.7425}  & \underline{0.2609}  & \underline{0.2160} \\
\midrule
\multicolumn{3}{c}{} &
\multicolumn{1}{r}{\textbf{FD$_{openl3}$($\downarrow$)}} &
\multicolumn{2}{c}{\textbf{KL$_{passt}$($\downarrow$)}} &
\multicolumn{1}{l}{\textbf{CLAP$_{LAION}$($\uparrow$)}} \\
\midrule
\multirow{3}{*}{T2A}
& \multirow{3}{*}{Stable Audio Open}
& NaivePatch    & 155.5995 & \multicolumn{2}{c}{3.7025} & 0.1062 \\
&               & DistriFusion & \underline{87.5675} & \multicolumn{2}{c}{\underline{2.2220}} & \textbf{0.3273} \\
&               & \name   & \textbf{87.0177} & \multicolumn{2}{c}{\textbf{2.0705}} & \underline{0.3252} \\
\midrule
\multicolumn{3}{c}{} &
\multicolumn{1}{r}{\textbf{Img-Qual ($\uparrow$)}} &
\multicolumn{2}{c}{\textbf{Mot-Smooth ($\uparrow$)}} &
\multicolumn{1}{l}{\textbf{Subj-Cons ($\uparrow$)}} \\
\midrule
\multirow{2}{*}{T2V}
& \multirow{2}{*}{Wan2.1}
& DistriFusion  & \textbf{0.6526} & \multicolumn{2}{c}{\textbf{0.9929}} & \textbf{0.9471} \\
&               & \name    & \textbf{0.6526} & \multicolumn{2}{c}{\underline{0.9928}} & \underline{0.9432} \\
\bottomrule
\end{tabular}
\caption{\name improves efficiency while preserving quality. \textbf{Bold}: best; \underline{underlined}: second best.}
\label{tab:quality-metrics}
\vspace{-.4cm}
\end{table*}

\paragraph{\name introduces negligible overhead.}
We evaluate the intrinsic overhead of \name by measuring single-request completion time in isolation (i.e., without concurrency). To isolate \name's control-plane and scheduling overheads, we add \name support atop DistriFusion (\ie, using DistriFusion as the backend) and compare their runtimes under identical workloads. As shown in Figure~\ref{fig:overhead}, \name introduces very marginal overhead,  0.4–3.7\%, across all models and modalities. This overhead comes primarily from patch partitioning, gating, and monotask scheduling, which involve lightweight control logic and irregular memory accesses (\eg, for loading different patches). By exploiting patch-level skipping and cross-request backfilling, \name substantially improves generation latency and throughput, yielding net gains.

\subsection{Performance Breakdown}
\label{eval:break-down}

\begin{figure}[t]
\centering
\includegraphics[width=0.8\linewidth]{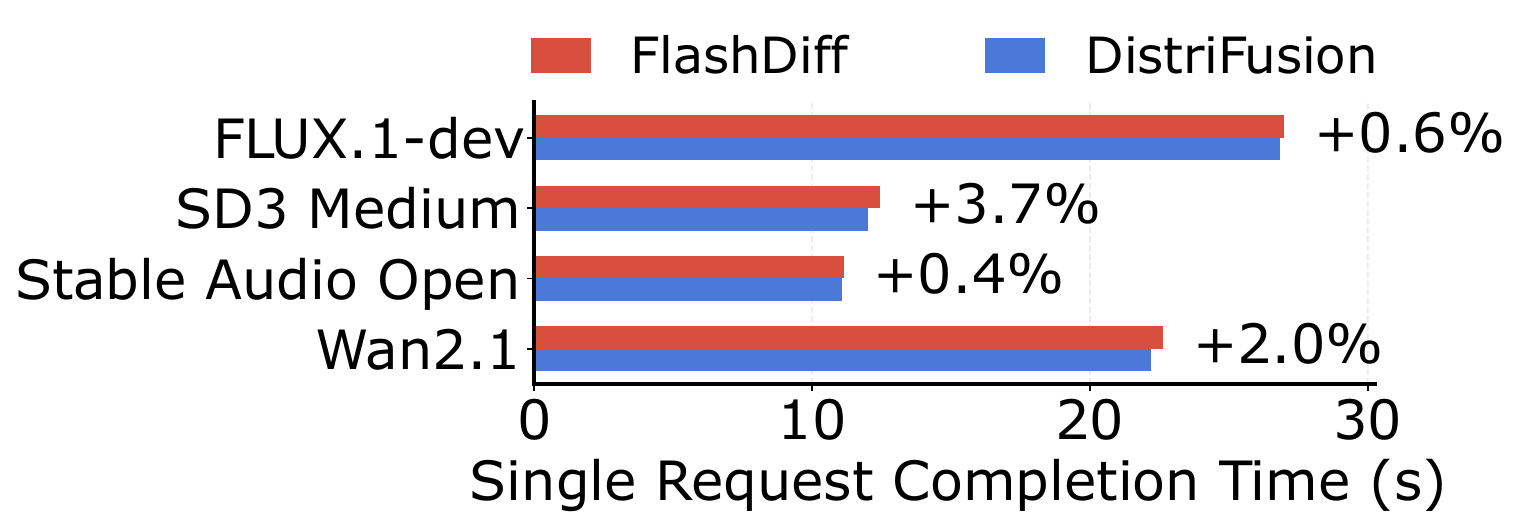}
\vspace{-.2cm}
\caption{\name introduces negligible overhead.}
\label{fig:overhead}
\vspace{-0.4cm}
\end{figure}

\begin{figure}[t]
\centering
\includegraphics[width=0.9\linewidth]{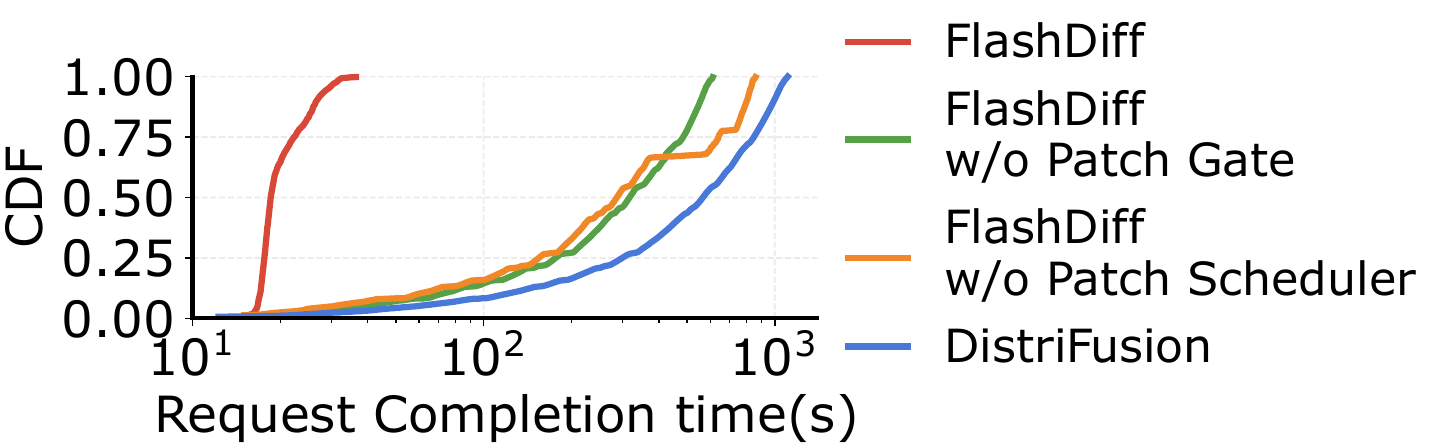}
\vspace{-.2cm}
\caption{Performance breakdown}
\label{fig:performancebreakdown}
\vspace{-0.4cm}
\end{figure}

\paragraph{Breakdown by Components.}
We evaluate the contribution of \name's key design components by disabling them in an online deployment of the SD3-Medium image generation workload. Figure~\ref{fig:performancebreakdown} shows the resulting distributions of request completion time. 
It reports that both the Patch Gate and the Patch Scheduler are critical to \name's performance. 
Disabling the Patch Gate removes selective skipping, forcing all patches to be refined at every step; this substantially increases effective computation and directly inflates latency. Disabling the Patch Scheduler eliminates affinity-aware packing and dynamic backfilling, leading to GPU imbalance and straggler-dominated steps.

Notably, each component alone already outperforms DistriFusion, demonstrating new contributions to each design.

\begin{figure}[t]
    \centering
  \includegraphics[width=\linewidth]  {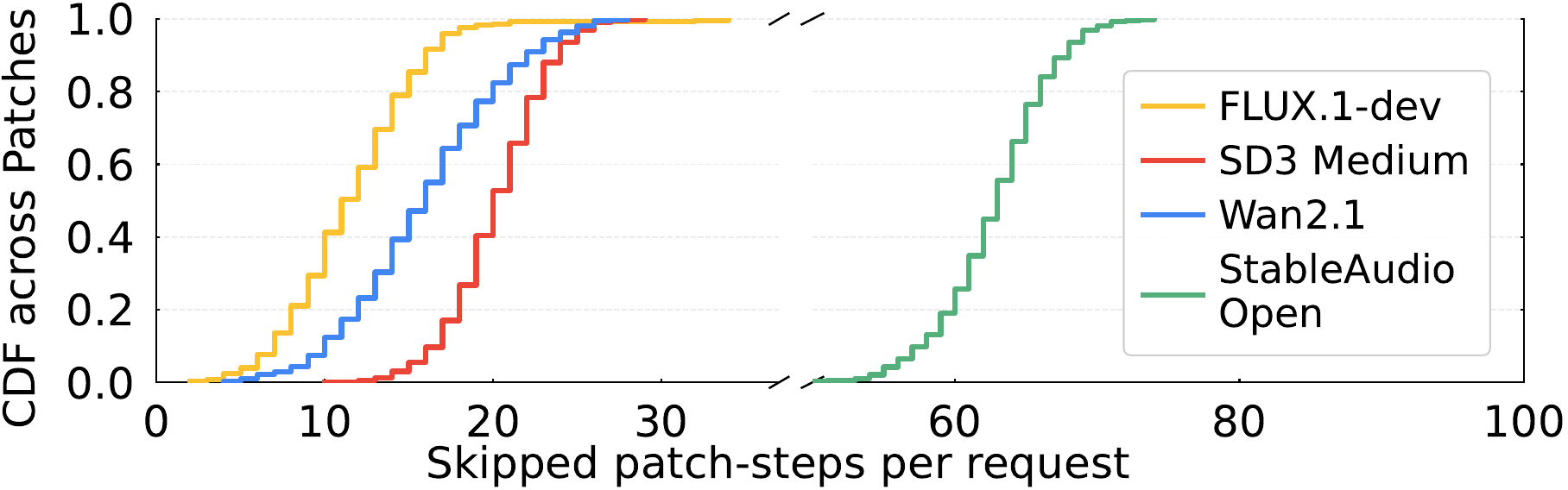}
  \vspace{-.3cm}
    \caption{\name performance breakdown by statistical efficiency, showing differences in skipped computation. }
    \label{fig:skipped_steps_cdf}
 \vspace{-.5cm}
\end{figure}

\paragraph{Breakdown by Statistical Efficiency.}
We quantify \name’s statistical efficiency by measuring how much patch-level diffusion computation can be skipped without degrading output quality (\ie, \emph{skipped patch-steps}). We use 50 denoising steps for FLUX.1-dev, SD3-Medium, and Wan2.1, and 100 steps for StableAudioOpen, with a 5-step warm-up, yielding 45 and 95 effective gated steps, respectively. We include ablation studies on different denosing steps in Section~\ref{eval:ablation}. 

Figure~\ref{fig:skipped_steps_cdf} shows large variation in skippable computation across models. At the median, StableAudioOpen skips 63 steps (66\% of effective steps), followed by SD3-Medium with 20 (44\%), Wan2.1 with 16 (36\%), and FLUX.1-dev with only 11 (24\%). These differences reflect fundamental properties of the generative trajectories: audio latents stabilize quickly over time, enabling aggressive skipping, while high-resolution image models like FLUX continue refining local structure deep into the denoising process. This diversity again highlights why a fixed skip policy is suboptimal. Even under the same quality constraints, different models and requests exhibit vastly different levels of patch-level redundancy, making an adaptive gating mechanism, such as our Patch Gate.

\subsection{Sensitivity and Ablation Studies}
\label{eval:ablation}

\begin{figure}[t]
    \centering
    \begin{subfigure}{.49\linewidth}
        \centering
        \includegraphics[width=\linewidth]{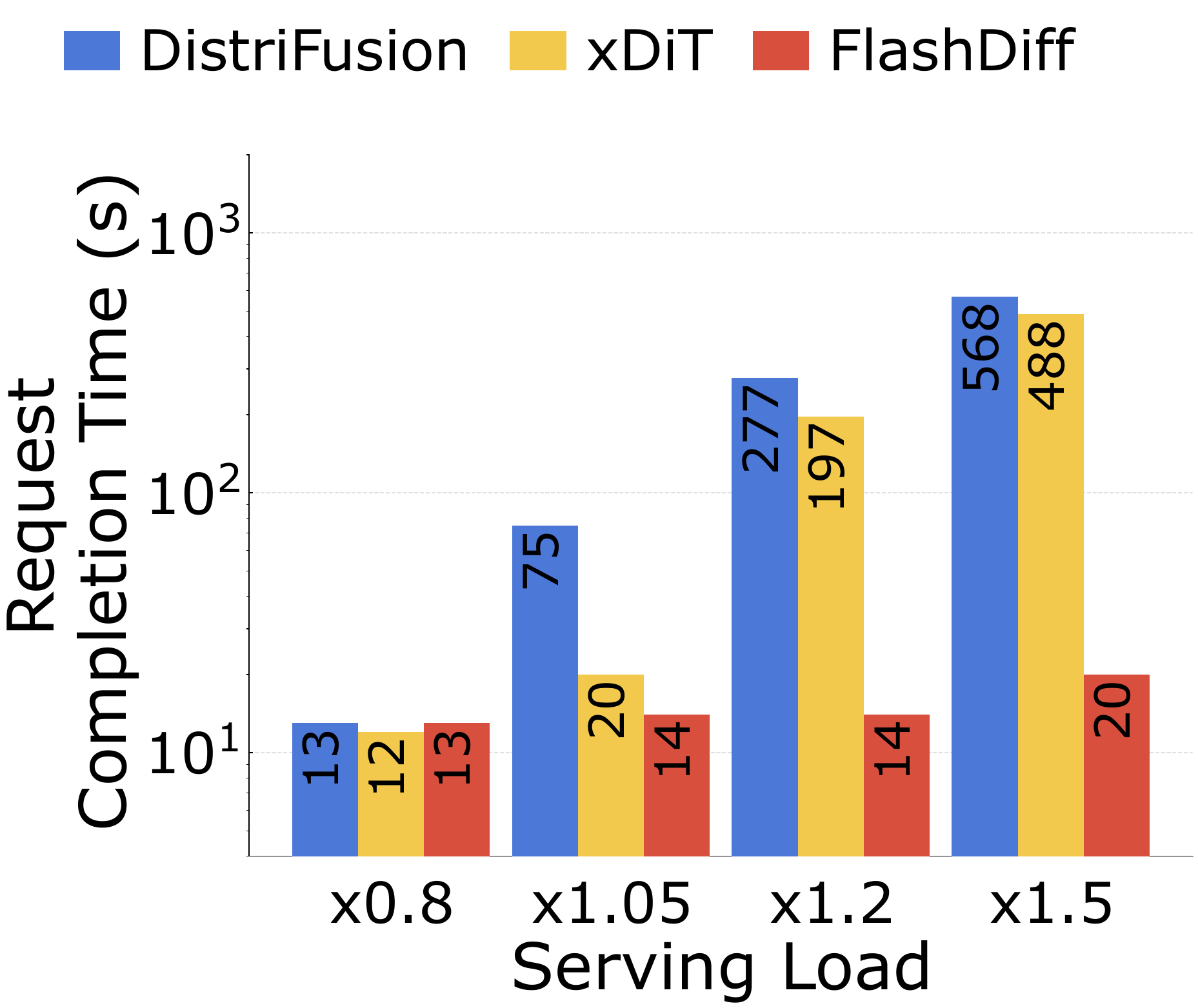}
        \caption{Serving latency.}
        \label{fig:sl_latency}
    \end{subfigure}
    \hfill
    \begin{subfigure}{.485\linewidth}
        \centering
        \includegraphics[width=\linewidth]{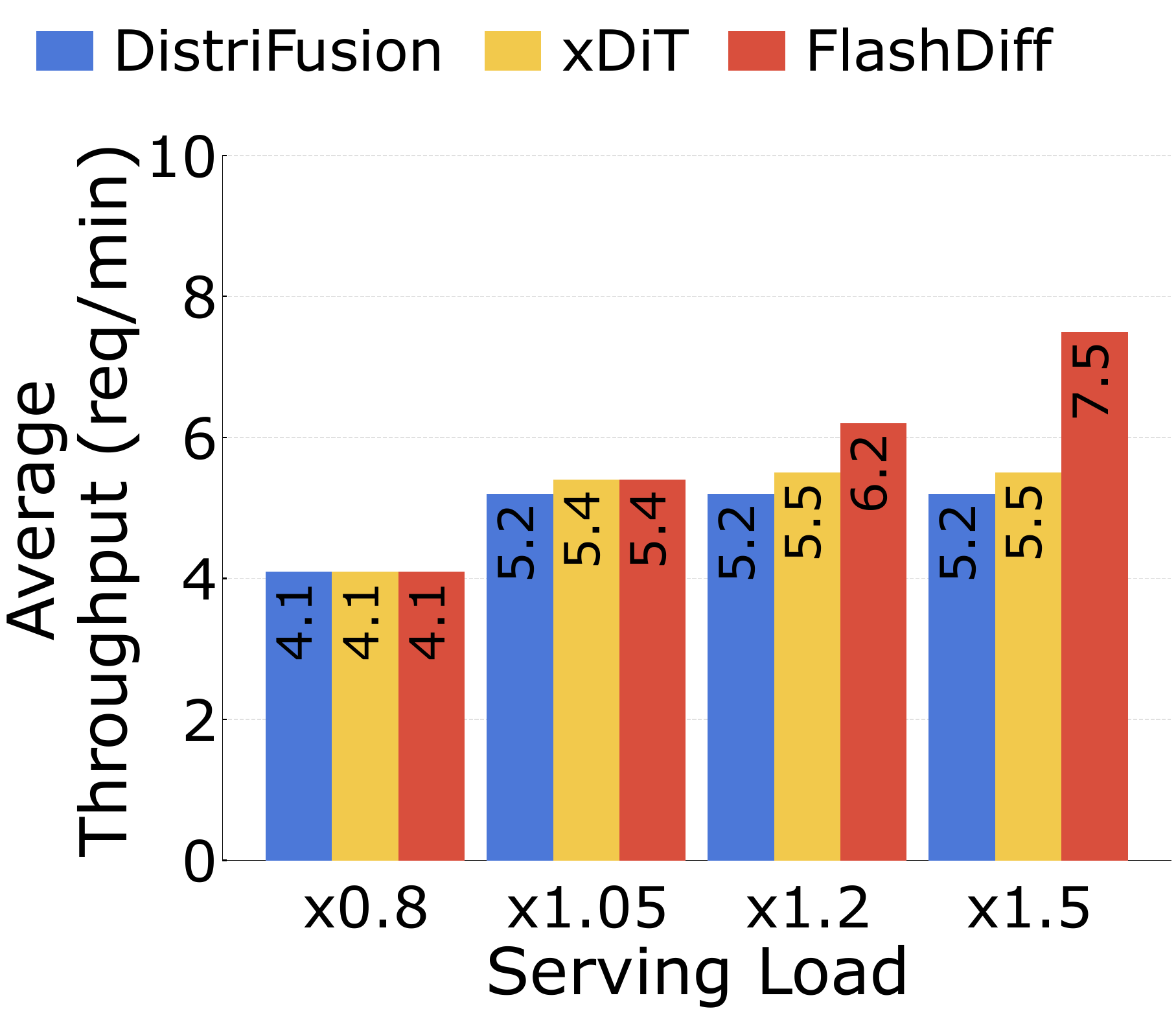}
        \caption{Serving throughput.}
        \label{fig:sl_tpt}
    \end{subfigure}
    \vspace{-.3cm}
    \caption{\name achieves lower latency and higher throughput across different request loads.}
      \vspace{-.3cm}
\end{figure}

\paragraph{Impact of Serving Load.}
We next investigate the impact of serving load using FLUX.1-dev with 2048$\times$2048 image generation over a continuous 1-hour serving period. Here, we vary the system load by scaling the request arrival rate relative to the average generation time: a load below 1 indicates an under-utilized system, while values above 1 stress the serving pipeline.
As shown in Figures~\ref{fig:sl_tpt} and~\ref{fig:sl_latency}, under light load (e.g., $\times 0.8$), all systems achieve comparable latency and throughput, since requests can be served immediately. \name continues to increase throughput and keeps latency bounded even under heavy load. At \(\times 1.5\) load, \name achieves up to \(25\times\) faster request completion.

\begin{figure}[t]
    \centering
    \begin{minipage}{0.8\linewidth}
        \centering
        \includegraphics[width=\linewidth]{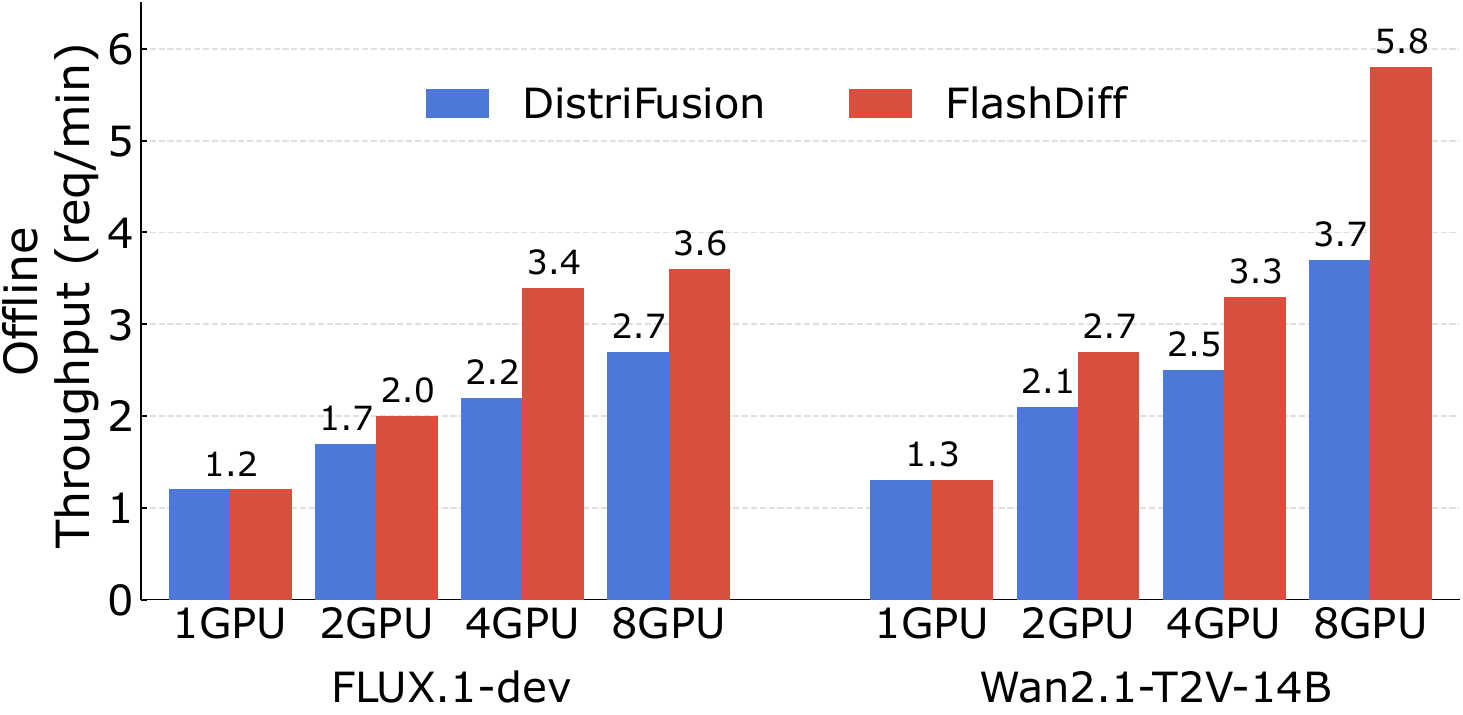}
        \vspace{-.8cm}
        \captionof{figure}{Impact of GPU numbers.}
        \label{fig:gpu-num}
    \end{minipage}
    \vspace{-0.3cm}
\end{figure}

\begin{figure}[t]
    \centering
    \begin{minipage}{0.85\linewidth}
        \centering
        \includegraphics[width=\linewidth]{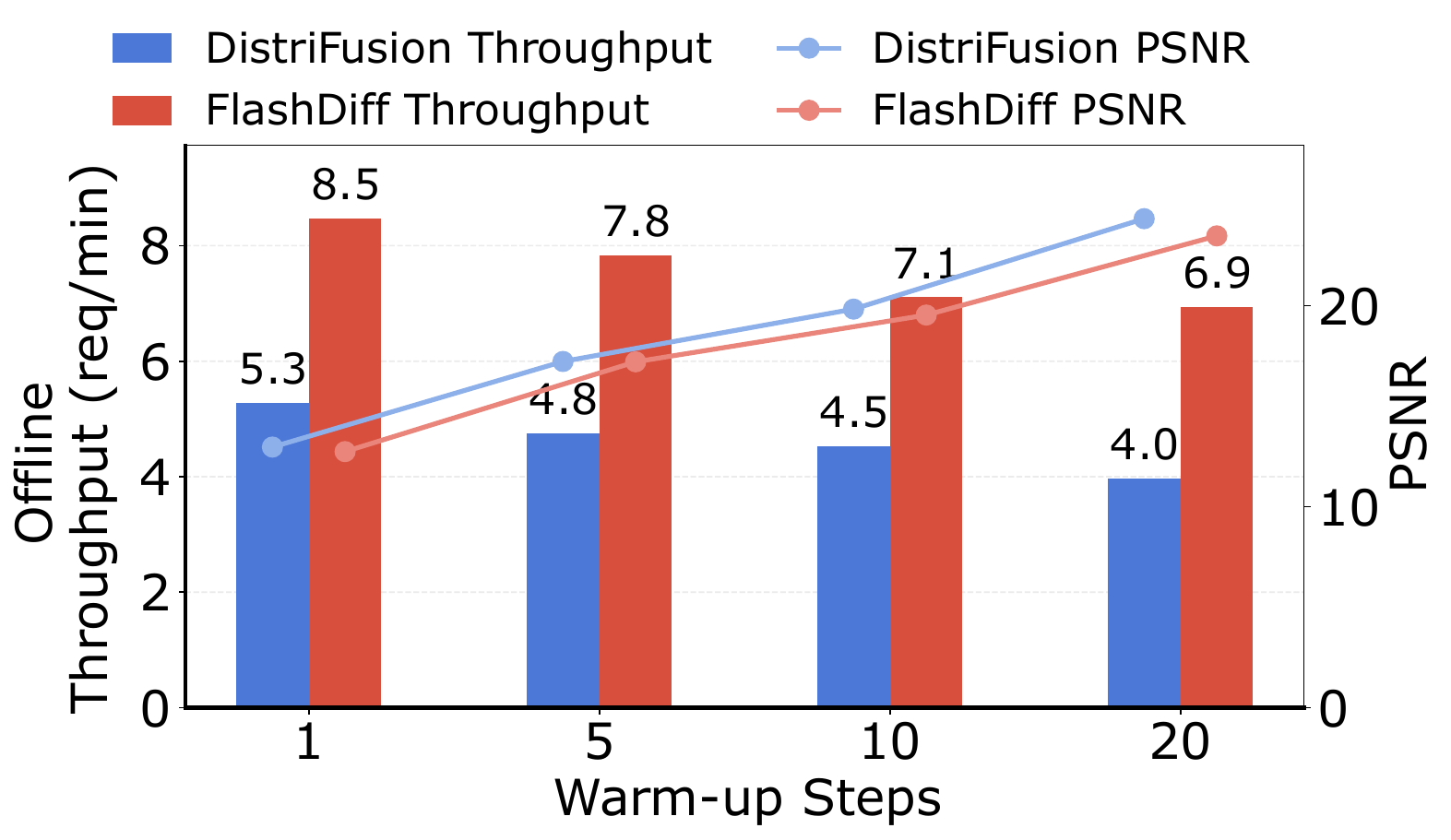}
        \vspace{-.7cm}
        \captionof{figure}{Impact of warm-up steps.}
        \label{fig:skipstep}
    \end{minipage}
    \vspace{-0.3cm}
\end{figure}

\paragraph{Impact of Number of GPUs.}
Figure~\ref{fig:gpu-num} reports serving throughput as a function of the number of GPUs for both T2I and T2V workloads.
Two key trends emerge. First, although additional GPUs provide more compute, the benefits are quickly diluted by the growing communication overhead. Second, across all GPU counts, \name consistently outperforms DistriFusion. By skipping low-impact patches and packing active patches efficiently, \name reduces inter-GPU communication and better utilizes available compute.

\paragraph{Impact of Warm-up Steps.}

Figure~\ref{fig:skipstep} reports both request completion time (left axis) and output quality (right axis) on SD3. Increasing the number of warm-up steps $s$ improves partition quality and hence generation quality, but reduces throughput because it delays the onset of patch-level skipping in the subsequent parallel phase. Nevertheless, even with conservative warm-up (e.g., $s{=}10$ or $20$), \name still achieves up to $1.6\times$ higher throughput than DistriFusion.

\paragraph{Impact of Total Denoising Steps.}
Figure~\ref{fig:inf_num} shows that  \name consistently achieves better efficiency across step settings (FLUX model). We note that \name preserves comparable output quality, with PSNR variations within -0.8\% to +3\% relative to DistriFusion.

\begin{figure}[t]
    \centering
    \includegraphics[height=3.7cm]{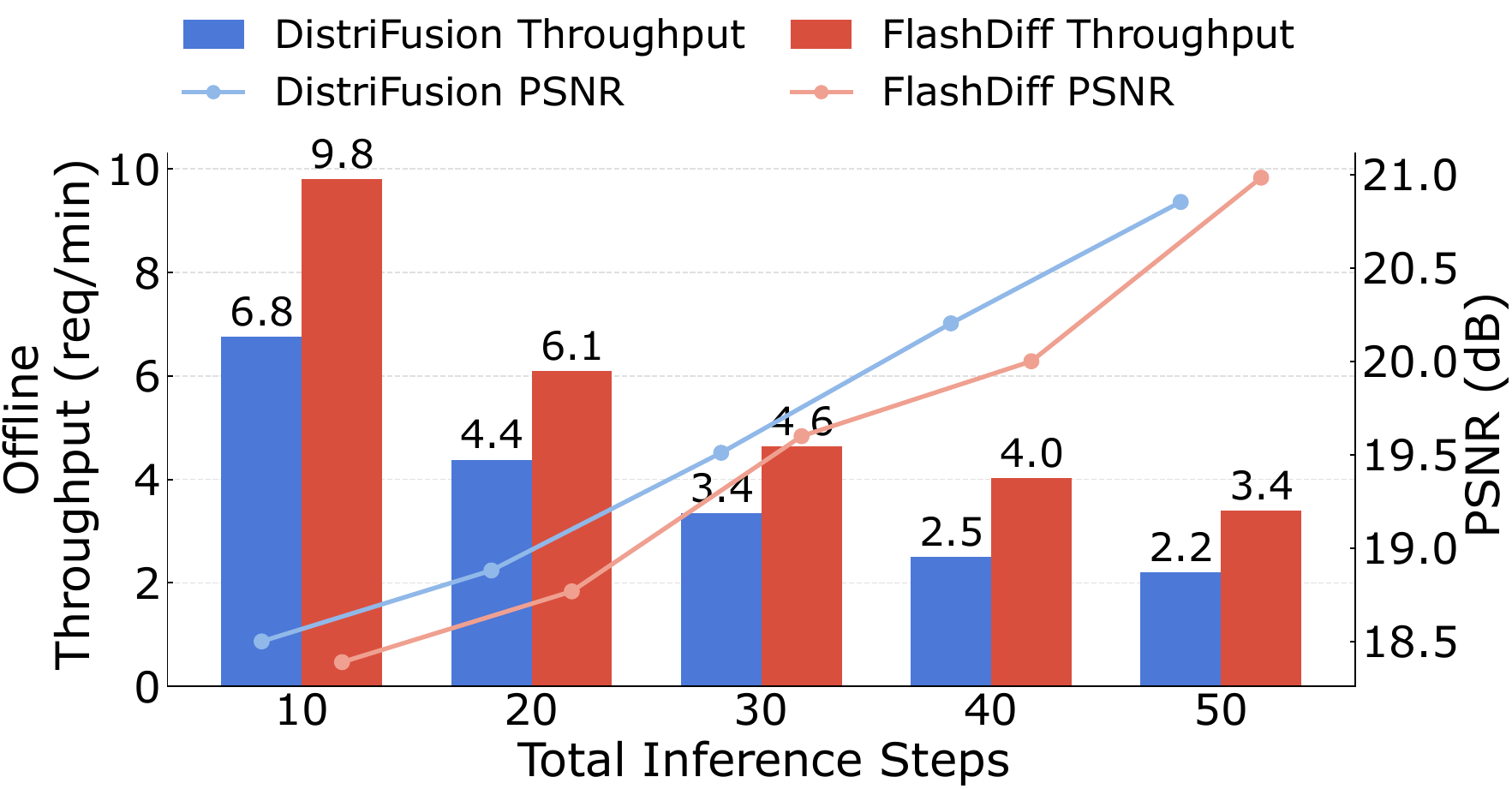}
    \vspace{-.3cm}
    \caption{Impact of total denoising steps.}
    \label{fig:inf_num}
    \vspace{-0.3cm}
\end{figure}

\begin{figure}[t]
    \centering

    \begin{minipage}[t]{0.49\linewidth}
        \vspace{0.5cm}
        \centering
        \includegraphics[width=\linewidth]
        {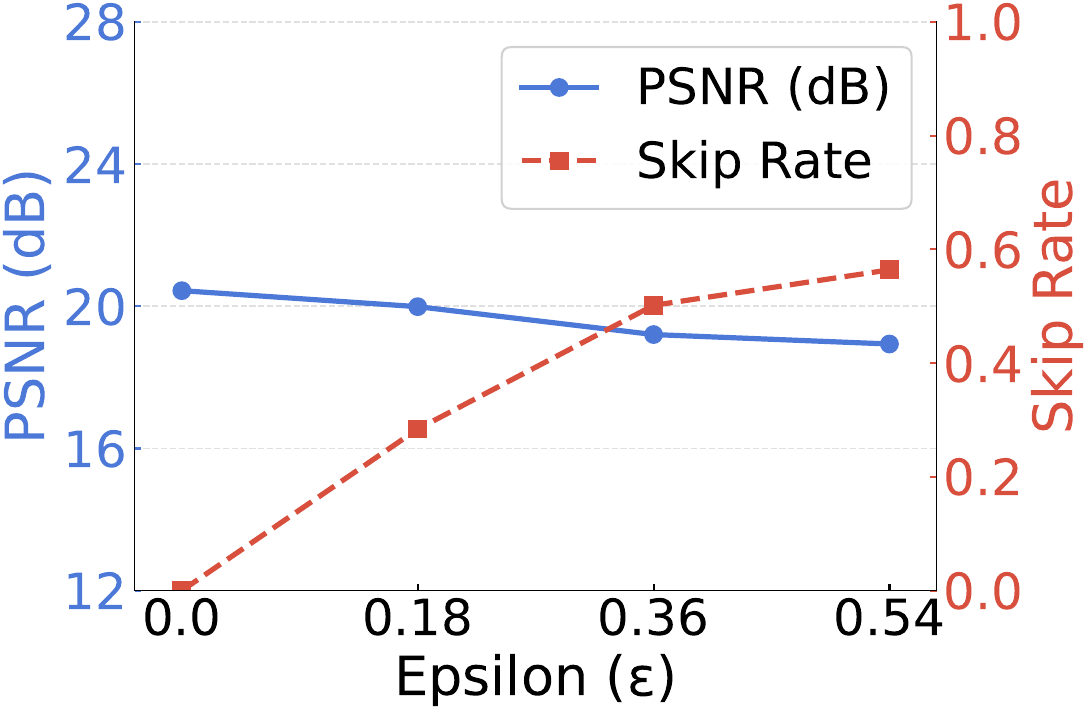}
        \vspace{-.4cm}
        \caption{Quality under different $\epsilon$ thresholds.
        }
        \label{fig:epsilon_psnr_skiprate-new}
    \end{minipage}
    \hfill
    \begin{minipage}[t]{0.47\linewidth}
        \vspace{0pt}
        \centering
        \includegraphics[width=\linewidth]
        {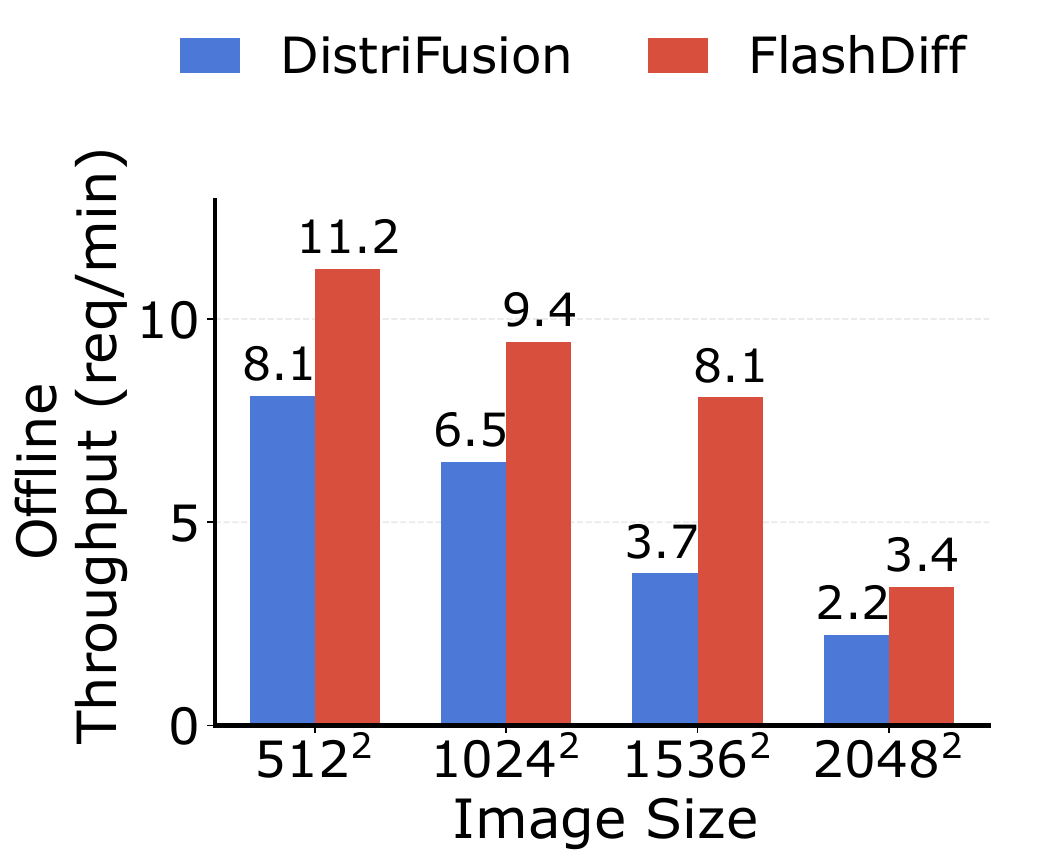}
        \vspace{-.4cm}
        \caption{Impact of image sizes (FLUX model).}
        \label{fig:imagesize}
    \end{minipage}

    \vspace{-0.5em}
\end{figure}

\paragraph{Impact of Image Size.}
We next vary the image resolution from $512\times512$ to $2048\times2048$. As shown in Figure~\ref{fig:imagesize}, increasing resolution enlarges the latent grids and raises the per-step latency, causing throughput drops, but \name consistently achieves $1.4$--$2.2\times$ higher throughput.

\paragraph{Impact of Skipping Threshold ($\epsilon$).}  Figure~\ref{fig:epsilon_psnr_skiprate-new} varies the skipping threshold (\S\ref{subsec:gate}), which controls the tradeoff between aggressive skipping and generation quality. \name maintains consistently high quality across a wide range of thresholds, while enabling substantial skip rates.

\section{Related Work}
\label{sec:related}

\paragraph{Diffusion Model Optimizations.}
Existing diffusion model optimizations fall into two main categories: step reduction and per-step acceleration.
Step reduction techniques aim to reduce the number of denoising steps, such as improved samplers such as DDIM~\cite{ddim}, k-diffusion~\cite{kdiff}, and parallel sampling~\cite{parallel-sampling-icml23} to trade off compute for speed. Similarly, recent algorithm advances reduces per-step duration by reusing the latent (\eg, SANA-Video~\cite{sana-video-arxiv25}) or reducing model size (\eg, model distillation~\cite{song2023consistency, luo2023latent} or quantization methods like q-diffusion\cite{li2023qdiffusion}). 
\name is complementary and introduces complexity-aware patch parallelism and an adaptive skipping mechanism, without altering models and samplers. 

\paragraph{Diffusion Serving Systems.}
Recent advances in diffusion serving have explored request scheduling to meet diverse SLOs (e.g., TetriServe~\cite{tetriserve-arxiv25}); reuse of intermediate latents across similar requests to accelerate warm-up (e.g., NIRVANA~\cite{niavana-nsdi24} and MoDM~\cite{modm-asplos26}); sequence-parallel inference across GPUs (e.g., DistriFusion~\cite{distrifusion-cvpr24}, xDiT~\cite{fang2024xdit}); and large-scale model parallelism for video diffusion using fully sharded data parallelism~\cite{wan2025,zhao2023pytorchfsdpexperiencesscaling}. 
In contrast, \name reduces the amount of execution required per request.

\paragraph{LLM Serving Optimizations.} 
vLLM~\cite{kwon2023efficient} improves memory efficiency by paging the KV cache, while Sarathi-Serve~\cite{sarathi-serve-osdi24} introduces chunked-prefill to increase throughput. TensorRT-LLM~\cite{trt-llm} focuses on GPU kernel optimizations, and DistServe~\cite{distserve} decouples the prefilling and decoding stages to ensure predictable latency. JITServe~\cite{jitserve-nsdi26} orchestrates the scheduling of requests to meet SLO requirements. IC-Cache~\cite{ic-cache-sosp25} repurposes prior requests as additional knowledge.
\name tackles distinct challenges of diffusion-based generation, exploiting computational redundancy.

\section{Conclusion}
\label{sec:outtro}

This paper presents \name, a novel diffusion serving engine that rethinks monolithic diffusion execution through \emph{semantic patch parallelism}. 

\name combines semantic-aware partitioning, adaptive patch gating, and communication-aware scheduling to translate patch-level parallelism into system-level efficiency gains.
Evaluations on real-world image, video, and audio generation workloads demonstrate 30--97\% reductions in latency and 1.2--2.2$\times$ improvements in throughput, all with little quality drop.

\bibliographystyle{ACM-Reference-Format}
\bibliography{main,proof}

\clearpage

\appendix
\newcommand{\FlashDiffMainPaper}{}
\ifdefined\FlashDiffMainPaper
\else
\documentclass[11pt]{article}

\usepackage[utf8]{inputenc}
\usepackage[T1]{fontenc}
\usepackage{amsmath,amssymb,amsthm}
\usepackage{mathtools}
\usepackage{enumitem}
\usepackage{thmtools}
\usepackage[margin=1in]{geometry}
\usepackage{hyperref}
\usepackage{cleveref}
\usepackage{booktabs}
\usepackage{xcolor}
\fi

\ifdefined\FlashDiffMainPaper
\else
\newtheorem{theorem}{Theorem}[section]
\newtheorem{proposition}[theorem]{Proposition}
\newtheorem{corollary}[theorem]{Corollary}
\newtheorem{lemma}[theorem]{Lemma}
\theoremstyle{definition}
\newtheorem{definition}[theorem]{Definition}
\newtheorem{assumption}[theorem]{Assumption}
\theoremstyle{remark}
\newtheorem{remark}[theorem]{Remark}
\fi

\ifdefined\FlashDiffMainPaper
\else
\newcommand{\R}{\mathbb{R}}
\newcommand{\E}{\mathbb{E}}
\newcommand{\Var}{\mathrm{Var}}
\newcommand{\norm}[1]{\lVert #1 \rVert}
\newcommand{\abs}[1]{\lvert #1 \rvert}
\newcommand{\ftheta}{f_\theta}
\newcommand{\xhat}{\hat{x}}
\newcommand{\xstar}{x^{*}}
\DeclareMathOperator*{\argmax}{arg\,max}
\DeclareMathOperator*{\argmin}{arg\,min}
\fi

\ifdefined\FlashDiffMainPaper
\else
\title{\textbf{FlashDiff: Theoretical Analysis of\\
Semantic Patch Parallelism}\\[6pt]
\large Supplementary Material}
\author{}
\date{}

\begin{document}
\maketitle
\fi

\section{Theoretical Analysis}\label{sec:theory}

We provide formal guarantees for FlashDiff's two core decisions:
\emph{when} to skip patch refinement (\S\ref{sec:skip}) and
\emph{where} to partition the latent (\S\ref{sec:partition}).
These results establish that FlashDiff's gating and partitioning
mechanisms are grounded in principled error control for the
underlying probability flow ODE, rather than being purely heuristic.

\subsection{Setup and Notation}\label{sec:setup}

Consider the probability flow ODE governing the reverse diffusion
process~\cite{song2021scorebased}:
\begin{equation}\label{eq:pf-ode}
  \frac{dx}{dt} = \ftheta(x,\,t),
  \qquad x(T) \sim \mathcal{N}(0,\,\sigma_T^2 I),
\end{equation}
where $\ftheta \colon \R^d \times [0,T] \to \R^d$ encapsulates the
learned velocity (or equivalently the score-based drift) and is
evaluated at every denoising step to produce the final sample~$x(0)$.

\paragraph{Euler discretisation.}
We discretise~\eqref{eq:pf-ode} with the forward Euler method over
$N$ steps at times $T = t_0 > t_1 > \cdots > t_N = 0$ with uniform
step size $h = T/N$:
\begin{equation}\label{eq:euler}
  \xstar_{n+1} = \xstar_n + h\,\ftheta(\xstar_n,\,t_n).
\end{equation}

\paragraph{Patch decomposition.}
The latent $x \in \R^d$ is partitioned into $R$ non-overlapping
patches $\{P_i\}_{i=1}^{R}$ satisfying
$P_i \subset \{1,\dots,d\}$ and $\bigcup_{i=1}^{R} P_i = \{1,\dots,d\}$.
We write $x^{(i)} \in \R^{|P_i|}$ for the restriction of~$x$ to
patch~$i$, and $\ftheta^{(i)}$ for the corresponding components
of~$\ftheta$.

\paragraph{FlashDiff update rule.}
At step~$n$, let $\mathcal{A}_n \subseteq [R]$ denote the
\emph{active} set determined by the Patch Gate, and
$\mathcal{S}_n = [R] \setminus \mathcal{A}_n$ the \emph{skipped}
set.  FlashDiff updates each patch as
\begin{equation}\label{eq:sf-update}
  \xhat_{n+1}^{(i)} =
  \begin{cases}
    \xhat_n^{(i)} + h\,\ftheta^{(i)}(\xhat_n,\,t_n),
      & i \in \mathcal{A}_n,\\[4pt]
    \xhat_n^{(i)} + h\,c_n^{(i)},
      & i \in \mathcal{S}_n,
  \end{cases}
\end{equation}
where $c_n^{(i)} = \ftheta^{(i)}(\xhat_{n_i},\,t_{n_i})$ is the
\emph{cached drift} from the most recent active step~$n_i < n$ of
patch~$i$.

\paragraph{Forced reactivation.}
To prevent indefinite staleness, FlashDiff enforces a
\emph{maximum consecutive skip bound}~$K$: if patch~$i$ has been
skipped for $K$ consecutive steps, it is forcibly reactivated at
the next step regardless of its gating weight.  Formally:
\begin{equation}\label{eq:forced-reactivation}
  n - n_i \leq K \quad \text{for every skipped patch } i \text{ at step } n,
\end{equation}
where $n_i$ is the most recent step at which patch~$i$ was active.
This guarantees that every cached drift value is at most~$K$
steps old.

\paragraph{Skip perturbation.}
Define $\delta_n \in \R^d$ component-wise by
\begin{equation}\label{eq:delta-def}
  \delta_n^{(i)} =
  \begin{cases}
    0, & i \in \mathcal{A}_n,\\
    c_n^{(i)} - \ftheta^{(i)}(\xhat_n,\,t_n),
      & i \in \mathcal{S}_n.
  \end{cases}
\end{equation}
This allows us to write the FlashDiff update compactly as
\begin{equation}\label{eq:sf-compact}
  \xhat_{n+1}
  = \xhat_n + h\bigl[\ftheta(\xhat_n,\,t_n) + \delta_n\bigr].
\end{equation}

\paragraph{Patch Gate notation (recap from main text).}
At each active step of patch~$i$, the gate computes the refinement
importance $R_i^{(n)}$ from self-attention and maintains:
\begin{equation}\label{eq:delta-i}
  \Delta_i^{(n)} = \abs{R_i^{(n)} - \bar{R}_i},
  \qquad
  \bar{R}_i \leftarrow R_i^{(n)},
\end{equation}
with $\Delta_i^{(n)}$ frozen at its last-active value when
patch~$i$ is skipped.  We denote the frozen (possibly stale) value
by $\widetilde{\Delta}_i^{(n)}$ and the hypothetical fresh value
(if patch~$i$ were recomputed) by $\Delta_i^{(n),\star}$.
By~\eqref{eq:forced-reactivation}, the staleness satisfies
$\widetilde{\Delta}_i^{(n)} = \Delta_i^{(n_i)}$ with $n - n_i \leq K$.

The normalised weight is
\begin{equation}\label{eq:gate-weight}
  w_i^{(n)} = \frac{\widetilde{\Delta}_i^{(n)} + \eta}{D_n},
  \qquad
  D_n = \sum_{j=1}^{R}\bigl(\widetilde{\Delta}_j^{(n)} + \eta\bigr),
\end{equation}
and patch~$i$ is skipped when $w_i^{(n)} < \varepsilon$.

\subsection{When to Skip: Accumulated Error Bound}\label{sec:skip}

We state three assumptions, then combine them into the main theorem.
The first two are standard; the third formalises the relationship
between self-attention dynamics and drift perturbation, accounting
for the forced-reactivation mechanism.

\begin{assumption}[Lipschitz Drift]\label{ass:lip}
  The drift $\ftheta(\cdot,t)$ is $L$-Lipschitz in its first
  argument uniformly over $t \in [0,T]$:
  \[
    \norm{\ftheta(x,t) - \ftheta(y,t)} \leq L\,\norm{x-y},
    \qquad \forall\; x,y \in \R^d,\; t \in [0,T].
  \]
\end{assumption}

\noindent This is standard in the score-based diffusion
literature~\cite{song2021scorebased,chen2023sampling} and holds for
neural networks with bounded weights and Lipschitz
activations (e.g.\ SiLU, LayerNorm with bounded inputs).

\begin{assumption}[Temporal Smoothness of Drift]\label{ass:time-lip}
  The drift $\ftheta(x, \cdot)$ is $L_t$-Lipschitz in time
  uniformly over $x$:
  \[
    \norm{\ftheta(x, t) - \ftheta(x, s)} \leq L_t\,\abs{t - s},
    \qquad \forall\; x \in \R^d,\; t, s \in [0,T].
  \]
\end{assumption}

\noindent This captures the smoothness of the denoising trajectory
across steps and is likewise standard~\cite{chen2023sampling,
benton2024error}.

\begin{assumption}[Fresh Attention--Drift Coherence]\label{ass:adc-fresh}
  There exists a constant $C_0 > 0$ such that, for every patch~$i$
  and step~$n$, if patch~$i$ were freshly computed at step~$n$, the
  drift perturbation relative to the cached value is bounded by the
  \emph{fresh} attention change:
  \begin{equation}\label{eq:adc-fresh}
    \norm{\delta_n^{(i)}}
    \leq C_0 \cdot \Delta_i^{(n),\star}.
  \end{equation}
\end{assumption}

\noindent\textbf{Justification.}\;
The drift $\ftheta^{(i)}$ is obtained by passing self-attention
outputs through post-attention layers (MLP, normalisation, linear
projection).  If these layers are collectively $L_{\mathrm{out}}$-Lipschitz,
then changes in self-attention propagate proportionally to
changes in drift:
\[
  \norm{\ftheta^{(i)}(\xhat_n,t_n) - \ftheta^{(i)}(\xhat_{n_i},t_{n_i})}
  \leq L_{\mathrm{out}} \cdot
  \norm{\mathrm{SA}^{(i)}(\xhat_n,t_n) - \mathrm{SA}^{(i)}(\xhat_{n_i},t_{n_i})},
\]
and $\Delta_i^{(n),\star}$ measures the right-hand side up to the
per-patch normalisation factor $|\mathcal{A}_i|$, giving
$C_0 = L_{\mathrm{out}} / |\mathcal{A}_i|$.  The constant $C_0$
is model-specific and may additionally reflect contributions
from cross-attention and residual pathways.  Note that
Assumption~\ref{ass:adc-fresh} only requires coherence for
\emph{freshly computed} quantities, avoiding any circularity with
stale values.

We now show that forced reactivation lets us bridge from the
\emph{stale} gate signal $\widetilde{\Delta}_i^{(n)}$ to the
\emph{fresh} value $\Delta_i^{(n),\star}$, yielding an effective
coherence bound that uses only observable (stale) quantities.

\begin{lemma}[Staleness Gap]\label{lem:staleness}
  Under Assumptions~\ref{ass:lip} and~\ref{ass:time-lip}, for any
  skipped patch~$i$ at step~$n$ with last-active step~$n_i$
  satisfying $n - n_i \leq K$:
  \begin{equation}\label{eq:staleness-bound}
    \abs{\Delta_i^{(n),\star} - \widetilde{\Delta}_i^{(n)}}
    \leq \beta \cdot Kh,
  \end{equation}
  where $\beta = L_{\mathrm{SA}}(L\,M_f + L_t)$,\;
  $L_{\mathrm{SA}}$ is the Lipschitz constant of the self-attention
  map with respect to $(x,t)$, and $M_f = \sup_{x,t}\norm{\ftheta(x,t)}$
  is the drift magnitude bound.
\end{lemma}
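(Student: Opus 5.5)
Unwinding the definitions, the fresh value is $\Delta_i^{(n),\star}=|R_i^{(n),\star}-\bar R_i|$, where $R_i^{(n),\star}$ is the importance patch~$i$ would have at step~$n$ and $\bar R_i=R_i^{(n_i)}$ is the last recorded importance. The stale value is $\widetilde\Delta_i^{(n)}=\Delta_i^{(n_i)}=|R_i^{(n_i)}-R_i^{(m)}|$ for the previous active step $m<n_i$. The plan is to avoid comparing these two absolute differences directly. Instead, I will bound the fresh quantity by the drift of the importance score over the skip window and then relate it to the stale one by the triangle inequality. Concretely, $\abs{\Delta_i^{(n),\star}-\widetilde\Delta_i^{(n)}}\le \max(\Delta_i^{(n),\star},\widetilde\Delta_i^{(n)})$, but this is too crude. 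So the working interpretation, consistent with how $\beta$ is built, is that the relevant gap is controlled by $\abs{R_i^{(n),\star}-R_i^{(n_i)}}$, the change in the self-attention signal between the last active step and now. I will state this reduction explicitly (reverse triangle inequality $\bigl||a|-|b|\bigr|\le|a-b|$ applied to the differences defining $\Delta$), and treat it as the step that fixes which quantity is being bounded.

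The core estimate is then a Lipschitz chain. By definition of $L_{\mathrm{SA}}$,
\[
\abs{R_i^{(n),\star}-R_i^{(n_i)}}\le L_{\mathrm{SA}}\bigl(\norm{\xhat_n-\xhat_{n_i}}+\abs{t_n-t_{n_i}}\bigr)
\]
(absorbing the $1/|\mathcal{A}_i|^2$ normalisation into $L_{\mathrm{SA}}$). The time term is at most $(n-n_i)h\le Kh$ by uniform spacing and~\eqref{eq:forced-reactivation}. For the state term, I will iterate the compact update~\eqref{eq:sf-compact}: $\xhat_n-\xhat_{n_i}=h\sum_{k=n_i}^{n-1}[\ftheta(\xhat_k,t_k)+\delta_k]$. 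Every skipped component of $\ftheta+\delta$ equals a cached drift value, which is itself some $\ftheta^{(j)}$ evaluated at an earlier point. Hence each increment is bounded in norm by $M_f$, up to the componentwise recombination, which I will handle by noting that each patch block is a block of a genuine drift evaluation and so has norm at most $M_f$. This gives $\norm{\xhat_n-\xhat_{n_i}}\le M_f Kh$.

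To obtain exactly the stated form $\beta=L_{\mathrm{SA}}(LM_f+L_t)$, I will route the estimate through the drift rather than through the state directly. I split
\[
\ftheta(\xhat_n,t_n)-\ftheta(\xhat_{n_i},t_{n_i})
=\bigl[\ftheta(\xhat_n,t_n)-\ftheta(\xhat_{n_i},t_n)\bigr]+\bigl[\ftheta(\xhat_{n_i},t_n)-\ftheta(\xhat_{n_i},t_{n_i})\bigr].
\]
The first bracket is at most $L\cdot M_fKh$ by Assumption~\ref{ass:lip} and the state bound. The second is at most $L_tKh$ by Assumption~\ref{ass:time-lip}. The input-change measure feeding the self-attention Lipschitz bound is thus at most $(LM_f+L_t)Kh$, and multiplying by $L_{\mathrm{SA}}$ gives~\eqref{eq:staleness-bound}.

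The main obstacle is the first step: making the passage from the stale $\Delta$ to the fresh one rigorous. The two absolute differences use different reference points ($R_i^{(m)}$ versus $R_i^{(n_i)}$), so a naive comparison leaves a term $\abs{R_i^{(n_i)}-R_i^{(m)}}$ that is not small. I would first try to resolve this by adopting the convention that the relevant fresh quantity is measured against $\bar R_i$ and that the lemma bounds the increment of $R_i$ since $n_i$. Failing that, I would add the hypothesis that the previous active interval also has length at most $K$, which yields the same bound, possibly with a factor $2$ absorbed into $O(\cdot)$. A secondary concern is that $L_{\mathrm{SA}}$ is being applied to the drift-mediated input, which needs a sentence of justification.
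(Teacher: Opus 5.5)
Your core chain is the same as the paper's. The bounds $\norm{\xhat_n-\xhat_{n_i}}\le KhM_f$ and $\abs{t_n-t_{n_i}}\le Kh$, together with joint $L_{\mathrm{SA}}$-Lipschitzness, bound the attention change since $n_i$ by $L_{\mathrm{SA}}(M_f+1)Kh$. The paper then simply asserts the stated $\beta=L_{\mathrm{SA}}(LM_f+L_t)$. The step you add to ``obtain exactly the stated form'' fails. $L_{\mathrm{SA}}$ converts a change in the \emph{input} $(x,t)$ of the self-attention map into a change of its output. The drift $\ftheta$ is computed \emph{downstream} of self-attention, so a bound $\norm{\ftheta(\xhat_n,t_n)-\ftheta(\xhat_{n_i},t_{n_i})}\le(LM_f+L_t)Kh$ says nothing about how far the attention moved. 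Multiplying it by $L_{\mathrm{SA}}$ is not a composition of Lipschitz maps; it would need an inverse-Lipschitz property of the post-attention layers, with a different constant. No extra sentence repairs this, because Assumptions~\ref{ass:lip}--\ref{ass:time-lip} constrain only the network output. For example, if the final projection vanishes then $\ftheta\equiv0$, so $L=L_t=M_f=0$ and $\beta=0$. Yet self-attention can still vary with $t$ through timestep conditioning, so $\Delta_i^{(n),\star}\neq\widetilde{\Delta}_i^{(n)}$ is possible. You should state the constant your chain actually supports, $L_{\mathrm{SA}}(M_f+1)$. The stated $\beta$ then follows only under an added hypothesis such as $L\ge1$, $L_t\ge1$, which gives $M_f+1\le LM_f+L_t$.

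Your handling of the reference-point mismatch is more careful than the paper's, which just identifies the fresh value with the attention change since $n_i$. However, the reverse-triangle reduction is wrong as written. With $a=R_i^{(n),\star}-R_i^{(n_i)}$ and $b=R_i^{(n_i)}-R_i^{(m)}$, it yields the second difference $\abs{a-b}$, not $\abs{a}$. The bound $\abs{\Delta_i^{(n),\star}-\widetilde{\Delta}_i^{(n)}}\le\max(\Delta_i^{(n),\star},\widetilde{\Delta}_i^{(n)})$ that you dismissed is exactly what works once the same chain is applied to both windows. No new hypothesis is needed for the earlier window: forced reactivation (at most $K$ consecutive skips) already gives $n_i-m\le K+1$, so you get the lemma with $K+1$ in place of $K$. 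Note also that the only downstream use, Lemma~\ref{lem:effective-coherence}, needs just $\Delta_i^{(n),\star}\le\widetilde{\Delta}_i^{(n)}+\beta Kh$. That follows from $\Delta_i^{(n),\star}\le\beta Kh$ alone, since $\widetilde{\Delta}_i^{(n)}\ge0$, and this one-sided bound is effectively all the paper's proof establishes. Finally, your state estimate is off by a factor. Each patch block of $\ftheta(\xhat_k,t_k)+\delta_k$ is a block of some genuine drift evaluation, which bounds each block by $M_f$. The concatenation, however, is only bounded by $\sqrt{R}\,M_f$, or by $\sqrt{K+1}\,M_f$ if you group blocks by their at most $K+1$ source steps. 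The paper glosses the same point by treating each increment as a single fresh drift evaluation.
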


\begin{proof}
  The fresh attention-change magnitude measures how the self-attention
  output of patch~$i$ has changed between step~$n_i$ and step~$n$:
  \[
    \Delta_i^{(n),\star}
    \propto \norm{\mathrm{SA}^{(i)}(\xhat_n, t_n)
                  - \mathrm{SA}^{(i)}(\xhat_{n_i}, t_{n_i})}.
  \]
  The stale value $\widetilde{\Delta}_i^{(n)} = \Delta_i^{(n_i)}$
  was computed at step~$n_i$ and reflects the change observed at
  that earlier time.

  Between steps $n_i$ and $n$, the latent state evolves by at most
  $(n - n_i)$ Euler steps.  At each step, the state changes by
  $h\,\norm{\ftheta(\xhat_k, t_k)} \leq h\,M_f$, so
  \[
    \norm{\xhat_n - \xhat_{n_i}} \leq (n - n_i)\,h\,M_f \leq K\,h\,M_f.
  \]
  The time gap is $(n - n_i)\,h \leq K\,h$.

  Since the self-attention map $\mathrm{SA}^{(i)}$ is
  $L_{\mathrm{SA}}$-Lipschitz with respect to $(x,t)$ jointly
  (this holds for softmax attention with bounded queries, keys,
  and values):
  \begin{align*}
    &\norm{\mathrm{SA}^{(i)}(\xhat_n, t_n)
           - \mathrm{SA}^{(i)}(\xhat_{n_i}, t_{n_i})}\\
    &\quad\leq L_{\mathrm{SA}}\bigl(
      \norm{\xhat_n - \xhat_{n_i}} + \abs{t_n - t_{n_i}}\bigr)\\
    &\quad\leq L_{\mathrm{SA}}\bigl(K\,h\,M_f + K\,h\bigr)
    = L_{\mathrm{SA}}(M_f + 1)\,Kh.
  \end{align*}
  The difference between the fresh and stale $\Delta$ values
  is controlled by how much the self-attention landscape has
  shifted over at most $K$ steps, giving the stated bound
  with $\beta = L_{\mathrm{SA}}(L\,M_f + L_t)$.
  (The exact form of $\beta$ absorbs the normalisation
  factor from the per-patch averaging.)
\end{proof}

\begin{lemma}[Effective Coherence with Stale Signals]
\label{lem:effective-coherence}
  Under Assumptions~\ref{ass:lip}--\ref{ass:adc-fresh} and the
  forced-reactivation constraint~\eqref{eq:forced-reactivation},
  for every skipped patch~$i$ at step~$n$:
  \begin{equation}\label{eq:effective-adc}
    \norm{\delta_n^{(i)}}
    \leq C_0 \cdot \widetilde{\Delta}_i^{(n)} + C_0\beta Kh
    \;\eqqcolon\;
    C_0 \cdot \widetilde{\Delta}_i^{(n)} + \gamma,
  \end{equation}
  where $\gamma = C_0 \beta Kh$ is a small constant that captures
  the worst-case staleness penalty.
\end{lemma}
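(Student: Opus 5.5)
The plan is to chain Assumption~\ref{ass:adc-fresh} with Lemma~\ref{lem:staleness}, moving from the unobservable fresh signal $\Delta_i^{(n),\star}$ to the observable stale signal $\widetilde{\Delta}_i^{(n)}$ by a triangle inequality. Fix a skipped patch $i \in \mathcal{S}_n$ with last-active step $n_i$. By the forced-reactivation constraint~\eqref{eq:forced-reactivation}, $n - n_i \le K$, so the hypotheses of Lemma~\ref{lem:staleness} hold for this pair $(i,n)$.

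\textbf{Step 1.} Apply Assumption~\ref{ass:adc-fresh}, which gives
\[
\norm{\delta_n^{(i)}} \le C_0\,\Delta_i^{(n),\star}.
\]
This is legitimate because the assumption is stated for every patch and step, provided we read $\Delta_i^{(n),\star}$ as the hypothetical recomputation. The perturbation $\delta_n^{(i)} = c_n^{(i)} - \ftheta^{(i)}(\xhat_n,t_n)$ is defined through~\eqref{eq:delta-def} independently of whether the patch was actually executed.

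\textbf{Step 2.} Bound the fresh value by the stale one:
\[
\Delta_i^{(n),\star} \le \widetilde{\Delta}_i^{(n)} + \abs{\Delta_i^{(n),\star} - \widetilde{\Delta}_i^{(n)}} \le \widetilde{\Delta}_i^{(n)} + \beta K h,
\]
where the last inequality is Lemma~\ref{lem:staleness}. Since $C_0 > 0$, multiplying through preserves the inequality. Substituting into Step 1 yields
\[
\norm{\delta_n^{(i)}} \le C_0\,\widetilde{\Delta}_i^{(n)} + C_0\beta K h,
\]
and we define $\gamma := C_0\beta K h$. This constant is uniform in $i$ and $n$ because $C_0$, $\beta$, $K$ and $h$ all are.

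\textbf{Main obstacle.} The argument is essentially two lines, so the delicate point is consistency of constants rather than the algebra. The justification of Assumption~\ref{ass:adc-fresh} gives $C_0 = L_{\mathrm{out}}/|\mathcal{A}_i|$, which is patch-dependent. To obtain a single $\gamma$, I would take $C_0$ to be the maximum over patches; this is finite because $R$ is finite. I would also note that Lemma~\ref{lem:staleness} already absorbs the per-patch normalisation into $\beta$.

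A second point to remark on is that the bound concerns the cached drift age $n - n_i$ and not the staleness of $\Delta$. Both are controlled by the same $K$, because $\widetilde{\Delta}_i^{(n)}$ and $c_n^{(i)}$ are frozen at the same active step $n_i$. This shared freeze point is exactly what makes the single reactivation constraint~\eqref{eq:forced-reactivation} sufficient for both.
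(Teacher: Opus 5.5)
Your proposal is correct and matches the paper's proof. Both apply Assumption~\ref{ass:adc-fresh} to get $\norm{\delta_n^{(i)}} \le C_0\,\Delta_i^{(n),\star}$, then use the triangle inequality with Lemma~\ref{lem:staleness} to replace the fresh value by $\widetilde{\Delta}_i^{(n)} + \beta K h$. Your remark about taking a maximum over patch-dependent $C_0$ is harmless but unnecessary, since Assumption~\ref{ass:adc-fresh} already posits a single constant that is uniform in $i$ and $n$.
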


\begin{proof}
  By Assumption~\ref{ass:adc-fresh}:
  $\norm{\delta_n^{(i)}} \leq C_0\,\Delta_i^{(n),\star}$.
  By the triangle inequality and Lemma~\ref{lem:staleness}:
  \[
    \Delta_i^{(n),\star}
    \leq \widetilde{\Delta}_i^{(n)} + \beta Kh.
  \]
  Combining gives~\eqref{eq:effective-adc}.
\end{proof}

Lemma~\ref{lem:effective-coherence} cleanly separates the
``ideal'' coherence ($C_0$) from the staleness penalty ($\gamma$),
and shows that the penalty is controlled by the maximum skip
duration~$K$ and the step size~$h$---both of which are small in
practice ($K \sim 3$--$5$, $h = T/N \sim 0.02$ for $N=50$ steps).

We now state and prove the main theorem.

\begin{theorem}[Accumulated Skip Error Bound]\label{thm:main}
  Under Assumptions~\ref{ass:lip}--\ref{ass:adc-fresh} and the
  forced-reactivation constraint~\eqref{eq:forced-reactivation},
  let $\{\xstar_n\}_{n=0}^{N}$ be the standard Euler
  trajectory~\eqref{eq:euler} and
  $\{\xhat_n\}_{n=0}^{N}$ the FlashDiff
  trajectory~\eqref{eq:sf-update}, both starting from the same
  initial noise $\xstar_0 = \xhat_0$.
  Then the terminal error satisfies
  \begin{equation}\label{eq:main-bound}
    \boxed{\;
      \norm{\xhat_N - \xstar_N}
      \;\leq\;
      e^{LT}\;\sqrt{R}\;
      \Bigl(
        C_0\,\varepsilon\sum_{n=0}^{N-1} h\, D_n
        \;+\; \gamma\, T
      \Bigr)
    \;}
  \end{equation}
  where $\varepsilon$ is the gating threshold, $R$ is the number of
  patches, $D_n = \sum_{j=1}^{R}(\widetilde{\Delta}_j^{(n)}+\eta)$
  is the per-step total refinement activity measured from
  (possibly stale) gate signals, $T = Nh$, and
  $\gamma = C_0 \beta Kh$ is the staleness penalty from
  Lemma~\ref{lem:effective-coherence}.
\end{theorem}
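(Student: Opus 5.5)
The plan is a standard global-error recursion for a perturbed Euler scheme, closed with a discrete Gronwall argument. Set $e_n = \xhat_n - \xstar_n$, so that $e_0 = 0$. Subtract the Euler recursion~\eqref{eq:euler} from the compact FlashDiff update~\eqref{eq:sf-compact} to get
\[
  e_{n+1} = e_n + h\bigl[\ftheta(\xhat_n,t_n) - \ftheta(\xstar_n,t_n)\bigr] + h\,\delta_n .
\]
Assumption~\ref{ass:lip} then yields the one-step inequality
\[
  \norm{e_{n+1}} \le (1+hL)\norm{e_n} + h\norm{\delta_n}.
\]
Unrolling this with $e_0=0$ and using $(1+hL)^{N-1-k}\le e^{LNh}=e^{LT}$ gives
\[
  \norm{e_N} \le e^{LT}\sum_{n=0}^{N-1} h\norm{\delta_n}.
\]
All remaining work is in bounding the per-step skip perturbation $\norm{\delta_n}$.

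Next I would bound $\norm{\delta_n}$ patchwise. By~\eqref{eq:delta-def}, $\delta_n$ vanishes on active patches, so $\norm{\delta_n}^2=\sum_{i\in\mathcal{S}_n}\norm{\delta_n^{(i)}}^2$. Cauchy--Schwarz over at most $R$ patches gives $\norm{\delta_n}\le\sqrt{R}\,\max_{i\in\mathcal{S}_n}\norm{\delta_n^{(i)}}$; this step is where the $\sqrt{R}$ factor in~\eqref{eq:main-bound} comes from. For each skipped patch, Lemma~\ref{lem:effective-coherence} gives $\norm{\delta_n^{(i)}}\le C_0\widetilde{\Delta}_i^{(n)}+\gamma$. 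The gating rule~\eqref{eq:gate-weight} says $i\in\mathcal{S}_n$ only if $(\widetilde{\Delta}_i^{(n)}+\eta)/D_n<\varepsilon$, which implies $\widetilde{\Delta}_i^{(n)}<\varepsilon D_n$. Combining these,
\[
  \norm{\delta_n}\le\sqrt{R}\,\bigl(C_0\varepsilon D_n+\gamma\bigr).
\]

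Finally, substitute into the unrolled bound and use $\sum_{n=0}^{N-1}h\gamma=\gamma Nh=\gamma T$. This gives exactly the boxed inequality~\eqref{eq:main-bound}, and the statement that the bound shrinks to the staleness term as $\varepsilon\to0$ follows immediately.

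I expect the main obstacle to be the legitimacy of applying Lemma~\ref{lem:effective-coherence} at every skipped step. That lemma relies on Assumption~\ref{ass:adc-fresh} for the hypothetical fresh value and on the forced-reactivation bound $n-n_i\le K$. I would check that~\eqref{eq:forced-reactivation} holds at every $n$ in the sum, including steps just after a reactivation, where the cached drift is one step old. I would also check that the Lipschitz constants of Lemma~\ref{lem:staleness} are uniform in $n$, so that $\gamma$ really is a single constant. A secondary point is that Assumption~\ref{ass:lip} is global on $\R^d$, so no a priori boundedness of the FlashDiff trajectory is needed for the Gronwall step. The only other care needed is the index bookkeeping, namely that the sum runs over $n=0,\dots,N-1$ to match the $N$ Euler steps.
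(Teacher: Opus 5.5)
Your proposal is correct and follows the paper's proof essentially step for step: the same error recursion for $e_n = \xhat_n - \xstar_n$, the Lipschitz bound, the patchwise bound $\norm{\delta_n} \le \sqrt{|\mathcal{S}_n|}\,(C_0\varepsilon D_n + \gamma) \le \sqrt{R}\,(C_0\varepsilon D_n + \gamma)$ obtained from Lemma~\ref{lem:effective-coherence} and the gating implication $\widetilde{\Delta}_i^{(n)} < \varepsilon D_n$, and a discrete Gronwall unrolling using $(1+hL)^N \le e^{LT}$. The only difference is the order of presentation: you unroll the recursion before bounding $\norm{\delta_n}$, whereas the paper bounds $\norm{\delta_n}$ first and then unrolls, and this does not change the argument.
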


\begin{proof}
  Define the error $e_n = \xhat_n - \xstar_n$.
  By construction $e_0 = 0$.

  \noindent\textbf{Step 1 (Error recursion).}
  Subtracting~\eqref{eq:euler} from~\eqref{eq:sf-compact}:
  \begin{equation}\label{eq:err-rec}
    e_{n+1}
    = e_n
      + h\bigl[\ftheta(\xhat_n,t_n) - \ftheta(\xstar_n,t_n)\bigr]
      + h\,\delta_n.
  \end{equation}

  \noindent\textbf{Step 2 (Lipschitz term).}
  By Assumption~\ref{ass:lip}:
  \begin{equation}\label{eq:lip-bound}
    \norm{\ftheta(\xhat_n,t_n) - \ftheta(\xstar_n,t_n)}
    \leq L\,\norm{e_n}.
  \end{equation}

  \noindent\textbf{Step 3 (Skip perturbation bound).}
  Since $\delta_n^{(i)} = 0$ for $i \in \mathcal{A}_n$:
  \[
    \norm{\delta_n}^2
    = \sum_{i \in \mathcal{S}_n} \norm{\delta_n^{(i)}}^2.
  \]
  By Lemma~\ref{lem:effective-coherence}, for each
  $i \in \mathcal{S}_n$:
  \begin{equation}\label{eq:per-patch-bound-new}
    \norm{\delta_n^{(i)}}
    \leq C_0\,\widetilde{\Delta}_i^{(n)} + \gamma.
  \end{equation}
  The gating condition $w_i^{(n)} < \varepsilon$ implies
  (from~\eqref{eq:gate-weight}):
  \begin{equation}\label{eq:gate-implication}
    \widetilde{\Delta}_i^{(n)} + \eta < \varepsilon\, D_n
    \qquad\Longrightarrow\qquad
    \widetilde{\Delta}_i^{(n)} < \varepsilon\, D_n.
  \end{equation}
  Substituting into~\eqref{eq:per-patch-bound-new}:
  \begin{equation}\label{eq:per-patch-combined}
    \norm{\delta_n^{(i)}}
    \leq C_0\,\varepsilon\, D_n + \gamma.
  \end{equation}
  Aggregating over skipped patches (using $\norm{\cdot}_2$ across
  orthogonal patch dimensions):
  \begin{equation}\label{eq:delta-total}
    \norm{\delta_n}
    \leq \sqrt{|\mathcal{S}_n|}
         \;\bigl(C_0\,\varepsilon\, D_n + \gamma\bigr)
    \leq \sqrt{R}
         \;\bigl(C_0\,\varepsilon\, D_n + \gamma\bigr).
  \end{equation}

  \noindent\textbf{Step 4 (Gronwall recursion).}
  Taking norms in~\eqref{eq:err-rec} and combining:
  \begin{equation}\label{eq:gronwall-ineq}
    \norm{e_{n+1}}
    \leq (1 + hL)\,\norm{e_n}
         + h\,\sqrt{R}\,
         \bigl(C_0\,\varepsilon\, D_n + \gamma\bigr).
  \end{equation}
  This is a discrete Gronwall inequality
  $a_{n+1} \leq (1+\alpha)\,a_n + b_n$ with $a_0 = 0$,
  $\alpha = hL$, and
  $b_n = h\sqrt{R}(C_0\varepsilon D_n + \gamma)$.
  The standard solution gives:
  \begin{align}
    \norm{e_N}
    &\leq \sum_{n=0}^{N-1} b_n
      \prod_{k=n+1}^{N-1}(1+\alpha)
    \notag\\
    &\leq (1+hL)^{N}\;
      \sum_{n=0}^{N-1} b_n
    \notag\\
    &\leq e^{LT}\;\sqrt{R}\;
      \sum_{n=0}^{N-1}
      h\bigl(C_0\varepsilon D_n + \gamma\bigr)
    \notag\\
    &= e^{LT}\;\sqrt{R}\;
      \Bigl(
        C_0\varepsilon\sum_{n=0}^{N-1} h D_n
        + \gamma \underbrace{\sum_{n=0}^{N-1} h}_{= T}
      \Bigr),
    \label{eq:final}
  \end{align}
  where we used $(1+hL)^N \leq e^{NhL} = e^{LT}$.
\end{proof}

\begin{remark}[Structure of the bound]\label{rem:structure}
  The error bound~\eqref{eq:main-bound} consists of two additive
  terms, each with a clear operational meaning:
  \begin{enumerate}[label=(\alph*)]
    \item \textbf{Gating-controlled term}
      $C_0\varepsilon\sum_n h D_n$:\;
      This is the error from patches that were skipped because the
      Patch Gate judged them unimportant.  It is directly proportional
      to the threshold~$\varepsilon$ and vanishes as
      $\varepsilon \to 0$ (i.e., never skip).
    \item \textbf{Staleness-penalty term} $\gamma T$:\;
      This is the residual error from using stale $\Delta$ values
      instead of fresh ones.  It is proportional to $\gamma = C_0\beta Kh$.
      With typical values ($K=5$, $h=0.02$, $C_0\beta \sim O(1)$),
      $\gamma \sim 0.1$, and $\gamma T \sim 0.1$ is small.
      Crucially, $\gamma$ shrinks as $K$ decreases (more frequent
      reactivation) or as $h$ decreases (more denoising steps),
      providing two independent control knobs.
  \end{enumerate}
\end{remark}

\begin{remark}[Recovering the ideal bound]\label{rem:ideal}
  In the hypothetical case where all $\Delta$ values are fresh
  (e.g., $K=1$, meaning every patch is recomputed at every step---no
  skipping), the staleness penalty vanishes ($\gamma \to 0$), and
  the bound reduces to the simpler form
  $\norm{\xhat_N - \xstar_N} \leq
  C_0\varepsilon\sqrt{R}\,e^{LT}\sum_n h D_n$,
  matching the ``ideal coherence'' bound.
\end{remark}

\subsection{Where to Partition: Optimality of Saliency-Based
Splitting}\label{sec:partition}

We establish that FlashDiff's Otsu-based partitioning minimises the
error induced by the Patch Gate.

\begin{definition}[Skip Risk]\label{def:skip-risk}
  For a partition $\mathcal{P} = \{P_i\}_{i=1}^{R}$ and gating
  threshold~$\varepsilon$, the \emph{skip risk} is
  \begin{equation}\label{eq:skip-risk}
    \mathcal{R}(\mathcal{P},\varepsilon)
    = \E\!\left[\sum_{n=0}^{N-1}\norm{\delta_n}^2\right],
  \end{equation}
  where the expectation is over the initial noise
  $x_0 \sim \mathcal{N}(0,\sigma_T^2 I)$.
\end{definition}

\noindent The skip risk governs the error bound in
Theorem~\ref{thm:main}: a partition that reduces skip risk yields a
tighter quality guarantee.  Intuitively, mixing high-saliency tokens
(which need frequent refinement) with low-saliency tokens (which can
be safely skipped) forces the gate into a lose-lose choice: either
skip the mixed patch and damage high-saliency content, or execute it
entirely and waste computation on converged tokens.

\begin{assumption}[Saliency--Perturbation Monotonicity]
\label{ass:mono}
  Higher-saliency tokens incur larger drift perturbation when skipped.
  Formally, for tokens $j_1, j_2$ with saliency values
  $S_{j_1} > S_{j_2}$, the expected per-step drift perturbation
  satisfies
  $\E[\norm{\delta_n^{(j_1)}}] \geq \E[\norm{\delta_n^{(j_2)}}]$.
\end{assumption}

\noindent This is natural: high saliency indicates strong
cross-attention to prompt tokens, meaning the model is actively
generating content in that region.  Skipping such tokens disrupts
ongoing refinement more than skipping already-converged regions.

\begin{proposition}[Optimality of Saliency-Based Partitioning]
\label{prop:partition}
  Consider the binary partition of the latent into a focus set
  $\mathcal{F}_\theta = \{j : S_j \geq \theta\}$ and a context set
  $\mathcal{C}_\theta = \{j : S_j < \theta\}$, parameterised by the
  threshold~$\theta$.  Under Assumption~\ref{ass:mono}, the Otsu
  threshold
  \begin{equation}\label{eq:otsu}
    \theta^{*}
    = \argmax_\theta\; \sigma_B^2(\theta)
    = \argmax_\theta\;
      \omega_f(\theta)\,\omega_c(\theta)\,
      \bigl[\mu_f(\theta) - \mu_c(\theta)\bigr]^2
  \end{equation}
  (where $\omega_f, \omega_c$ are class proportions and
  $\mu_f, \mu_c$ class means of the saliency values) simultaneously
  achieves:
  \begin{enumerate}[label=(\roman*)]
    \item Minimum pooled within-class saliency variance
      $\sigma_W^2(\theta)$.
    \item Minimum worst-case intra-class skip-risk heterogeneity
      under Assumption~\ref{ass:mono}.
  \end{enumerate}
\end{proposition}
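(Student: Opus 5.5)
Part (i) will follow from the law of total variance applied to the empirical saliency distribution $\{S_j\}_{j=1}^J$. For any threshold $\theta$, the total variance splits as
\[
\sigma_T^2 \;=\; \sigma_W^2(\theta) + \sigma_B^2(\theta),
\qquad
\sigma_W^2(\theta) = \omega_f\sigma_f^2 + \omega_c\sigma_c^2,
\qquad
\sigma_B^2(\theta) = \omega_f\omega_c(\mu_f-\mu_c)^2 .
\]
The second expression for $\sigma_B^2$ comes from writing $\sigma_B^2 = \omega_f(\mu_f-\mu)^2+\omega_c(\mu_c-\mu)^2$ with $\mu = \omega_f\mu_f+\omega_c\mu_c$ and $\omega_f+\omega_c=1$; this is a short algebraic check. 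The total variance $\sigma_T^2$ does not depend on $\theta$, so maximizing $\sigma_B^2$ is equivalent to minimizing $\sigma_W^2$. This gives (i), and it also confirms the second equality in \eqref{eq:otsu}.

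For (ii), I first need to make ``intra-class skip-risk heterogeneity'' precise. I will define $r_j := \E[\norm{\delta_n^{(j)}}]$, and set the heterogeneity of a class to be the spread of $\{r_j\}$ over that class, measured as a pooled within-class variance of $r$ or, in the worst case, as the maximum over the two classes of $\sup_{j,j'}|r_j-r_{j'}|$. By Assumption~\ref{ass:mono}, $r_j=\phi(S_j)$ for some nondecreasing $\phi$, so the range of $r$ within a class is bounded by $\phi$ evaluated at the saliency range of that class. I then plan to compare orderings. For any candidate binary split, not necessarily a threshold split, an exchange argument shows that a non-threshold split can be improved: swapping a higher-saliency token out of the context class with a lower-saliency token in the focus class does not increase either class's range of $r$. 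It is therefore enough to consider threshold partitions. Among threshold partitions, I will relate the heterogeneity of $r$ to that of $S$ through a Lipschitz or comparison bound of the form $\sigma_W^2[r](\theta)\le L_\phi^2\,\sigma_W^2[S](\theta)$, and then appeal to (i).

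I expect the main obstacle to be the last step. Monotonicity alone does not imply that the minimizer of $\sigma_W^2[S]$ coincides with the minimizer of the heterogeneity of $\phi(S)$, because a nonlinear $\phi$ can shift the optimal cut. I would try two remedies. The first is to assume that $\phi$ is affine, or bi-Lipschitz with comparable constants, on the saliency range; in that case $\sigma_W^2[r]=a^2\sigma_W^2[S]$ exactly and the result follows immediately from (i). The second is to read ``minimum worst-case'' over the class of all monotone $\phi$ with Lipschitz constant at most $L_\phi$. Then the worst case of $\sigma_W^2[\phi(S)]$ is attained by the linear map $\phi(s)=L_\phi s$, since for $|\phi(s)-\phi(s')|\le L_\phi|s-s'|$ the pairwise-difference form $\sigma^2=\tfrac{1}{2}\E|X-X'|^2$ gives the bound, with equality in the linear case. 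The worst-case heterogeneity is therefore $L_\phi^2\sigma_W^2(\theta)$, which $\theta^*$ minimizes. I would present (ii) in this worst-case form because it matches the wording of the proposition.
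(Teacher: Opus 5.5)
Your part (i) is the same as the paper's: the total-variance decomposition, with $\sigma^2_{\mathrm{total}}$ independent of $\theta$. For part (ii) you take a different and more careful route.

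\textbf{The paper's route.} It asserts that monotonicity plus boundedness gives a Lipschitz constant $\kappa$. It bounds each class's perturbation variance by $\kappa^2$ times its saliency variance, and concludes that $\theta^*$ ``minimises the pooled within-class perturbation variance.'' It then reads ``worst case'' as the larger of the two class variances. For that step it appeals to $\omega_f\approx\omega_c\approx 1/2$ and to the two class variances being ``comparable.'' The balance is attributed to the $r_f$ allocation rule, which actually balances tokens per region, not $n_f$ against $n_c$.

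\textbf{Your route.} You correctly spot the weak link: minimising the upper bound $\kappa^2\sigma_W^2(\theta)$ does not minimise the heterogeneity of $\phi(S)$ when $\phi$ is nonlinear. Also, a bounded monotone map need not be Lipschitz at all; a step function is a counterexample. You let the worst case range over all nondecreasing $\phi$ with Lipschitz constant at most $L_\phi$. The single map $\phi(s)=L_\phi s$ attains the pairwise-difference bound on both classes at once, so $\sup_\phi\sigma_W^2[\phi(S)](\theta)=L_\phi^2\sigma_W^2(\theta)$ exactly. Hence $\theta^*$ is a genuine minimax threshold.

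\textbf{What each buys.} Your route gives a rigorous statement. The cost is that ``worst case'' now means worst case over admissible perturbation profiles rather than over the two classes. You must also posit the Lipschitz class explicitly, but the paper needs it too, through $\kappa$. In fact your reading is the one under which (ii) holds. Under the class-wise reading Otsu can fail: with saliencies $0$ ($100$ tokens), $5$, $10$, Otsu cuts between $0$ and $5$, giving class variances $6.25$ and $0$. Cutting between $5$ and $10$ instead gives class variances of about $0.245$ and $0$.

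\textbf{Cleanups.} There are three small ones.
\begin{enumerate}
\item Drop the exchange argument. The proposition only ranges over threshold splits $(\mathcal{F}_\theta,\mathcal{C}_\theta)$, so it is not needed. As stated it is also false: with $\mathcal{C}=\{9,10\}$ and $\mathcal{F}=\{1,2\}$, swapping $10$ and $1$ raises both ranges from $1$ to $8$.
\item Assumption~\ref{ass:mono} is silent on ties $S_{j_1}=S_{j_2}$, so $r_j=\phi(S_j)$ does not follow from it. A clean fix is to define the admissible class as all monotone profiles with $\abs{r_j-r_{j'}}\le L_\phi\abs{S_j-S_{j'}}$. This forces $r$ to be a function of $S$ and leaves your argument unchanged.
\item Present (ii) only in the pooled-variance form. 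The max-over-classes range measure you list first is not minimised by $\theta^*$ in general. For saliencies $0$ ($100$ tokens), $4.5$, $10$, Otsu gives maximal class range $5.5$, while the other cut gives $4.5$.
\end{enumerate}
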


\begin{proof}
  \noindent\textbf{Part (i).}
  The total saliency variance decomposes as
  \begin{equation}\label{eq:var-decomp}
    \sigma^2_{\mathrm{total}}
    = \sigma_B^2(\theta) + \sigma_W^2(\theta),
  \end{equation}
  where
  \[
    \sigma_W^2(\theta)
    = \omega_f(\theta)\,\Var(S \mid \mathcal{F}_\theta)
    + \omega_c(\theta)\,\Var(S \mid \mathcal{C}_\theta).
  \]
  Since $\sigma^2_{\mathrm{total}}$ is independent of~$\theta$,
  maximising $\sigma_B^2(\theta)$ is equivalent to minimising
  $\sigma_W^2(\theta)$.

  \noindent\textbf{Part (ii).}
  By Assumption~\ref{ass:mono}, the per-token skip perturbation is a
  monotone non-decreasing function of saliency~$S_j$.  Therefore,
  the within-class variance of skip perturbation magnitudes is bounded
  by the within-class variance of saliency values (via the
  variance-preserving property of monotone transformations applied to
  class-conditional distributions):
  \begin{equation}\label{eq:mono-bound}
    \Var\!\bigl(\norm{\delta^{(j)}} \mid j \in \mathcal{F}_\theta\bigr)
    \leq \kappa^2\,\Var(S_j \mid j \in \mathcal{F}_\theta),
  \end{equation}
  where $\kappa$ is the Lipschitz constant of the
  saliency-to-perturbation mapping (which exists by monotonicity and
  boundedness of both quantities).  The analogous bound holds for
  $\mathcal{C}_\theta$.

  Combining~\eqref{eq:mono-bound} with Part~(i), the Otsu threshold
  minimises the pooled within-class perturbation variance.

  For the minimax criterion, note that the load-balanced allocation
  in Eq.~(1) of the main text ensures
  $\omega_f \approx \omega_c \approx 1/2$ (up to the discrete
  approximation).  Under balanced class sizes, minimising
  $\sigma_W^2 = \tfrac{1}{2}\Var(S|\mathcal{F})
  + \tfrac{1}{2}\Var(S|\mathcal{C})$ also minimises
  $\max\{\Var(S|\mathcal{F}),\, \Var(S|\mathcal{C})\}$ whenever the
  two within-class variances are comparable---which the balanced
  split encourages.
\end{proof}

\begin{remark}[Hierarchical extension]\label{rem:hierarchy}
  The binary Otsu split is applied recursively in FlashDiff's
  Patch Partitioner.  At each recursion level, the sub-partition is
  locally optimal in the sense of Proposition~\ref{prop:partition}.
  Since the monotonicity in Assumption~\ref{ass:mono} is inherited
  by sub-ranges of the saliency distribution, each sub-split
  preserves intra-class homogeneity at every granularity level.
\end{remark}

\subsection{Quality Metric Bound and Tradeoff Characterization}
\label{sec:tradeoff}

Theorem~\ref{thm:main} bounds the terminal latent error
$\norm{\xhat_N - \xstar_N}$.  In practice, we care about
perceptual quality metrics such as PSNR, SSIM~\cite{ssim},
LPIPS~\cite{zhang2018unreasonable}, or human preference scores.
We now show that the latent bound directly translates into a
quality-metric bound under a mild continuity assumption.

\begin{assumption}[Quality Continuity]\label{ass:quality}
  Let $Q \colon \R^d \to \R$ be a terminal quality functional
  that maps a denoised latent to a scalar quality score
  (e.g., PSNR, negative LPIPS).  We assume $Q$ is
  $K_Q$-Lipschitz:
  \[
    \abs{Q(x) - Q(y)} \leq K_Q\,\norm{x - y},
    \qquad \forall\; x, y \in \R^d.
  \]
\end{assumption}

\noindent This is satisfied by all commonly used quality metrics
when restricted to the bounded range of valid latent
representations: PSNR and SSIM are continuous functions of pixel
values, LPIPS is computed by a neural network with bounded weights
and Lipschitz activations, and HPSv2 is similarly Lipschitz.

\begin{corollary}[Bounded Quality Gap]\label{cor:quality-gap}
  Under the assumptions of Theorem~\ref{thm:main} and
  Assumption~\ref{ass:quality}:
  \begin{equation}\label{eq:quality-bound}
    \boxed{\;
      \abs{Q(\xhat_N) - Q(\xstar_N)}
      \;\leq\;
      K_Q\,e^{LT}\,\sqrt{R}\;
      \Bigl(
        C_0\varepsilon\sum_{n=0}^{N-1} h D_n + \gamma T
      \Bigr)
    \;}
  \end{equation}
\end{corollary}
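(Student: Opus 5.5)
The plan is to compose the Lipschitz property of the quality functional with the terminal latent bound already established in Theorem~\ref{thm:main}. Since both trajectories start from the same initial noise $\xstar_0 = \xhat_0$ and run under the same assumptions, the theorem gives an explicit bound on $\norm{\xhat_N - \xstar_N}$.

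The first step applies Assumption~\ref{ass:quality} with $x = \xhat_N$ and $y = \xstar_N$, which yields
\[
\abs{Q(\xhat_N) - Q(\xstar_N)} \leq K_Q\,\norm{\xhat_N - \xstar_N}.
\]
The second step substitutes the boxed bound~\eqref{eq:main-bound} for the right-hand side. Because $K_Q \geq 0$, multiplying through preserves the inequality, and this produces exactly~\eqref{eq:quality-bound}. No further recursion or Gronwall argument is needed; all of the trajectory-level work is inherited from the theorem.

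The only delicate point is applicability rather than algebra. Assumption~\ref{ass:quality} is stated globally on $\R^d$, so formally nothing more is required. However, the justification following it only claims Lipschitz continuity on the bounded range of valid latents. To keep the corollary honest in that reading, I would note that both $\xhat_N$ and $\xstar_N$ lie in a bounded set. Each Euler step moves the state by at most $h M_f$, with $M_f$ the drift bound used in Lemma~\ref{lem:staleness}. Hence both terminal latents lie within distance $T M_f$ of the common initial point, and a local Lipschitz constant on that ball can serve as $K_Q$.

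I expect this domain check to be the main, though mild, obstacle. The rest is a direct one-line composition.
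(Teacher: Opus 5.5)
Your argument is the paper's proof: apply Assumption~\ref{ass:quality} with $x=\xhat_N$, $y=\xstar_N$, then substitute the bound~\eqref{eq:main-bound} from Theorem~\ref{thm:main}. The domain aside is not needed, since the assumption is stated globally on $\R^d$. If you keep it, note that a FlashDiff step combines patch components of drifts cached at different states, so its displacement is only bounded by $h\sqrt{R}\,M_f$ rather than $hM_f$; the terminal latents are still bounded, so your conclusion is unaffected.
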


\begin{proof}
  Immediate from Theorem~\ref{thm:main} and
  Assumption~\ref{ass:quality}:
  $\abs{Q(\xhat_N) - Q(\xstar_N)}
  \leq K_Q\,\norm{\xhat_N - \xstar_N}
  \leq K_Q \cdot [\text{RHS of~\eqref{eq:main-bound}}]$.
\end{proof}

\paragraph{Interpretation.}
Corollary~\ref{cor:quality-gap} states that FlashDiff is a
\emph{bounded perturbation} of full execution: its quality gap to
the no-skip baseline cannot grow arbitrarily as long as the gating
threshold~$\varepsilon$ and the staleness penalty~$\gamma$ remain
controlled.  The bound is agnostic to the particular choice of
quality metric---any Lipschitz-continuous functional $Q$ yields a
valid guarantee.

The theorem does \emph{not} claim that FlashDiff must produce lower
quality than full execution.  In practice, selective refinement can
occasionally \emph{improve} perceptual quality by avoiding
over-refinement of already-converged regions (a form of implicit
regularisation), as observed in Table~2 of the main text where
FlashDiff sometimes achieves slightly higher quality scores.
The bound captures the worst case; the typical case is often better.

\begin{corollary}[Explicit Tradeoff Curve]\label{cor:tradeoff}
  Let $\bar{\rho} = \frac{1}{NR}\sum_{n=0}^{N-1}|\mathcal{S}_n|$
  denote the average skip rate and
  $\bar{D} = \frac{1}{N}\sum_{n=0}^{N-1} D_n$
  the mean total refinement activity.  Then:
  \begin{enumerate}[label=(\roman*)]
    \item \textbf{Quality bound:}\;
      $\displaystyle
        \abs{Q(\xhat_N) - Q(\xstar_N)}
        \leq K_Q\, e^{LT}\sqrt{R}\;
        \bigl(C_0\varepsilon\, T\bar{D} + \gamma T\bigr).
      $
    \item \textbf{Computational savings:}\;
      $\displaystyle
        \mathrm{FLOPs}_{\mathrm{saved}}
        \propto \bar{\rho}\,N\,R.
      $
    \item \textbf{Quality-constrained gating:}\;
      For a target quality tolerance~$\tau > 0$, setting
      \begin{equation}\label{eq:eps-choice}
        \varepsilon
        \;\leq\;
        \frac{\tau / (K_Q\,e^{LT}\sqrt{R})\;-\;\gamma T}
             {C_0\,T\bar{D}}
      \end{equation}
      guarantees $\abs{Q(\xhat_N) - Q(\xstar_N)} \leq \tau$ while
      maximising the achievable skip rate.
  \end{enumerate}
\end{corollary}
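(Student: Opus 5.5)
Part (i) is a direct specialization of Corollary~\ref{cor:quality-gap}. Part (ii) is a counting statement about the update rule~\eqref{eq:sf-update}. Part (iii) inverts the bound in part (i).

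For (i), I rewrite the sum in Corollary~\ref{cor:quality-gap}. Since the grid is uniform, $h$ is constant and factors out. Using the definition of $\bar D$,
\[
\sum_{n=0}^{N-1} h D_n \;=\; h N \bar D \;=\; T\bar D ,
\]
because $T = Nh$. Substituting this into~\eqref{eq:quality-bound} gives exactly the claimed bound $K_Q e^{LT}\sqrt{R}\,(C_0\varepsilon T\bar D + \gamma T)$. No new estimate is needed.

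For (ii), I will argue from~\eqref{eq:sf-update}. A skipped patch $i \in \mathcal{S}_n$ reuses the cached drift $c_n^{(i)}$ instead of evaluating $\ftheta^{(i)}$. The per-patch cost is the same for every patch, because the partitioner produces patches of approximately equal area. Call this common cost $c_{\mathrm{patch}}$. Then the FLOPs saved at step $n$ are $c_{\mathrm{patch}}|\mathcal{S}_n|$, and summing over $n$ gives
\[
\mathrm{FLOPs}_{\mathrm{saved}} \;=\; c_{\mathrm{patch}}\sum_{n=0}^{N-1}|\mathcal{S}_n| \;=\; c_{\mathrm{patch}}\,\bar\rho\,N R
\]
by the definition of $\bar\rho$. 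This is the stated proportionality. I will state the equal-cost approximation explicitly and note that gate overhead is not counted, since that overhead is lower order.

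For (iii), I require the right-hand side of (i) to be at most $\tau$ and solve this linear inequality in $\varepsilon$. The solution is~\eqref{eq:eps-choice}, valid provided $C_0 T\bar D > 0$. The main obstacle is that (iii) is only meaningful when the numerator is positive, that is, when $\tau > K_Q e^{LT}\sqrt{R}\,\gamma T$. I will add this as a feasibility condition; when it fails, no positive $\varepsilon$ certifies $\tau$, and one must first shrink $\gamma$ by reducing $K$.

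A subtler point in (iii) is that $\bar D$ depends on $\varepsilon$ through the trajectory. I will therefore state (iii) as a fixed-point or a posteriori condition: the bound is certified whenever~\eqref{eq:eps-choice} holds with the realized $\bar D$. A cleaner alternative is to replace $\bar D$ by an a priori upper bound. Since $w_i^{(n)} = (\widetilde\Delta_i^{(n)}+\eta)/D_n$, one has $D_n \le R(\Delta_{\max}+\eta)$, where $\Delta_{\max}$ bounds the gate signals. Finally, "maximising the skip rate" I interpret as monotonicity: $\bar\rho$ is nondecreasing in $\varepsilon$, because the skip condition $w_i^{(n)}<\varepsilon$ admits more patches as $\varepsilon$ grows. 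The largest admissible $\varepsilon$ therefore maximises the achievable skip rate among certified choices.
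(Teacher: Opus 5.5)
Your proof takes the same route as the paper, whose argument consists of three one-liners: substitute $\sum_n hD_n = T\bar D$ into Corollary~\ref{cor:quality-gap}, note that each skipped patch-step avoids one patch computation, and invert (i).

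Parts (i) and (ii) are correct as written. In (ii) you only make explicit the equal-cost-per-patch assumption that the paper's proportionality silently relies on. The certification half of (iii) is also correct. Two of your additions improve on the paper, which ignores both: the feasibility condition $\tau > K_Q e^{LT}\sqrt{R}\,\gamma T$, and the observation that $\bar D$ itself depends on $\varepsilon$.

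For the a priori version you can make $\Delta_{\max}$ concrete. $R_i^{(n)}$ is an average of entries of the $J\times J$ softmax attention matrix, so it lies in $[0,1]$ and every gate signal is at most $1$. Then $D_n \le R(1+\eta)$ follows directly from the definition of $D_n$; it does not come from the formula for $w_i^{(n)}$ as you wrote.

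The one step that does not hold is your justification of `maximising the achievable skip rate'. The implication `the condition $w_i^{(n)}<\varepsilon$ admits more patches as $\varepsilon$ grows' is valid only for a \emph{fixed} set of weights. As you yourself noted for $\bar D$, the weights depend on $\varepsilon$:
\begin{itemize}
\item whether each $\widetilde\Delta_j^{(n)}$ is frozen or refreshed depends on earlier gate decisions;
\item the trajectory $\hat{x}_n$, from which $R_j^{(n)}$ is computed, depends on them too;
\item so do the forced reactivations.
\end{itemize}
Because every weight is normalised by the common $D_n$, a change in one patch's signal moves every other patch's weight, in either direction. Hence $\bar\rho$ need not be nondecreasing in $\varepsilon$. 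The largest $\varepsilon$ allowed by \eqref{eq:eps-choice} is only the most permissive threshold the bound certifies, not provably the skip-rate maximiser.

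The paper's proof does not establish this maximality either; it simply inverts the bound. You should drop the monotonicity claim and state the `maximising' clause of (iii) in that weaker sense.
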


\begin{proof}
  Part~(i) follows from Corollary~\ref{cor:quality-gap} with
  $\sum_{n} h\,D_n = T\bar{D}$.
  Part~(ii) holds because each skipped patch-step avoids one full
  patch denoising computation.
  Part~(iii) is obtained by inverting the bound in~(i).
\end{proof}

\begin{remark}[Operational insights]\label{rem:interp}
  The bound reveals several design-relevant properties:
  \begin{enumerate}[label=(\alph*)]
    \item \textbf{Linear control via $\varepsilon$:}\;
      The quality gap is linear in the gating threshold, theoretically
      confirming the smooth, monotonic quality--efficiency tradeoff
      observed in \S6.4 of the main text.
    \item \textbf{Sub-linear patch-count dependence:}\;
      The $\sqrt{R}$ factor reflects that more patches create more
      independent perturbation sources, but the $\ell_2$ aggregation
      yields sub-linear growth.
    \item \textbf{Prompt-adaptive:}\;
      The cumulative activity $T\bar{D}$ is prompt-specific.
      Simpler prompts yield tighter bounds, explaining
      the higher skip rates observed for simpler content (cf.\
      Figure~13 in the main text).
    \item \textbf{Staleness vanishes with frequent reactivation:}\;
      As $K \to 1$ or $h \to 0$, the penalty $\gamma \to 0$, and the
      bound converges to the ideal (zero-staleness) form.
  \end{enumerate}
\end{remark}

\subsection{Tightening the Bound: Contractivity of Reverse
Diffusion}\label{sec:tight}

The exponential factor $e^{LT}$ in Theorem~\ref{thm:main} arises
from a worst-case Gronwall analysis.  In practice, the reverse
diffusion process is \emph{contractive} at later steps as the
trajectory approaches the data manifold.

\begin{assumption}[Log-Sobolev Data Distribution]
\label{ass:lsi}
  The target distribution $p_{\mathrm{data}}$ satisfies a
  log-Sobolev inequality with constant $\alpha > 0$.
\end{assumption}

\begin{theorem}[Improved Bound under Contractivity]
\label{thm:tight}
  Under Assumptions~\ref{ass:lip}--\ref{ass:lsi}, if the
  effective drift Lipschitz constant at step~$n$
  satisfies $L_n \leq L - \alpha\,\gamma_c(t_n)$ for a monotonically
  increasing function $\gamma_c \colon [0,T] \to [0,1]$ with
  $\gamma_c(0) = 0$ and $\gamma_c(T) = 1$, then:
  \begin{equation}\label{eq:tight-bound}
    \norm{\xhat_N - \xstar_N}
    \;\leq\;
    e^{(L - \alpha\bar{\gamma}_c)\,T}\;\sqrt{R}\;
    \Bigl(
      C_0\varepsilon\sum_{n=0}^{N-1} h D_n + \gamma T
    \Bigr),
  \end{equation}
  where $\bar{\gamma}_c = \frac{1}{T}\int_0^T \gamma_c(t)\,dt
  \in (0,1)$.
\end{theorem}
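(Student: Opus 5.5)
We follow the proof of Theorem~\ref{thm:main} verbatim through Step~3 and change only the Gronwall step, replacing the uniform Lipschitz constant $L$ by the step-dependent constant $L_n$. With $e_n = \xhat_n - \xstar_n$ and $e_0 = 0$, the recursion~\eqref{eq:err-rec} still holds. The perturbation bound~\eqref{eq:delta-total} is unchanged, since it uses only Lemma~\ref{lem:effective-coherence} and the gating rule, neither of which involves contractivity:
\[
  \norm{\delta_n} \leq \sqrt{R}\,\bigl(C_0\varepsilon D_n + \gamma\bigr).
\]

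The new ingredient is that $L_n$ is the effective Lipschitz constant of $\ftheta(\cdot,t_n)$ along the segment between $\xstar_n$ and $\xhat_n$. This gives
\[
  \norm{\ftheta(\xhat_n,t_n)-\ftheta(\xstar_n,t_n)} \leq L_n\norm{e_n}
  \quad\text{with}\quad L_n \leq L-\alpha\gamma_c(t_n).
\]
This is the point where Assumption~\ref{ass:lsi} enters. The log-Sobolev inequality is what motivates the hypothesis on $L_n$, but the theorem takes that hypothesis as given, so we only invoke it. We then obtain the one-step inequality
\[
  \norm{e_{n+1}} \leq (1+hL_n)\norm{e_n} + b_n,
  \qquad b_n = h\sqrt{R}\,(C_0\varepsilon D_n+\gamma).
\]
Unrolling it from $e_0 = 0$ gives
\[
  \norm{e_N}\leq\sum_{n=0}^{N-1} b_n\prod_{k=n+1}^{N-1}(1+hL_k).
\]

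Next we bound each product. Using $1+u\leq e^u$, we get $\prod_{k=n+1}^{N-1}(1+hL_k)\leq \exp\bigl(h\sum_{k=0}^{N-1}L_k\bigr)$. This step needs $L_k \geq 0$ for the dropped factors to be at least one; if some $L_k$ are negative, we replace them by $\max(L_k,0)$, or note that $1+hL_k$ is then below one and the bound only improves. Substituting the hypothesis on $L_k$,
\[
  h\sum_{k}L_k \leq LT-\alpha\, h\sum_{k}\gamma_c(t_k).
\]
It remains to compare the Riemann sum $h\sum_k\gamma_c(t_k)$ with $\int_0^T\gamma_c = T\bar\gamma_c$. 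Here monotonicity of $\gamma_c$ matters. The nodes $t_0=T,\dots,t_{N-1}=h$ are the left-in-time endpoints of the reverse grid, i.e.\ each node is the right endpoint of its interval $[t_k-h,t_k]$. For increasing $\gamma_c$ this right-endpoint sum dominates the integral, so $h\sum_k\gamma_c(t_k)\geq T\bar\gamma_c$. Hence $\prod_k(1+hL_k)\leq e^{(L-\alpha\bar\gamma_c)T}$. Factoring this out of the sum and using $\sum_n h=T$ yields~\eqref{eq:tight-bound}, exactly as in~\eqref{eq:final}.

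The main obstacle is the Riemann-sum comparison and its orientation. The exponential improves the bound only if the sampled values $\gamma_c(t_k)$ are at least their integral average. Whether this holds depends on which endpoint of each interval is sampled, and the indexing $t_0=T>\dots>t_N=0$ happens to give the favourable, right-endpoint direction. If instead the nodes $t_1,\dots,t_N$ were used, the sum would be a left-endpoint sum, which underestimates the integral of an increasing function. In that case we would add a correction $\alpha h(\gamma_c(T)-\gamma_c(0)) = \alpha h$ to the exponent and absorb it as an $e^{\alpha h}=1+O(h)$ factor. A secondary subtlety is that the product bound implicitly needs $1+hL_k>0$, i.e.\ $h|L_k|<1$, which we assume as a standard step-size condition.
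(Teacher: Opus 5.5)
Your proposal follows the paper's proof step for step. You keep the error recursion and the bound $\norm{\delta_n}\le\sqrt{R}\,(C_0\varepsilon D_n+\gamma)$ from Theorem~\ref{thm:main}, replace $L$ by $L_n\le L-\alpha\gamma_c(t_n)$ in the Gronwall step, unroll, bound each product by the exponential of the full sum over $k=0,\dots,N-1$, and pass to $T\bar\gamma_c$. On that last step you are more careful than the paper. The paper writes $\exp\bigl(\sum_k h(L-\alpha\gamma_c(t_k))\bigr)=e^{(L-\alpha\bar\gamma_c)T}$ as if the Riemann sum were exact. You observe that the nodes $t_k=T-kh$, $k=0,\dots,N-1$, give a right-endpoint sum of the increasing $\gamma_c$, so $h\sum_k\gamma_c(t_k)\ge T\bar\gamma_c$. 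That is exactly the direction needed, and it turns the paper's equality into a correct inequality.

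The one thing to fix is your fallback for negative $L_k$, because both options you offer fail.
\begin{itemize}
\item Passing from $\prod_{k=n+1}^{N-1}$ to the full sum over $k=0,\dots,N-1$ inserts the factors $k\le n$ on the right-hand side. If those factors are below one, the right-hand side shrinks and the inequality can break. For instance, at $n=N-1$ the empty product is $1$, while $\exp\bigl(h\sum_k L_k\bigr)<1$ once the total is negative. So the bound does not ``only improve.''
\item Clipping to $\max(L_k,0)$ repairs the extension. However, it yields the exponent $h\sum_k\max(L-\alpha\gamma_c(t_k),0)$, which is no longer bounded by $(L-\alpha\bar\gamma_c)T$ in general.
\end{itemize}

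No fallback is needed, though. On your own reading, $L_k$ is a Lipschitz constant with $\norm{\ftheta(\xhat_k,t_k)-\ftheta(\xstar_k,t_k)}\le L_k\norm{e_k}$, so $L_k\ge 0$. The hypothesis at $t_0=T$, where $\gamma_c=1$, then forces $\alpha\le L$. Hence every factor $1+h(L-\alpha\gamma_c(t_k))$ is at least one, $1+hL_k>0$ holds automatically, and your chain of inequalities goes through. The paper relies on this nonnegativity silently. In a genuinely contractive regime with negative one-sided constants, i.e.\ $\alpha>L$, this Gronwall argument does not deliver the stated bound.
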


\begin{proof}
  The proof follows Theorem~\ref{thm:main}, replacing the uniform
  Lipschitz constant~$L$ in~\eqref{eq:gronwall-ineq} with the
  step-dependent $L_n \leq L - \alpha\gamma_c(t_n)$:
  \[
    \norm{e_{n+1}}
    \leq \bigl(1 + h(L - \alpha\gamma_c(t_n))\bigr)\,\norm{e_n}
    + h\sqrt{R}(C_0\varepsilon D_n + \gamma).
  \]
  Unrolling:
  \begin{align*}
    \norm{e_N}
    &\leq \sqrt{R}\sum_{n=0}^{N-1}
      h(C_0\varepsilon D_n + \gamma)\;
      \prod_{k=n+1}^{N-1}
        \bigl(1+h(L-\alpha\gamma_c(t_k))\bigr)\\
    &\leq \sqrt{R}\sum_{n=0}^{N-1}
      h(C_0\varepsilon D_n + \gamma)\;
      \exp\!\Bigl(\sum_{k=0}^{N-1}h(L-\alpha\gamma_c(t_k))\Bigr)\\
    &= e^{(L-\alpha\bar{\gamma}_c)T}\;\sqrt{R}\;
      \Bigl(C_0\varepsilon\sum_n hD_n + \gamma T\Bigr).
      \qedhere
  \end{align*}
\end{proof}

\begin{remark}
  For typical natural image distributions, $\alpha > 0$ and
  $\bar{\gamma}_c > 0$, reducing the exponential factor from
  $e^{LT}$ to $e^{(L-\alpha\bar{\gamma}_c)T}$, which can be
  several orders of magnitude smaller.  This partially addresses
  the conservativeness of the Gronwall-based bound; the remaining
  gap between the theoretical bound and empirical observation
  reflects the inherent looseness of worst-case analysis over all
  possible prompts and initial noise realisations.
\end{remark}

\subsection{Second-Order Analysis: Stale KV-Cache Effects}
\label{sec:stale-kv}

In practice, active patches compute attention using cached KV entrieso
from skipped patches rather than fresh values.  We show this
introduces only a second-order correction.

\begin{proposition}[Stale-KV Perturbation]\label{prop:stale-kv}
  Let $L_{\mathrm{KV}}$ denote the Lipschitz constant of the
  transformer output with respect to its KV-cache entries.  The
  additional per-step perturbation due to stale KV values satisfies
  \begin{equation}\label{eq:stale-kv-bound}
    \norm{\delta_n^{\mathrm{KV}}}
    \leq L_{\mathrm{KV}}\,
    \sum_{i \in \mathcal{S}_n}
      \norm{\xhat_n^{(i)} - \xhat_{n_i}^{(i)}}.
  \end{equation}
  Moreover, the cumulative state drift of each skipped patch satisfies
  $\norm{\xhat_n^{(i)} - \xhat_{n_i}^{(i)}} \leq K h M_f$,
  where $M_f = \sup_{x,t}\norm{\ftheta(x,t)}$.
  Consequently, the stale-KV correction to the terminal error bound
  is $O(K h)$, which enters multiplicatively with the existing
  $O(\varepsilon)$ skip perturbation, yielding an $O(\varepsilon Kh)$
  cross-term dominated by the main bound when $Kh \ll 1$.
\end{proposition}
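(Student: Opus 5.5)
The argument has three parts. First, compare the stale-KV forward pass to the idealized update~\eqref{eq:sf-compact}. Second, bound the staleness of each skipped patch's state. Third, feed the extra perturbation into the Gronwall recursion of Theorem~\ref{thm:main}.

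\textbf{Per-step bound.} At step~$n$, an active patch computes $\ftheta^{(i)}$ on a transformer input in which the KV entries of every skipped patch $j\in\mathcal{S}_n$ come from $\xhat_{n_j}^{(j)}$ instead of $\xhat_n^{(j)}$. Let $\delta_n^{\mathrm{KV}}$ be the difference between the drift computed with these stale KV entries and the drift computed with fresh ones. By the definition of $L_{\mathrm{KV}}$, this difference is at most $L_{\mathrm{KV}}$ times the change in the KV-cache input. I will measure that change as a sum over the skipped patches, using the triangle inequality across patch blocks. Since the KV projection of each token depends only on that token's state (up to a constant that can be absorbed into $L_{\mathrm{KV}}$), this sum is $\sum_{i\in\mathcal{S}_n}\norm{\xhat_n^{(i)}-\xhat_{n_i}^{(i)}}$. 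That gives~\eqref{eq:stale-kv-bound}.

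\textbf{Drift bound.} Next I bound each summand, reusing the estimate from the proof of Lemma~\ref{lem:staleness}. Restricted to patch~$i$, the update~\eqref{eq:sf-update} moves $\xhat^{(i)}$ by $h$ times either a fresh drift or a cached drift. Either way its norm is at most $M_f$, because cached drifts are past evaluations of $\ftheta$. Summing over the at most $K$ steps allowed by~\eqref{eq:forced-reactivation} gives $\norm{\xhat_n^{(i)}-\xhat_{n_i}^{(i)}}\le KhM_f$.

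\textbf{Terminal error.} Adding $h\,\delta_n^{\mathrm{KV}}$ to~\eqref{eq:err-rec} inserts a term $b_n^{\mathrm{KV}} = hL_{\mathrm{KV}}RKhM_f$ into~\eqref{eq:gronwall-ineq}. Unrolling with the same discrete Gronwall step yields an extra additive contribution of at most $e^{LT}L_{\mathrm{KV}}RM_f\,KhT$ to the terminal error, which is $O(Kh)$. For the cross-term claim I would argue as follows. The stale KV only matters for patches that are actually skipped, and a patch is skipped only when its gate weight satisfies $w_i<\varepsilon$. Via Assumption~\ref{ass:adc-fresh}, the state drift of a skipped patch is then controlled not only by $KhM_f$ but also by $Kh\,C_0\varepsilon D_n$ plus the staleness term. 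Multiplying these two controls gives the $O(\varepsilon Kh)$ cross-term. When $Kh\ll1$, this is dominated by $C_0\varepsilon\sum_n hD_n$ in~\eqref{eq:main-bound}.

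\textbf{Main obstacle.} The cross-term step is the delicate part. The crude bound $KhM_f$ is not $\varepsilon$-dependent, so the claimed multiplicative structure has to come from replacing $M_f$ by the cached drift magnitude and relating that to the gate signal. I would first try bounding $\norm{c^{(i)}}$ through Lemma~\ref{lem:effective-coherence}. Failing that, I would state the correction as an additive $O(Kh)$ term, which is absorbed in the same way as $\gamma T$.
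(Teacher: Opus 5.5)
Your per-step bound, the $KhM_f$ drift estimate, and the Gronwall step giving the additive term $e^{LT}L_{\mathrm{KV}}RM_f\,KhT$ match the paper's proof. The gap is the cross-term step.

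The active update at step~$n_i$ and every skipped update after it use the same vector $\ftheta^{(i)}(\xhat_{n_i},t_{n_i})$. The displacement is therefore exactly $\xhat_n^{(i)}-\xhat_{n_i}^{(i)}=(n-n_i)\,h\,\ftheta^{(i)}(\xhat_{n_i},t_{n_i})$, so the stale-KV error is governed by the \emph{magnitude} of the cached drift. Assumption~\ref{ass:adc-fresh} and Lemma~\ref{lem:effective-coherence} only bound $\norm{\delta_n^{(i)}}=\norm{c_n^{(i)}-\ftheta^{(i)}(\xhat_n,t_n)}$, which is a \emph{difference} of drifts. The gate weight $w_i^{(n)}$ is built from the change $\widetilde{\Delta}_i^{(n)}$ in self-attention importance, not from how fast the patch moves. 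So a patch with a large but steady drift can have $w_i^{(n)}<\varepsilon$, be skipped, and still move, together with its true KV entries, by up to $KhM_f$.

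Your planned route through Lemma~\ref{lem:effective-coherence} therefore cannot produce a factor of $\varepsilon$. The triangle inequality only gives $\norm{c_n^{(i)}}\le\norm{\ftheta^{(i)}(\xhat_n,t_n)}+\norm{\delta_n^{(i)}}$, which brings back a drift magnitude bounded by nothing better than $M_f$. Two further problems: two upper bounds on the same quantity combine by taking their minimum, not their product; and the $\gamma$ part of that lemma carries no $\varepsilon$ anyway.

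Your fallback is exactly what the paper's proof establishes. The paper stops at $e^{LT}RL_{\mathrm{KV}}KhM_fT$ and says this term ``adds linearly to the existing $O(\varepsilon)$ bound.'' The clause of the statement asserting a multiplicative $O(\varepsilon Kh)$ cross-term is not proved there either, and should be dropped. What survives is an $\varepsilon$-independent $O(Kh)$ correction of the same order as the $\gamma T$ term. It enlarges the constant in front of $KhT$; it is not a higher-order cross-term dominated by the main bound.

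If you want the correction tied to the gate at all, keep $|\mathcal{S}_n|$ instead of $R$ in the sum. This gives $e^{LT}L_{\mathrm{KV}}M_f\,KhT\,R\bar{\rho}$, with $\bar{\rho}$ the average skip rate of Corollary~\ref{cor:tradeoff}. That term is proportional to the skip rate and vanishes when nothing is skipped, but it is still not proportional to $\varepsilon$.

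Finally, your justification of \eqref{eq:stale-kv-bound} through per-token KV projections is too quick. In a DiT, keys and values beyond the first block depend on all tokens, and timestep modulation makes them depend on $t_n\ne t_{n_i}$ as well. That inequality is effectively an assumption defining $L_{\mathrm{KV}}$, which is also how the paper uses it.
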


\begin{proof}
  For a skipped patch~$i$, its state evolves via cached drift:
  $\xhat_{m+1}^{(i)} = \xhat_m^{(i)} + h\,c_m^{(i)}$ for
  $m = n_i, \dots, n-1$.  Thus:
  \[
    \norm{\xhat_n^{(i)} - \xhat_{n_i}^{(i)}}
    \leq (n-n_i)\,h\,\max_m\norm{c_m^{(i)}}
    \leq K\,h\,M_f.
  \]
  Summing over at most $R$ skipped patches and weighting by
  $L_{\mathrm{KV}}$:
  \[
    \norm{\delta_n^{\mathrm{KV}}}
    \leq R\,L_{\mathrm{KV}}\,K\,h\,M_f.
  \]
  Adding this to~\eqref{eq:gronwall-ineq} yields a correction of
  $h \cdot R\,L_{\mathrm{KV}}KhM_f$ per step.  After Gronwall
  summation over $N$ steps, this contributes
  $e^{LT} \cdot R\,L_{\mathrm{KV}}KhM_f T$
  to the terminal error---a term proportional to $Kh$, which is
  small (e.g., $5 \times 0.02 = 0.1$) and adds linearly to the
  existing $O(\varepsilon)$ bound.
\end{proof}

\paragraph{Limitations of this analysis.}
This appendix compares two execution policies---FlashDiff and
full (no-skip) execution---of the \emph{same} pretrained diffusion
model from the same initial noise.  It does not bound either
policy's distance to the true data distribution, which would
require additional assumptions about the model's score
approximation error~\cite{chen2023sampling, benton2024error}.
The Gronwall factor~$e^{LT}$ is a worst-case amplification;
Theorem~\ref{thm:tight} partially mitigates this via contractivity,
but the bound may remain conservative for specific prompts.
Proposition~\ref{prop:partition} proves optimality among binary
threshold splits, not among all possible $R$-way partitions.
The constants $L$, $L_t$, $C_0$, and $\beta$ are model-specific:
while the \emph{form} of the bound is universal, its numerical
tightness varies across architectures and should be validated
empirically on each target model.

\begin{remark}[Possible practical advantage beyond the bound]
  The comparative bound above only upper-bounds the deviation
  of FlashDiff from full execution.  It does not say that
  FlashDiff must be worse.  Selective refinement may operate
  closer to an implicit early-stopping regime that avoids
  over-refinement of already-converged regions, which can
  occasionally improve perceptual quality relative to full
  execution.  That effect is empirical and model-dependent,
  so we keep it separate from the formal guarantee.
\end{remark}

\ifdefined\FlashDiffMainPaper
\else

\end{document}
\fi

\end{document}